\documentclass[reqno]{amsart}


\usepackage[utf8]{inputenc}
\usepackage{anyfontsize}
\usepackage{stmaryrd}
\usepackage{newtxtext}
\usepackage{newtxmath}
\usepackage{mathtools,amsthm,amsopn}
\usepackage{thm-restate}
\usepackage{amsmath}
\usepackage{soulpos}
\usepackage{enumerate}
\usepackage{bbm}
\usepackage{spectralsequences} 
\usepackage{tikz-cd}
\usepackage[colorinlistoftodos]{todonotes}
\tikzset{
	symbol/.style={
		draw=none,
		every to/.append style={
			edge node={node [sloped, allow upside down, auto=false]{$#1$}}}
	}
}
\usepackage[colorinlistoftodos]{todonotes}
\usepackage[all]{xy}
\usepackage{amsfonts}
\usepackage[bbgreekl]{mathbbol}
\usepackage{csquotes}
\usepackage[
backend=biber,
style=alphabetic,
sorting=nyt,
giveninits=true,
maxcitenames=4,
maxbibnames=4,
maxnames = 4,
maxalphanames=4
]{biblatex}
\renewbibmacro{in:}{}
\usepackage{color}
\usepackage{footmisc}
\usepackage{hyperref}
\usepackage{cleveref}
\declaretheorem[name=Theorem,numberwithin=section]{thm}

\hypersetup{
	colorlinks=true, 
	linktoc=all,     
	linkcolor=blue,  
}
\addbibresource{ref.bib}

\usepackage{trimclip}
\def\!{\setbox0=\hbox{!}\raisebox{.2\ht0}{\clipbox{0pt .2\ht0 0pt -.1\ht0}{!}}}

\DeclareSymbolFontAlphabet{\mathbbm}{bbold}
\DeclareSymbolFontAlphabet{\mathbb}{AMSb}%
\DeclareFieldFormat[misc]{title}{\mkbibemph{#1}}
\DeclareFieldFormat[article]{title}{\mkbibemph{#1}}
\DeclareFieldFormat[article]{journaltitle}{#1}

\newcommand{\Br}{\mathfrak{Br}}
\newcommand{\hol}{\operatorname{Hol}}
\newcommand{\spec}{\operatorname{Spec}}

\newcommand{\C}{\mathbb{C}}

\newcommand{\uL}{\underline{L}}
\newcommand{\R}{\mathbb{R}}
\newcommand{\Z}{\mathbb{Z}}

\newcommand{\oU}{\operatorname{U}}
\newcommand{\Cx}{\C^\times}

\newcommand{\pt}{\text{pt}}
\newcommand{\pr}{\operatorname{pr}}
\newcommand{\PP}{\mathbb{P}}
\newcommand{\Hom}{\operatorname{Hom}}
\newcommand{\Fuk}{\operatorname{Fuk}}

\newcommand{\im}{\operatorname{Im}}

\newcommand{\lt}{\mathfrak{t}}

\newcommand{\id}{\operatorname{id}}

\newcommand{\fz}{\mathfrak{Z}}

\theoremstyle{plain}
\newtheorem{lem}[thm]{Lemma}
\newtheorem{prop}[thm]{Proposition}
\newtheorem{cor}[thm]{Corollary}

\newtheorem*{thm*}{Theorem}

\theoremstyle{definition}

\newtheorem{df}[thm]{Definition}
\newtheorem{exmp}[thm]{Example}

\theoremstyle{remark}
\newtheorem{rem}[thm]{Remark}

\title{3d Mirror Symmetry is Mirror Symmetry}
\author{Ki Fung Chan and Naichung Conan Leung}
\date{}

\begin{document}
 \begin{abstract}
    3d mirror symmetry is a mysterious duality for certian pairs of hyperk\"ahler
manifolds, or more generally complex symplectic manifolds/stacks. In this paper, we will
describe its relationships with 2d mirror symmetry. This could be regarded
as a 3d analog of the paper 'Mirror symmetry is T-duality' \cite{SYZ} by
Strominger, Yau and Zaslow which described 2d mirror symmetry via 1d dualities.
\end{abstract}
\maketitle

\addtocontents{toc}{\protect\setcounter{tocdepth}{1}}
\section*{Introduction}

From physical considerations, Rozansky and Witten \cite{RW} studied the
3d $\mathcal{N}=4$ supersymmetric $\sigma$-model with target $X$ and its
topological twisted B-model. This 3d TQFT should produce a 2-category of
boundary conditions, as described by Kapustin-Rozansky-Saulina \cite{KRS,KR}.
The A-model is defined only for those $X$ which admit $\operatorname{Sp}(1)$-isometries
rotating the hyperkähler complex structures $I$, $J$, and $K$.

Very loosely speaking, 3d mirror symmetry \cite{IS96} predicts that for certain pairs of
hyperkähler manifolds $X$ and $X^{!}$, possibly stacky, there is a duality
between the 3d A-model on $X$ and the 3d B-model on $X^{!}$, and vice versa,
namely:
\begin{align*}
\text{3d A-model}(X) \quad &\cong \quad \text{3d B-model}(X^{!}), \\
\text{3d B-model}(X) \quad &\cong \quad \text{3d A-model}(X^{!}).
\end{align*}
Typically, $X$ and $X^{!}$ represent the Higgs branch and Coulomb branch of a 3d 
$\mathcal{N}=4$ theory, with their roles got switched in the dual theory.

3d mirror symmetry (also known as symplectic duality) has been extensively studied
from various perspectives (see, e.g., \cite
{ellipticstableenvelopes,BDG,BDGH,BFN,BLPW1,BLPW2,HoloFloer,GHM,IS96,RimanyiBotta,Nakajima}).
\newline

In this article, we aim to understand 3d mirror symmetry as 2d mirror
symmetry for intersections of 3d branes. In the same spirit, SYZ mirror
symmetry \cite{SYZ} understands 2d mirror symmetry as T-duality, which is a
duality in 1d TQFT.

Geometrically a 3d brane on $X$ is a pair $\underline{C}=(C,Y)$, where $C$
is a complex Lagrangian of $X$, and $Y$ is a K\"{a}hler manifold with a
holomorphic map $\pi:Y\rightarrow C$. Sometimes we write $\underline{C}:Y\overset{%
\pi }{\rightarrow }C\subset X$ and $Y_{c}=\pi ^{-1}\left( c\right) $. If $%
\pi $ is a fiber bundle, then (i) $\left\{ \Fuk\left( Y_{c}\right) \right\}
_{c\in C}$ forms a local system of categories over $C$ and (ii) $\left\{
D^{b}\left( Y_{c}\right) \right\} _{c\in C}$ forms a holomorphic family of
categories over $C$. A 3d brane can be viewed as a boundary condition for
both the A-model and B-model of $X$, and we will refer to it as a 3d A-brane
or 3d B-brane accordingly. In this paper, we also consider more general
branes where $Y$ is an Landau-Ginzburg model (LG model for short).

Given 3d branes $\underline{C}_{1}=(C_{1},Y_{1})$ and $\underline{C}%
_{2}=(C_{2},Y_{2})$ on $X$, if $C_{1}$ and $C_{2}$ intersect transversely,
say at a single point $p\in X$, then their intersection $\underline{C}%
_{1}\cap \underline{C}_{2}$ is the K\"ahler manifold $Y_{1,p}\times Y_{2,p}$,
which determines a 2d TQFT in either A-model or B-model. We denote $%
_{A}\Hom\left( \underline{C}_{1},\underline{C}_{2}\right) =\Fuk\left(
Y_{1,p}\times Y_{2,p}\right) $ and $_{B}\Hom\left( \underline{C}_{1},%
\underline{C}_{2}\right) =D^{b}\left( Y_{1,p}\times Y_{2,p}\right) $. In
general, we expect that, in either A-model or B-model, 3d branes on $X$
generate a 2-category with hom categories are given by intersection theory
of 3d branes (see \cite{GHM} for the proof of a version of  categorical 3d mirror symmetry).

When $X$ and $X^{!}$ are 3d mirror to each other, then there should be
correspondences for certain 3d branes $\underline{C}$'s in $X$ and $%
\underline{C}^{!}$'s in $X^{!}$, so that the 2d TQFTs $_{A}\Hom_X\left( 
\underline{C}_{1},\underline{C}_{2}\right) $ and $_{B}\Hom_{X^!}\left( \underline{C}%
_{1}^{!},\underline{C}_{2}^{!}\right) $ should be 2d mirror to each other.
The possible choices of $C$'s and $C^{!}$'s are dictated by a choice of a
K\"ahler parameter and an equivariant parameter on $X$ and also on $X^{!}$,
which got interchange under 3d mirror symmetry. In particular, the 3d mirror
symmetry for $X=X^{!}=\pt$ is simply mirror symmetry.

To turn the tables around, functoriality of mirror symmetry would greatly
constraint the configuration of these complex Lagrangians in $X^{!}$, given
the configuration of their corresponding complex Lagrangians in $X$. For
example this would allow us to construct the 3d mirror to hypertoric
varieties via a gluing method in \Cref{Gluing}. We view this as a 3d analog of the SYZ
construction of mirror manifolds. 

The simplest examples of 3d mirror pairs are Gale dual hypertoric varieties of
the form $X=T^{\ast }\mathbb{C}^{n}\sslash T_{\C}$, and $X^{!}=T^{\ast }\mathbb{C}%
^{n}\sslash \check{T}_{\C}^{\prime }$, where 
\begin{align*}
1\rightarrow T_{\C}\rightarrow & (\mathbb{C}^{\times })^{n}\rightarrow
T_{\C}^{\prime }\rightarrow 1 \\
1\rightarrow \check{T}_{\C}^{\prime }\rightarrow & (\mathbb{C}^{\times
})^{n}\rightarrow \check{T}_{\C}\rightarrow 1
\end{align*}%
are dual exact sequences of complex tori. The correspondences between 3d
branes in $X$ and $X^{!}$ provide insights into 2d mirror symmetry. Some
examples are provided below (we allow $T_{\C}\rightarrow (\mathbb{C}^{\times
})^{n}$ to be non-injective, see \Cref{Functoriality1}).

\begin{center}
\begin{tabular}{|c|c|c|c|c|}
\hline
\textbf{$X$} & \text{$\mathfrak{Br}_{A}(X)$} & \text{$X^{!}$} & \text{$\mathfrak{Br}_{B}(X^{!})$} & \text{2d Interpretations} \\ 
\hline
$[\text{pt}\sslash \mathbb{C}^{\times}]$ & 
$\begin{array}{c}
S^1 \curvearrowright Y \\ 
\text{Equivar. structure}
\end{array}$ & 
$\ T^{\ast}\mathbb{C}^{\times}$ & 
$\begin{array}{c}
Y^{\vee} \rightarrow \mathbb{C}^{\times} \\ 
\text{Fibration over } \mathbb{C}^{\times}
\end{array}$ & 
$\begin{array}{c}
\text{Teleman Conjecture} \\ 
\text{\cite{tel2014}}
\end{array}$ \\ 
\hline

$[T^{\ast}\mathbb{C}\sslash \mathbb{C}^{\times}]$ & 
$\begin{array}{c}
(Y, D) \\ 
\text{log CY}
\end{array}$ & 
$T^{\ast}\mathbb{C}$ & 
$\begin{array}{c}
Y^\vee\to \C \\ 
\text{LG Model}
\end{array}$ & 
$\begin{array}{c}
\text{MS for log CY} \\ 
\text{\cite{AurouxComp}}
\end{array}$ \\ 
\hline

$[T^{\ast}\mathbb{C}_{1,-1}^{2}\sslash \mathbb{C}^{\times}]$ & 
$\begin{array}{c}
Y \rightarrow \mathbb{C} \\ 
\text{Tyurin Degen.}
\end{array}$ & 
$T^{\ast}\mathbb{P}^{1}$ & 
$\begin{array}{c}
Y^{\vee} \rightarrow \mathbb{P}^{1} \\ 
\text{Fibration over } \mathbb{P}^{1}
\end{array}$ & 
$\begin{array}{c}
\text{DHT Conjecture} \\ 
\text{\cite{DHT}}
\end{array}$ \\ 
\hline
\end{tabular}
\end{center}

The brackets in the $X$ indicates that the stacky structure of $X$ plays an
important role. More detailed explanations of these examples can be found in %
\Cref{relatedworks}. The general transform for the Gale dual case is
described in \Cref{2.1,2.2}, though
most of the proofs are deferred to \Cref{Aurouxpicture}.

\subsection*{The exchange of K\"ahler and equivariant parameters}
Let $\text{Ham}(X)$ be the group of Hamiltonian automorphisms of $X$, and $\mathcal{K}_X$ be the semigroup of complexified K\"ahler classes of $X$. There is an action (see \Cref{matchingKahEqu})
\begin{equation}\label{Inaction}
    \text{Ham}(X)\times \mathcal{K}_X\curvearrowright \Br(X)
\end{equation}
of $\text{Ham}(X)\times \mathcal{K}_X$ on the set $\Br(X)$ of all 3d branes on $X$.

The action of $\text{Ham}(X)$ deforms the complex structures of 3d branes, while the action of $\mathcal{K}_X$ deforms their symplectic structures. Since complex structures and symplectic structures are swapped under 2d mirror symmetry, it is natural to expect the action of $\text{Ham}(X)$ (resp. $\mathcal{K}_X$) on $\Br(X)$ and that of $\mathcal{K}_{X^!}$ (resp. $\text{Ham}(X^!)$) on $\Br(X^!)$ are related by 3d mirror symmetry.

This is tied to the exchange of K\"ahler and equivariant parameters in 3d mirror symmetry as follows. The space of K\"ahler parameters of $X$ is $H^{2}(X;\mathbb{C})$, while the space of equivariant parameters is the Lie algebra $\mathfrak{t}_{X,\mathbb{C}}$ of a maximal torus of $\text{Ham}(X)$. The action \eqref{Inaction} induces an infinitesimal action of $\mathfrak{t}_{X,\mathbb{C}} \times H^2(X;\mathbb{C})$ on $\Br(X)$.

Under the transforms of 3d branes, we expect the infinitesimal action of $\mathfrak{t}_{X,\mathbb{C}}$ (resp. $H^2(X;\mathbb{C})$) on $\Br(X)$ and that of $\mathcal{K}_{X^!}$ (resp. $\mathfrak{t}_{X^!,\mathbb{C}}$) on $\Br(X^!)$ are interchanged, explaining the isomorphisms:
\begin{equation*}
\begin{array}{cccc}
\Phi:&H^{2}(X;\mathbb{C}) & \simeq  & \mathfrak{t}_{X^{!},\mathbb{C}} \\ 
\Psi:&\mathfrak{t}_{X,\mathbb{C}} & \simeq  & H^{2}(X^{!};\mathbb{C}).
\end{array}
\end{equation*}

In other words, the exchange of K\"ahler and equivariant parameters in 3d mirror symmetry corresponds to the exchange of symplectic and complex structures in 2d mirror symmetry\footnote{A similar conjecture was formulated in \cite{BDGH}}.
\newline

Let $X$ be an exact complex symplectic manifold/stack, and let $X^!$ denote its 3d mirror. We propose the following:
\begin{enumerate}[I)]
    \item Mirror 3d branes of $X$ and $X^!$ are related by an SYZ-type transform.
    \item The exchange of K\"ahler and equivariant parameters between $X$ and $X^!$ is compatible with the exchange of symplectic and complex structures in 2d mirror symmetry via the action \eqref{Inaction}.
    \item If $X$ is obtained by gluing complex symplectic stacks $U_i$'s, then $X^!$ can be obtained by gluing the mirror complex symplectic stacks $U_i^!$'s.
\end{enumerate}

\subsubsection*{Explanations for I): the Abelian case}
If $X$ is a hypertoric variety/stack (the Abelian case), and $C$ is one of the core Lagrangians of $X$, we develop an SYZ-type transform from 3d branes supported on $C$ to 3d branes in $T^*C^!$, where $C^!$ is the Gale dual of $C$ (see \Cref{galedef}). The main theorem is as follows:

\begin{thm}[See \Cref{SYZtransform} for the precise version]
    Let $Y \to C$ be a 3d brane supported on $C$, and let $Y^\vee$ be an SYZ-mirror of $Y$ constructed in \Cref{Aurouxpicture}. Then there exists a 3d brane $[Y^\vee \sslash \check{T}_{X}] \to C^!$, supported on $C^!$.
\end{thm}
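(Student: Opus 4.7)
The plan is to produce the 3d brane $[Y^\vee \sslash \check{T}_X] \to C^!$ by combining the 2d SYZ transform of $Y$ developed in \Cref{Aurouxpicture} with a quotient by the dual torus, and then to identify the resulting base with the Gale dual Lagrangian $C^!$. First I would fix the toric input: since $X = T^*\C^n \sslash T_\C$ is hypertoric and $C$ is one of its core Lagrangians, $C$ is a toric subvariety for the residual torus, and the defining exact sequence $1 \to T_\C \to (\Cx)^n \to T_\C^{\prime} \to 1$ determines the combinatorics. Under Gale duality, $C^!$ sits inside $X^! = T^*\C^n \sslash \check{T}_\C^{\prime}$ as the toric subvariety associated with the dual sequence. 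A 3d brane $Y \to C$ comes, in the Abelian setting, with a compatible action of a torus isogenous to $\check{T}_X$ obtained from the ambient hypertoric data.

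Next I would feed this equivariant K\"ahler datum into the construction of \Cref{Aurouxpicture}. That construction realises the SYZ mirror $Y^\vee$ so that the $\check{T}_X$-moment map coordinates on $Y$ become complex-analytic coordinates on the mirror; equivalently, $\check{T}_X$ acts on $Y^\vee$ by fibrewise translation along the SYZ dual fibres. This is the 2d shadow of the exchange of equivariant and K\"ahler parameters advertised in the introduction, and in the simplest case $X = [\pt\sslash \Cx]$ it is precisely the Teleman statement listed in the table.

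Finally I would take the stacky quotient $[Y^\vee \sslash \check{T}_X]$ and push the map to $C$ across the SYZ transform. Since $Y \to C$ was $\check{T}_X$-equivariant with $C$ a toric variety for (a quotient of) the ambient torus, the induced base of the quotient $[Y^\vee \sslash \check{T}_X]$ is exactly the toric variety obtained by applying the Gale-dual exact sequence to the combinatorics of $C$, which is $C^!$. The K\"ahler structure on $Y^\vee$ descends because the $\check{T}_X$-action is Hamiltonian with moment map inherited from the hypertoric reduction, exhibiting $[Y^\vee \sslash \check{T}_X] \to C^!$ as the required 3d brane supported on $C^!$.

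The hard part will be verifying that the $\check{T}_X$-action on $Y^\vee$ produced by the SYZ transform is Hamiltonian with the expected moment map, and that the base of the quotient matches $C^!$ on the nose rather than some Gale-dual combinatorial cousin. Both should be tractable given the explicit description of $Y^\vee$ in \Cref{Aurouxpicture}, but they require careful bookkeeping of moment-map coordinates against the two dual exact sequences of tori, which is really the combinatorial heart of the SYZ-type transform.
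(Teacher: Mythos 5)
Your plan has the right overall shape --- equip $Y^\vee$ with a dual-torus action, prove it is holomorphic and Hamiltonian, quotient, and map to $C^!$ --- but it misstates the mirror dictionary and, more importantly, omits the ingredient that actually produces the map to $C^!$. On the dictionary: in the paper's construction (\Cref{SYZtransform}) the Hamiltonian $T$-action on $Y$ and its moment map are mirror to the \emph{holomorphic Teleman map} $\pi^\vee(L,E)=\exp(-\mu(L))\hol_T(E)$ valued in $\check T_\C$, not to the torus action one quotients by. The action you need to quotient by is the $\check T^n$-action dual to the holomorphic fibration $\pi:Y\to\PP_\Sigma$ over the toric base; it is defined by tensoring the flat bundle $E$ with pullbacks $\pi^*E_g$ of unitary flat line bundles on the dense torus $T^n_\C$, and its moment map is the mirror-side formula $\mu^\vee(L,E)=-\int_L\log|\pi|\operatorname{Re}(\Omega)$ of \Cref{G-hamaction}, not something ``inherited from the hypertoric reduction.'' Your phrase about the moment-map coordinates on $Y$ becoming complex-analytic coordinates on the mirror describes the Teleman map, i.e.\ the \emph{other} half of the duality.

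The genuine gap is the target map. Here $C^!=[(\C^{\Sigma(1)}\times\check T_\C)/\check T^n_\C]$, and the brane structure is the $\check T^n$-equivariant map $F=(W_1,\dots,W_l,\pi^\vee):Y^\vee\to\C^l\times\check T_\C$. The components $W_j$ are not obtained by ``pushing the map to $C$ across the SYZ transform'': they are defined by counting Maslov-index-two holomorphic discs meeting the divisors $D_j=\pi^{-1}(H_j)$, i.e.\ by splitting the superpotential $W=\sum_{j=0}^l W_j$ according to which toric divisor of $\PP_\Sigma$ a disc hits. The key computation (\Cref{equivariant W}) is that $\pi_*(\partial\beta)$ is homotopic to the loop $\prod_j u_j^{\beta\cdot D_j}$, whence $z_\beta(g\cdot\uL)=\bigl(\prod_j u_j(g)^{\beta\cdot D_j}\bigr)z_\beta(\uL)$; this is exactly the $\check T^n$-equivariance matching the $\check T^n_\C$-action on $\C^{\Sigma(1)}$ in the definition of $C^!$. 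As written, your plan only produces the map $\pi^\vee$ to $\check T_\C$, which suffices when $C=[\pt/T_\C]$ or $[T^n_\C/T_\C]$ but not for a general toric $C$ with boundary divisors; the ``careful bookkeeping'' you defer is in fact this disc-counting construction, which is the heart of the theorem.
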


In this construction, $Y^\vee$ is the moduli space of Lagrangian branes in the complement of an anticanonical divisor in $Y$, and the map $[Y^\vee \sslash T_{X}] \to C$ is induced by (part of) the LG potential. 

In many situations (e.g. $X=T^*[\C^n/T_\C]$, $C=[\C^n/T_\C]$), $T^*C^!$ would be an open substack of $X^!$, thus $[Y^\vee \sslash T_{X}] \to C$ can be viewed as a 3d brane on $X^!$. In general, $T^*C^!$ and $X^!$ are related by a complex Lagrangian correspondence, which transforms $[Y^\vee \sslash T_X] \to C$ into a 3d brane on $X^!$ (see \Cref{stratified}, and the explanations for III) below). For simplicity of notation, we assume in the rest of this introduction that $T^*C^!$ is an open substack of $X^!$.

\subsubsection*{Explanations for I): the non-Abelian case}

Let $G$ be a compact group, and $T$ a maximal torus of $G$. The transform of 3d branes between the Higgs and Coulomb branches of a 3d $\mathcal{N}=4$ $G$-gauge theory (\cite{BDG,BDGH,Nakajima}) is induced by the transform of 3d branes between the Higgs and Coulomb branches of the corresponding $T$-gauge theory.

If $X$ is the Higgs branch of a 3d $\mathcal{N}=4$ $G$-gauge theory, and we denote by $X_T$ the Higgs branch of the corresponding $T$-gauge theory (i.e., the restriction of the gauge group), then there exists a birational morphism $X^! \to (X_T^!)/W_G$, where $W_G$ is the Weyl group of $G$. 

The SYZ-type transform of 3d branes in this case can be described through the process of Abelianization. A 3d brane of $X$ can also be regarded as a 3d brane of $X_T$. We propose that the corresponding mirror 3d brane in $X_T^!$ is $W_G$-invariant, and its quotient in $(X_T^!)/W_G$ can be lifted to $X^!$.

The method of Abelianization requires proving some nontrivial theorems in Floer theory. An example is provided in the following theorem:
\begin{thm}[See \Cref{Floerthm1}]
Let $Y$ be a $G$-Hamiltonian manifold, and let $L \subset Y$ be a compact, connected, $T$-relatively spin, Lagrangian submanifold, with $E$ a unitary flat line bundle on $L$. If $HF^\bullet((L,E),(L,E)) \neq 0$, then $\fz(L,E) \triangleq \exp(-\mu_T(L)) \hol_T(E) \in T_\C^\vee$ lies in the center of the Langlands dual group $\check{G}$.
\end{thm}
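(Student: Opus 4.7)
The plan is to reduce to a rank-one statement and then exploit the non-Abelian Hamiltonian action. The center $Z(\check{G}) \subset T_\C^\vee$ is cut out by the conditions $\alpha^\vee(t) = 1$ as $\alpha$ ranges over the roots of $G$. For each root $\alpha$, let $G_\alpha \subset G$ denote the associated rank-one subgroup and $T_\alpha \subset T$ its maximal torus; by functoriality of the definition of $\fz$ under restriction of the equivariant group, one has $\alpha^\vee(\fz_T(L,E)) = \fz_{T_\alpha}(L,E)$. Hence it suffices to establish the theorem when $G$ has rank one, in which case $Z(\check{G})$ sits in $\C^\times$ as a concrete finite subgroup and the claim becomes that $\fz(L,E)$ lies in this subgroup.

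In the rank-one case the argument combines two inputs. First, under the $T$-relatively spin hypothesis, the equivariant closed--open map $CO_T \colon H^\bullet_T(Y) \to HF^\bullet((L,E),(L,E))$ is well defined, and the image of the equivariant parameter $t \in \lt^*_\C$ acts on $HF^\bullet$ as multiplication by the scalar $\langle t, \mu_T(L)\rangle - \langle t, \log \hol_T(E)\rangle$. Since $HF^\bullet((L,E),(L,E)) \neq 0$, this recovers $\fz(L,E)$ as a scalar invariant of the nonzero Fukaya object $(L,E)$, hence preserved under any Hamiltonian isotopy, not merely the $T$-equivariant ones. Second, a lift $\tilde{s} \in G$ of the non-trivial Weyl element $s$ sends $(L,E)$ to $(\tilde{s} L, \tilde{s}^{*}E)$; $G$-equivariance of the moment map yields $\mu_T(\tilde{s} L) = s \cdot \mu_T(L)$ and $\hol_T(\tilde{s}^{*}E) = s \cdot \hol_T(E)$, so $\fz(\tilde{s} L, \tilde{s}^{*}E) = s \cdot \fz(L,E)$. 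Since $G$ is connected, $\tilde{s} L$ is Hamiltonian isotopic to $L$, and invariance of $\fz$ forces $s \cdot \fz(L,E) = \fz(L,E)$. Reading this back through the reduction gives $W_G$-invariance of $\fz(L,E) \in \check{T}_\C$, hence $\alpha^\vee(\fz(L,E))^{2} = 1$ for every root $\alpha$; the residual sign ambiguity coming from the lift of $s$ to $N_G(T) \subset G$ in the equivariant Fukaya orientations is controlled precisely by the $T$-relatively spin hypothesis, upgrading this to the sharp statement $\alpha^\vee(\fz(L,E)) = 1$ and therefore $\fz(L,E) \in Z(\check{G})$.

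The main obstacle is the first input above: establishing that $\fz(L,E)$ is a Fukaya-theoretic invariant of $(L,E)$, realized concretely as the action of the equivariant parameter through $CO_T$. This requires a careful construction of the $T$-equivariant Fukaya $A_\infty$-algebra of $(L,E)$, including orientability of the relevant equivariant moduli spaces --- which is exactly where the $T$-relatively spin hypothesis is used --- together with compatibility of $CO_T$ with Hamiltonian isotopies. Once this foundational Floer-theoretic input is in place, the symmetry-based reduction is essentially formal, modulo the spin/sign bookkeeping that distinguishes the various global forms of the Langlands dual group.
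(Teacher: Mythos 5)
First, a remark on the comparison itself: this paper does not prove \Cref{Floerthm1}; it is quoted from the companion work \cite{paper2}, so there is no in-paper proof to measure your argument against, and I can only assess it on its own terms. The reduction to the rank-one subgroups $G_\alpha$ is fine, and the Weyl-conjugation step is sound in outline: an element $\tilde s\in N_G(T)$ is joined to the identity inside the connected group $G\subset\operatorname{Ham}(Y)$, so $(L,E)\cong(\tilde s L,\tilde s_* E)$ in $\Fuk(Y)$ while $\fz(\tilde s L,\tilde s_* E)=s\cdot\fz(L,E)$; granting your (substantial, and correctly flagged) foundational input that the equivariant parameter acts through $CO_T$ by the scalar determined by $\fz(L,E)$, naturally in the object and without quantum corrections, this yields $\fz(L,E)\in(\check T_\C)^{W_G}$.

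The genuine gap is that $(\check T_\C)^{W_G}$ is strictly larger than $Z(\check G)$ in general, and nothing you have written bridges the difference. Concretely, for $G=\operatorname{SU}(2)$ one has $(\check T_\C)^{W}=\{\pm1\}$ while $Z(\check G)=Z(\operatorname{PGL}_2)=\{1\}$. Take $Y=S^2$ with the $\operatorname{SU}(2)$-action and $L$ the equator carrying a flat bundle whose holonomy around $L$ is $\pm i$: since the orbit map $T_{\operatorname{SU}(2)}\to L$ has degree $2$ on $\pi_1$, one gets $\fz(L,E)=(\pm i)^2=-1$, which is Weyl-invariant (the Weyl element exchanges $E_i$ and $E_{-i}$, and both give $\fz=-1$), yet $-1\notin Z(\operatorname{PGL}_2)$. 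The theorem survives only because $HF$ of these branes vanishes, and conjugation by $N_G(T)$ is structurally incapable of detecting this: your argument would run verbatim for these branes and conclude nothing false, but it also proves nothing beyond $\fz^2=1$ for the branes with $HF\neq 0$. Your closing claim that the residual sign is ``controlled precisely by the $T$-relatively spin hypothesis'' is an assertion, not an argument, and I do not believe it names the right mechanism. What actually forces $\alpha^\vee(\fz(L,E))=1$ rather than merely $\alpha^\vee(\fz(L,E))^2=1$ is that the coroot loop $\alpha^\vee:\operatorname{U}(1)\to T$ is null-homotopic in $G$ (the coroot lattice is exactly the kernel of $X_*(T)\to\pi_1(G)$), so the associated Seidel-type invertible element is trivialized by a disc in $\operatorname{Ham}(Y)$ and must act by $1$ on any object with nonvanishing $HF$. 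That argument uses the full rank-one subgroups $G_\alpha\cong\operatorname{SU}(2)$ and the discs they provide, not merely the normalizer of the torus; without an input of this kind your proof establishes only Weyl-invariance, which is a strictly weaker statement.
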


We explain the connection between this theorem (as well as its generalizations) and the 3d brane transforms in the non-Abelian case in \Cref{nonabelian}. The proofs of these "Abelianization" theorems can be found in our earlier work \cite{paper2}.

\subsubsection*{Explanations for II)}

As mentioned above, when we vary the K\"ahler parameters (resp. equivariant parameters) of $X$, it deforms the symplectic structures (resp. complex structures) of its 3d branes (and of their intersections). Thus, the exchange of K\"ahler and equivariant parameters in 3d mirror symmetry corresponds to the exchange of symplectic and complex structures in 2d mirror symmetry.

Our SYZ-type transform of 3d brnaes can be used to verify this phenomenon, as described in the following theorem. We refer the readers to \Cref{matchingKahEqu} for more detailed explanations and examples.

\begin{thm}[See \Cref{thmparameters} for the precise version]
Let $\pi:Y \to C$ be a 3d brane in $X$ supported on $C$, and let $F:[Y^\vee \sslash T_X] \to C^!$ denote its mirror 3d brane in $X^!$. Then:
    \begin{enumerate}
        \item Let $\alpha \in H^2(X,\C)$, and assume that the class $[\omega_Y] + \pi^*\alpha$ is symplectic on $Y$. Denote by $Y_\alpha$ the space $Y$ equipped with the symplectic class $[\omega_Y + \pi^*\alpha]$. Then, the 3d brane $\pi:Y_\alpha \to C$ is mirror to the 3d brane $F_\alpha:[Y^\vee \sslash T_X] \to C^!$, where
        \[
        F_\alpha = \exp(-\Phi(\alpha)) F.
        \]
        \item Let $\xi \in \mathfrak{t}_{X,\mathbb{C}}$. Then, the 3d brane $\pi_\xi = \exp(\xi) \cdot \pi: Y \to C$ is mirror to the 3d brane $F: [Y^\vee \sslash T_X]_\xi \to C$, where $[Y^\vee \sslash T_X]_\xi$ is $[Y^\vee \sslash T_X]$ equipped with the symplectic class
        \[
        [\omega_{[Y^\vee \sslash T_X]}] + F^* \Psi(\xi).
        \]
    \end{enumerate}
\end{thm}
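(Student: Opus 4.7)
The plan is to unfold the SYZ-type construction from \Cref{Aurouxpicture} and track how each ingredient depends on the symplectic class of $Y$ and on the equivariant parameter. Recall that $Y^\vee$ parametrizes pairs $(L,\nabla)$ consisting of a Lagrangian brane together with a unitary flat connection in the complement of a chosen anticanonical divisor $D\subset Y$, and that the map $F\colon [Y^\vee\sslash T_X]\to C^!$ is induced, after symplectic reduction, by a Lagrangian-specific piece of the LG potential of $(Y,D)$. Schematically, its value at $(L,\nabla)$ is a weighted sum of disk contributions
\[
F(L,\nabla)=\sum_{\beta}n_{\beta}(L)\exp\!\left(-\int_{\beta}\omega_Y\right)\hol_{\nabla}(\partial\beta),
\]
where $\beta$ ranges over the relevant relative disk classes. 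The explicit dependence on $\omega_Y$ and on the $T_X$-moment map exhibited by this formula is what drives both statements.

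For part (1), I would substitute $\omega_Y+\pi^*\alpha$ for $\omega_Y$ and observe that for every disk class $\beta$ the area factor changes multiplicatively by $\exp(-\int_{\beta}\pi^*\alpha)$. By the definition of $\Phi\colon H^2(X,\C)\simeq\mathfrak{t}_{X^!,\C}$ recalled in \Cref{matchingKahEqu}, this integral factors through a pairing $\langle\Phi(\alpha),\partial\beta\rangle$: the boundary $\partial\beta$ lies in $H_1(L)$, which pairs with $\mathfrak{t}_{X^!,\C}$, and an area evaluated against a pullback $\pi^*\alpha$ depends on $\beta$ only through its boundary. The uniform shift of every summand of $F$ by $\exp(-\Phi(\alpha))$ then translates into $F_\alpha=\exp(-\Phi(\alpha))F$ as maps into $C^!$, with no change to the complex structure of $[Y^\vee\sslash T_X]$ itself. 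This matches the fact that $\alpha$ is a K\"ahler parameter.

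For part (2), the deformation $\pi_\xi=\exp(\xi)\cdot\pi$ is the composition of $\pi$ with the time-one flow of the Hamiltonian generated by $\xi\in\mathfrak{t}_{X,\C}$ lifted to $Y$. Such a flow is an exact Lagrangian isotopy on the fibers, so the moduli of branes (and hence the complex structure of $Y^\vee$) is preserved; on the other hand it shifts the $T_X$-moment map used in the symplectic reduction $[Y^\vee\sslash T_X]$ by $\xi$. Since the K\"ahler class of a symplectic quotient depends linearly on the chosen moment-map level, the new K\"ahler class on $[Y^\vee\sslash T_X]_\xi$ differs from $[\omega_{[Y^\vee\sslash T_X]}]$ by a class that, by the definition of $\Psi\colon\mathfrak{t}_{X,\C}\simeq H^2(X^!,\C)$ from \Cref{matchingKahEqu}, is exactly $F^*\Psi(\xi)$.

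The main obstacle will be justifying these two formal manipulations rigorously at the level of moduli of branes. In part (1) one must check that $\beta\mapsto\int_\beta\pi^*\alpha$ descends to a well-defined homomorphism on the relevant disk classes, compatibly with bubbling and with the choice of anticanonical divisor; in part (2) one must verify that the $T_X$-reduction commutes with the moduli-of-branes construction well enough that a shift of the moment map genuinely induces the advertised shift of the K\"ahler class. By the definitions of $\Phi$ and $\Psi$, both reduce to a compatibility check among the natural pairings between $H^2(X)$, $H_1$ of the fibers of $\pi$, and $\mathfrak{t}_\C$ of the acting torus, identifications that are already set up in \Cref{matchingKahEqu}.
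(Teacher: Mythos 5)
Your overall strategy --- unwind the SYZ construction of \Cref{Aurouxpicture} and track how the disc areas and the moment maps depend on $[\omega_Y]$ and on $\xi$ --- is the same as the paper's, but two of your key steps do not hold up as stated.

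In part (1), the mirror map is $F=(W_1,\dots,W_l,\pi^\vee)$ with values in $\C^{l}\times\check T_\C$, and the class $\alpha$ is an \emph{equivariant} class with a $2$-form part $\alpha_2$ and a moment-map part $\alpha_0$ (so that $\mu_{Y_\alpha}=\mu_Y-\pi^*\alpha_0$). You only treat the disc-area contribution, i.e.\ the $\C^l$-components $W_j$; the $\check T_\C$-component $\pi^\vee(L,E)=e^{-\mu_Y(L)}\hol_T(E)$ rescales by $\exp(\alpha_0(p))$, which is exactly the last coordinate $-\alpha_0(p)$ of $\Phi(\alpha)$ in (\ref{discformula}), and this is absent from your argument. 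Moreover, your justification that the area factor $\exp(-\int_\beta\pi^*\alpha_2)$ is the \emph{same} for every disc contributing to a fixed $W_j$ --- namely that ``$\int_\beta\pi^*\alpha$ depends on $\beta$ only through its boundary'' --- is both false in general and self-defeating: the boundaries $\partial\beta\in H_1(L)$ of the contributing classes genuinely differ (that is why the holonomy weights differ), so if the integral depended on $\partial\beta$ it would not be uniform. The correct mechanism, which the paper uses, is that the constraint $\beta\cdot D_i=\delta_{ij}$ forces the pushforward $\pi_*\beta\in H_2(\PP_\Sigma,T^n\cdot q)$ to equal the single basic disc class $\beta_{j,q}$, whence $\int_\beta\pi^*\alpha_2=\int_{\beta_{j,q}}\alpha_2$ is a constant independent of $\beta$.

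In part (2) your conclusion is right but the mechanism is misdescribed. There is no Hamiltonian flow on $Y$: $\exp(\xi)$ acts on the target $\PP_\Sigma$, and $Y$, its Lagrangians, the divisor $D$, and hence $Y^\vee$ with its complex structure are all literally unchanged. The only thing that changes is the formula $\mu^\vee_\xi(L,E)=-\int_L\log|\pi_\xi|\operatorname{Re}(\Omega)$ for the $\check T^n$-moment map, which shifts by the constant $-\xi$ once one imposes the normalization $\int_L\operatorname{Re}(\Omega)=1$ (a hypothesis of \Cref{thmparameters} that your argument does not invoke). The paper then reads off $[\check\omega-\mu^\vee_\xi]=[\check\omega-\mu^\vee]+F^*\Psi(\xi)$ directly at the level of the equivariant K\"ahler class, rather than passing through a Duistermaat--Heckman-type statement about how the reduced K\"ahler class varies with the moment-map level.
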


\subsubsection*{Explanations for III)}
Consider the example $X=T^*\mathbb{P}^1$ and $X^!=T^*[\C^2_{1,-1}/\Cx]$. We have 
\[
X=T^*C=T^*C_1\bigcup_{T^*C_{12}}T^*C_2,
\]
where $C=\mathbb{P}^1$, $C_1\cong \C\cong C_2$, and $C_{12}\cong \Cx$. On the other hand, $X^!=T^*C^!$ does not contain $T^*C_1^!$ and $T^*C_2^!$ as open substacks. Instead, we have $C^!=C^!_1\times _{C_{12}^!}C^!_2$, and $X^!$ is related to (and can be built from) $T^*C_1^!$, $T^*C_2^!$ and $T^*C_{12}^!$ via complex Lagrangian correspondence. This will be discussed in \Cref{SYZtype}.

On the other hand, let $V$ be a $T$-representation, and consider $X=[T^*V\sslash T_\C]$. There are natural complex Lagrangian correspondences (see \Cref{glueing section}) from $T^*[\pt/T_\C]$ to $X$, each of which is expected to be mirror to an open inclusion $\mathcal{U}_I\cong T^*\check T_\C\to X$. By using the SYZ-type transform and the complex symplectic groupoid structure on $T^*\check T_\C$, we derive a formula for the change of coordinates
\[
\phi_{J,I}:\mathcal{U}_I\to\mathcal{U}_J.
\]
We refer the readers to \Cref{Gluingintro} for some examples. We verify III) by checking the cocycle conditions of $\phi_{J,I}$. The following is the main result of \Cref{Gluing}.
\begin{thm}[See \Cref{gluingthm} and \Cref{thm4II}]
    $\phi_{J,I}$ satisfies the cocycle conditions. Moreover, we can obtain the BFN Coulomb branch described in \cite{BFN} by gluing $T^*\check T_\C$ using $\phi_{J,I}$ as an affine scheme.  
\end{thm}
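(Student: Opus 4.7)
The plan is to treat the two assertions of \Cref{gluingthm}/\Cref{thm4II} separately, first establishing the cocycle identities purely from the groupoid structure used to define the $\phi_{J,I}$, and then matching the resulting affine scheme with the BFN description by comparing generators and relations.

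First I would unpack the definition of $\phi_{J,I}$ concretely. By construction these come from applying the SYZ-type transform of \Cref{SYZtransform} to different complex Lagrangian correspondences $T^*[\pt/T_\C] \dashrightarrow X$ (indexed by $I$) and then using the fact that the two resulting identifications $\mathcal{U}_I \cong T^*\check T_\C \cong \mathcal{U}_J$ differ by multiplication in the complex symplectic groupoid $T^*\check T_\C \rightrightarrows \check{\mathfrak{t}}_\C$. Concretely I expect $\phi_{J,I}$ to act on the fiber coordinates by translation by an element $m_{J,I}(\check t)$ depending polynomially (or rationally, with zeroes along certain hyperplanes) on the base coordinate $\check t \in \check T_\C$, with the hyperplanes being precisely those where $\mathcal{U}_I \cap \mathcal{U}_J$ ceases to be open. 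Once this is written down, the cocycle condition $\phi_{K,J}\circ \phi_{J,I}=\phi_{K,I}$ reduces to the identity $m_{K,J}\cdot m_{J,I}=m_{K,I}$ in the groupoid, which is nothing but the associativity/composability of the underlying Lagrangian correspondences. So the first part is essentially formal once the correct identifications are set up.

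For the second part, I would compute the ring of global functions on the glued scheme $\bigcup_I \mathcal{U}_I$ by taking the equalizer of $\prod_I \oh(\mathcal{U}_I) \rightrightarrows \prod_{I,J}\oh(\mathcal{U}_I\cap\mathcal{U}_J)$. Each $\oh(\mathcal{U}_I)=\oh(T^*\check T_\C)$ is a Laurent polynomial ring over $\oh(\check{\mathfrak{t}}_\C)$, so a global function is a cocompatible family of Laurent expansions. On the BFN side, for a $T$-representation $V$ the Coulomb branch ring is explicitly the subring of $\C[\check{\mathfrak{t}}_\C\times \check T_\C]$ (equivalently $\C[\check T_\C^\vee]\otimes \C[\check T_\C]$) generated by the monopole operators $u_\lambda$ subject to the abelian BFN relation $u_\lambda u_\mu=(\prod_i \Phi_i^{\min(\langle \lambda,w_i\rangle_+,\langle \mu,w_i\rangle_+)-\min(\langle \lambda+\mu,w_i\rangle_+,0)})u_{\lambda+\mu}$ or some variant thereof involving the weights $w_i$ of $V$. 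My plan is to exhibit a global function $u_\lambda$ for every cocharacter $\lambda$ by writing down its Laurent expression in each chart $\mathcal{U}_I$, verify that the collection is invariant under $\phi_{J,I}$ using the explicit formula for $m_{J,I}$ obtained in step one, and check that these generate the equalizer ring with exactly the BFN relations.

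The main obstacle will be this last matching. The transition maps $\phi_{J,I}$ are determined by the SYZ transform and may a priori contain constants or signs that depend on the LG potential data; to land on the BFN relations one needs the $\min(\langle \lambda,w_i\rangle_+,\cdot)$ combinatorics to emerge exactly, i.e.\ the vanishing orders of $m_{J,I}$ along the walls must line up with the weights $w_i$ of $V$. I would handle this by choosing the indexing set $\{I\}$ to correspond to the chambers of the hyperplane arrangement of $V$, so that on $\mathcal{U}_I\cap\mathcal{U}_J$ only a single wall is crossed and the required factor can be read off one weight at a time. Once each one-wall transition is matched, multiple-wall transitions follow by the cocycle condition already established. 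Finally, affineness of the BFN Coulomb branch then forces the glued scheme to \emph{be} the Coulomb branch rather than just a variety with the same ring of functions.
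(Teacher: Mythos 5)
Your overall route is the one the paper takes: write the transition maps as multiplication, in the symplectic groupoid $T^*\check T_\C\to\lt_\C$, by an explicit bisection $C_{J,I}$ attached to the representation $V_{J,I}=\bigoplus_{i\in I\setminus J}U_i\oplus\bigoplus_{i\in J\setminus I}U_i^*$, deduce the cocycle identity from the product $\star_m$ of these Lagrangians, and then identify the intersection $\bigcap_I\phi_I^*(\C[T^*\check T_\C])$ (your equalizer ring) with the BFN algebra by exhibiting the monopole generators $Z^\lambda=z^\lambda\prod_{\langle\rho_i,\lambda\rangle>0}h_{\rho_i}^{\langle\rho_i,\lambda\rangle}$ and matching the multiplication rule against Theorem 4.1 of \cite{BFN}. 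Two points in your plan, however, would trip up a naive execution.

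First, the cocycle condition is \emph{not} ``nothing but associativity/composability.'' Associativity of $\star_m$ does not by itself give $C_{K,J}\star_m C_{J,I}=C_{K,I}$; what is needed is that the redundant factors cancel, i.e.\ that $V_{K,J}\oplus V_{J,I}=V_{K,I}\oplus N\oplus \overline{N^*}$ with $C_{U_i}\star_m C_{\overline{U_i^*}}$ equal to the unit bisection $\{1\}\times\lt_\C$. This in turn forces a sign choice you have deferred: the mirror of $U_i^*$ must be taken with the \emph{negated} potential $-z_i'$ (reflecting that $T^*V\cong T^*V^*$ is an anti-symplectomorphism, cf.\ \Cref{rem on twisted potential} and \Cref{antisym}). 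Without that twist one computes $C_{U_i}\star_m C_{U_i^*}=\{\check\rho_i(-1)\}\times\lt_\C$, and the cocycle identity fails by $2$-torsion characters. So the sign issue you postpone to the BFN-matching stage already decides whether part (1) of the theorem is true; it must be resolved before, not after, the cocycle check.

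Second, your closing inference is backwards: the glued space $\bigcup_I\mathcal U_I$ is a smooth, generally non-affine variety that is only an open dense subset of the Coulomb branch (it misses the singular locus; see \Cref{Compare Gale}). Affineness of the BFN space does not force the glued scheme to equal it. The correct statement, and the content of \Cref{thm4II}, is that $\operatorname{Spec}$ of the ring of global functions (the affinization) is the Coulomb branch. Relatedly, checking that the $Z^\lambda$ actually generate the equalizer ring is not automatic: the paper proves it directly only under a genericity hypothesis on the characters (\Cref{lemTel}) and handles the general case by adjoining an auxiliary $S^1$-weight $t$; your ``one wall at a time'' reduction does not by itself supply this step.
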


\subsection*{Organization of the paper}

In \Cref{Functoriality1}, after a brief review of 2d mirror symmetry, we
describe the SYZ type transform of boundary conditions from a functorial
perspective. We then provide a second introduction to our approach through
examples.

Proofs of the SYZ type transform are provided in \Cref{Aurouxpicture}. In \Cref{matchingKahEqu}, the SYZ type transform was used to explain the exchange of K\"ahler and equivariant parameters in 3d mirror symmetry.

In \Cref{Gluing}, we use the functoriality of 2d mirror symmetry to derive a gluing formula for the Coulomb branch of a gauge theory. Our formula differs slightly from the one proposed by Teleman \cite{tel2021}.

 \addtocontents{toc}{\protect\setcounter{tocdepth}{0}}
\section*{Acknowledgements}
The authors thank Mohammed Abouzaid, Kwokwai Chan, Tudor Dimofte, Justin Hilburn, Chin Hang Eddie Lam, Siu-Cheong Lau, Yan-Lung Leon Li, Sukjoo Lee, Ziming Nikolas Ma, Cheuk Yu Mak, Michael McBreen, and Yat Hin Suen for valuable discussions on various stages of this project. The work of N. C. Leung described in this paper was substantially supported by grants from the Research Grants Council of the Hong Kong Special Administrative Region, China (Project No. CUHK14301721, CUHK14306322, and CUHK14305923) and a direct grant from CUHK.

\addtocontents{toc}{\protect\setcounter{tocdepth}{2}}
\section{3d Mirror Symmetry is mirror symmetry}

\label{Functoriality1}

\subsection{SYZ mirror symmetry, a review}

In string theory, given a Calabi-Yau manifold $Y$, we can associate a 2d $\mathcal{N}=(2,2)$ supersymmetric $
\sigma $-model with target $Y$ \cite{MS}. It admits two topological twists,
thus giving 2d A-model of the symplectic geometry of $Y$ and 2d B-model of
the complex geometry of $Y$.

For a large class of Calabi-Yau manifolds $Y$ (or more generally K\"ahler
manifolds or LG-models), string theory predicts that there is another
Calabi-Yau manifold $Y^{\vee }$ such that 
\begin{eqnarray*}
\text{2d A-model}\left( Y\right) &\cong &\text{2d B-model}\left( Y^{\vee
}\right) , \\
\text{2d B-model}\left( Y\right) &\cong &\text{2d A-model}\left( Y^{\vee
}\right) .
\end{eqnarray*}%
This is the 2d mirror symmetry, or simply mirror symmetry (see e.g. \cite{MS}). In open string sectors,
this duality identifies the categories of boundary conditions, or branes, which are objects of the (derived) Fukaya category $\Fuk(Y)$ of Lagrangians in A-model, or that of the bounded derived category $D^b(Y)$ of coherent sheaves in B-model. More precisely, it is predicted that there are equivalence of categroies
\begin{eqnarray*}
\Fuk\left( Y\right) &\cong &D^b\left( Y^{\vee }\right) , \\
D^b\left( Y\right) &\cong &\Fuk\left( Y^{\vee }\right) .
\end{eqnarray*}%
This is the Homological Mirror Symmetry (abbrev. HMS) conjecture by
Kontsevich \cite{HMS}.

A geometric boundary condition in $Y$ is given by a submanifold $L \subset Y$, coupled with a family of 1d  supersymmetric $\sigma$-models parametrized by $L$. Quantization of the supersymmetric 1d $\sigma$-model, or quantum mechanics, on a manifold $M$, is given by the Hilbert space $H^{\ast}(M,\mathbb{C})$, as shown by Witten \cite{SUSYMorse}. In other words, a brane is a Hermitian vector bundle $E$ over a submanifold $L$ in $Y$. From SUSY considerations, we require $E$ to satisfy:
(i) $E$ is a flat bundle over a Lagrangian submanifold of $Y$ in the A-model; or
(ii) $E$ is a holomorphic vector bundle over a complex submanifold of $Y$ in the B-model.

\citeauthor{SYZ} \Cite{SYZ} proposed a ground breaking geometric explanation
of the 2d mirror symmetry in terms of T-duality, which can be reviewed as a duality in 1d SUSY quantum mechanics. Loosely speaking, it says that if $Y$ and $Y^{\vee }$
is a mirror pair of CY manifolds, then (i) they possess dual (special)
Lagrangian torus fibrations, at least near large volume/complex structure
limits; (ii) fiberwise Fourier-Mukai transformation exchanges the symplectic
geometry of $Y$ with the complex geometry of $Y^{\vee }$ and vice versa.

In particular, $Y^{\vee }$ is a (compactification of) moduli space of
Lagrangian tori $L$ in $Y$ coupled with unitary flat line bundle $E$ on $L$.%
\begin{equation*}
Y^{\vee }=\left\{ \left( L,E\right) :\text{Lag. torus }L\subset Y\text{, }\ 
\text{ flat } \oU(1)\text{-bundle } E\text{ over }L\right\} /\cong \text{.}
\end{equation*}%
Much of the discussion below can be carried out without the topological
constraint on $L$ being a torus. A precise construction of $Y^{\vee }$ is
rather technical as it requires nontrivial quantum corrections and wall-crossing (see e.g. 
\cite{KontSoi1,KontSoi2}).

Mirror symmetry conjecture has also been generalized as a duality between
symplectic geometry and complex geometry in other settings, most notably for
Fano manifolds $Y$ and LG models $\left( Y^{\vee
},W\right) $ (see e.g \cite{AurouxComp}). When $Y$ is a Fano manifold,
namely $c_{1}\left( Y\right) >0$, and $D\in \left\vert -K_{Y}\right\vert $
is a simple normal crossing anti-canonical divisor of $Y$, we consider the
moduli space $Y^{\vee }=\left\{ \left( L,E\right) \right\} /\cong $ as
before with $L$ being a Lagrangian submanifold inside the log-CY manifold $%
Y\backslash D$. Counting holomorphic disks bounding $L$ in $Y$ and meeting $D
$ at a point, weighted by holonomy of $E$ along disk boundaries, defines a
holomorphic function 
\begin{equation*}
W:Y^{\vee }\rightarrow \mathbb{C}\text{,}
\end{equation*}
called the potential function of the LG model $\left( Y^{\vee },W\right) $.
HMS conjecture says that 
\begin{eqnarray*}
\Fuk\left( Y\right) &\cong &D_{sing}\left( Y^{\vee },W\right) , \\
D^{b}\left( Y\right) &\cong &FS\left( Y^{\vee },W\right) ,
\end{eqnarray*}
where $D_{sing}\left( Y^{\vee },W\right) $ is the triangulated category of
singularities (\cite{Orlov}) and $FS\left( Y^{\vee },W\right) $ is the
Fukaya-Seidel category (\cite{Seidelbook}).

\subsection{Branes in 3d theory}

As mentioned in the introduction, the spaces we study are complex symplectic
manifolds/stacks. We refer the readers to \cite{ShiftSym} for general
symplectic stacks.

Let $G_\C$ be a complex reductive Lie group with a Hamiltonian action on a complex
symplectic manifold $\mathcal{X}$ with moment map $\mu_\C:\mathcal{X}\to 
\mathfrak{g}^*_\C$, then 
\begin{equation*}
X=[\mathcal{X}\sslash G_\C]=[\mu_\C^{-1}(0)/G_\C]
\end{equation*}
is a complex symplectic stack. All the examples of complex symplectic stacks we considered are of the above form. When $\mathcal{X}$ is the cotangent bundle of a smooth complex $G_\C$-manifold $M$, then we always assume $\mathcal{X}$ is
equipped with the standard $G_\C$-moment map. In particular, we have 
\begin{equation*}
[\mathcal{X}\sslash G_\C]\cong T^*[M/G_\C]
\end{equation*}
if either (i) the $G_\C$-action on $M$ is free, or (ii) if we understand
both the cotangent bundles and the moment map levels in the realm of derived
geometry.

Let $X=[\mathcal{X}\sslash G_\C]$, and $M\subset \mu_\C^{-1}(0)$ be a complex
Lagrangian of $\mathcal{X}$, then $C=[M/G_\C]$ is a $G_\C$-invariant complex Lagrangian of $X$. In the following, we fix a compact real form $G$ of $G_\C$.

\begin{df}
Let $C=[M/G_\C]$ be a complex Lagrangian of $X$, then a 3d brane of $X$
supported on $C$ is a $G$-Hamiltonian K\"ahler manifold $Y$ together with a $%
G$-equivariant holomorphic map $Y\to M$. We denote it as $([M/G_\C],[Y\sslash G])$ or $[Y\sslash G]\to[M/G_\C]$. If the $G$-action on $Y$ extends to a $G_\C$-action, then we also write $([M/G_\C],[Y/G_\C])$, or $%
[Y/G_\C]\to[M/G_\C]$.
\end{df}
In most interesting cases, the $G$-action on $Y$ can extends to a $G_\C$-action, and the symplectic reduction $Y\sslash G$ can be identified with a GIT quotient $Y/G_\C$.

A 3d brane can be viewed as a boundary condition for the A-model or B-model
of $X$, and is accordingly referred to as a 3d A-brane or 3d B-brane.

We denote by 
\begin{equation*}
\mathfrak{Br}(X)
\end{equation*}
the collection of all 3d branes of $X$, and
\begin{equation*}
\mathfrak{Br}(X;C)
\end{equation*}
the collection of all 3d branes of $X$ supported on $C$. We use the notations 
$\mathfrak{Br}_A(X)$, $\mathfrak{Br}_B(X;C)$, etc. when it is necessary to emphasize the specific models.

We will now describe SYZ-transforms in the 3d branes level of hypertoric 3d
mirror symmetry. These transforms are all understood as from A-models to
B-models.

\subsection{3d Mirror symmetry of \texorpdfstring{$T^*[\pt/T_\C]$}{TEXT} and \texorpdfstring{$T^*\check T_\C$}{TEXT}}\label{2.1}
In this subsection, we explain 3d mirror symmetry between the A-model of $X=[\pt\sslash T_\C]=T^*[\pt/T_\C]$, and the B-model of $X^!=T^*\check T_\C$. Pictorially,
\begin{center}
 \begin{tabular}{ c c c }\\
 \begin{tabular}{ c }
 A-model \\
 $T^*[\pt/T_\C]$
\end{tabular} 
& $\underleftrightarrow{\text{ 3d mirror symmetry }}$ & 
 \begin{tabular}{ c }
 B-model \\
 $T^*\check T_\C$
\end{tabular} \\
\\
\end{tabular}
\end{center}
This correspondence (and its converse direction) has first been observed in \cite{tel2014}.

A 3d A-brane of $T^*[\pt/T_\C]$ supported on $[\pt/T_\C]$ is a Hamiltonian $T$-manifold $Y$; while a 3d brane of $T^*\check T_\C$ supported on $\check T_\C$ is a holomorphic function $\pi^\vee:Z\to \check T_\C$ from a K\"ahler manifold $Z$. The 3d mirror symmetry in this case can be explained by taking $Z=Y^\vee$ to be a 2d mirror of $Y$, and the existence of $\pi^\vee$ (with further properties) is asserted by Teleman \cite{tel2014}. In the following, $\pi^\vee$ is called the Teleman map.

To see why $Z=Y^\vee$, we first recall that duality in 1d is the T-duality between $T$ and its dual $\check{T}$:
    \[
    T \xleftrightarrow{\quad \text{T-duality} \quad} \check{T}.
    \]
    In 2d, The simplest form of SYZ mirror symmetry is the duality between $T_{\mathbb{C}}$ and $\check{T}_{\mathbb{C}}$. They possess special Lagrangian fibrations over $\mathfrak{t} \cong \check{\mathfrak{t}}$, with fibers $T$ and $\check{T}$:
    \[
    T_{\mathbb{C}} \xleftrightarrow{\quad\text{ SYZ } \quad} \check{T}_{\mathbb{C}}.
    \]
    If we have a homomorphism
    \[
    T_{\mathbb{C}} \to T'_{\mathbb{C}},
    \]
    then the pair $\left( [\pt/T_{\mathbb{C}}], [T'_{\mathbb{C}}/T_{\mathbb{C}}] \right)$ defines a 3d brane in $T^*[\pt/T_{\mathbb{C}}]$. On the other hand, the dual homomorphism
    \[
    \check{T}'_{\mathbb{C}} \to \check T_{\mathbb{C}}
    \]
    defines a 3d brane in $T^*\check{T}_{\mathbb{C}}$.

    Our basic 3d mirror transform asserts that these two 3d branes are 3d mirror to each other. In particular, one should be able to see the equivalence of their hom categories via 2d mirror symmetry.
    \begin{align*}
    &_A \Hom\left( \left( [\pt/T_{\mathbb{C}}], [T'_{\mathbb{C}}/T_{\mathbb{C}}] \right), \left( [\pt/T_{\mathbb{C}}], [T_{\mathbb{C}}/T_{\mathbb{C}}] \right) \right) \\
    =& \Fuk\left( [T'_{\mathbb{C}}/T_{\mathbb{C}}] \times_{[\pt/T_{\mathbb{C}}]} [T_{\mathbb{C}}/T_{\mathbb{C}} ]\right) = \Fuk(T'_{\mathbb{C}}),
    \end{align*}
    and
    \begin{align*}
    &_B \Hom\left( (\check{T}_{\mathbb{C}}, \check{T}'_{\mathbb{C}}), (\check{T}_{\mathbb{C}}, \check{T}_{\mathbb{C}}) \right) \\
    =& D^b\left( \check{T}'_{\mathbb{C}} \times_{\check{T}_{\mathbb{C}}} \check{T}_{\mathbb{C}} \right) = D^b(\check{T}'_{\mathbb{C}}).
    \end{align*}

    More generally, if the 3d branes $\left( [\pt/T_{\mathbb{C}}], [Y/T_{\mathbb{C}}] \right)$ in $T^*[\pt/T_{\mathbb{C}}]$ and $(\check T_{\mathbb{C}}, Z)$ in $T^*\check T_\C$ are mirror to each other, then the hom categories
    \[
    _A \Hom\left( \left( [\pt/T_{\mathbb{C}}], [Y_{\mathbb{C}}/T_{\mathbb{C}}] \right), \left( [\pt/T_{\mathbb{C}}], [T_{\mathbb{C}}/T_{\mathbb{C}}] \right) \right) = \Fuk(Y),
    \]
    \[
    _B \Hom\left( (\check{T}_{\mathbb{C}}, Z), (\check{T}_{\mathbb{C}}, \check{T}_{\mathbb{C}}) \right) = D^b(Z)
    \]
    are equivalent. Hence, $Y$ and $Z$ are 2d mirror to each other. This is summarized in the following diagram.
    \begin{equation*}
\begin{tikzcd}
    &\left[Y\sslash T\right]\ar[d]\\
    T^*\left[\pt/T_\C\right]&\left[\pt/T_\C\right]\ar[l,symbol=\subset]
\end{tikzcd}
\text{ is 3d mirror to }
\begin{tikzcd}
    Z=Y^\vee \ar[d]&\\
    \check T_\C\ar[r,symbol=\subset]&T^*\check T_\C.
\end{tikzcd}
\end{equation*}%

Suppose $\left( Y,\omega \right) $ is a Hamiltonian $T$-manifold with moment map
\begin{equation*}
\mu=\mu_Y :Y\rightarrow \mathfrak{t}^{\ast }\text{.} 
\end{equation*}
The Teleman map $\pi^\vee$ can be given explicitly using SYZ symmetry. Given any $T$-invariant Lagrangian submanifold $L$ in $Y$, $\mu |_{L}$ is always a constant function, we denote its value as $\mu \left( L\right) \in 
\mathfrak{t}^{\ast }$. Furthermore, up to Hamiltonian equivalences, Lagrangian deformations of $L$ are also $T$-invariant. This reflects the
fact the action of $T$ on $H^{1}\left( L\right) $ is always trivial. Suppose $Y^\vee$ can be constructed as an SYZ mirror, namely a moduli space of certain pairs $(L,E)$ of $T$-invariant Lagrangian torus $L$ and unitary flat line bundle $E$ on $L$, then $\pi^\vee$ can be given as
\begin{eqnarray*}
\pi^\vee=\pi^\vee_Y &:&Y^{\vee }\rightarrow \check T_\C=\check T\times \mathfrak{t}^{\ast } \\
\pi^\vee\left( L,E\right) &=&\left( \hol_T(E) ,-\mu \left( L\right)
\right) .
\end{eqnarray*}%
Here $\hol_{T}\left( E\right) $ is the holonomy of $E$ along the $T$-orbits in $L$. 
\begin{exmp}\label{equi2d}
    \begin{enumerate}[(a)]
        \item []
        \item When $S^{1}\curvearrowright Y=\mathbb{C}^{n}\overset{\mu }{%
\rightarrow }\mathbb{R}$ with $\mu \left( z_{1},\cdots ,z_{n}\right) =\left(
\left\vert z_{1}\right\vert ^{2}+\cdots +\left\vert z_{n}\right\vert
^{2}\right) /2$, we have $\left( Y^{\vee },W\right) =\left( \left( \mathbb{C}%
^{\times }\right) ^{n},z^{1}+\cdots +z^{n}\right)$ and $\pi^\vee=z^{1}\cdots z^{n}$.
    \item When $S^{1}\curvearrowright Y=\pt\overset{\mu }{\rightarrow }%
\mathbb{R}$ with $\mu \left( \pt\right) =\lambda $, we have $\left( Y^{\vee
},W\right) =\left( \pt,0\right) $ and $\pi^\vee=e^{-\lambda }$.
\item When $T^{n}\curvearrowright Y=\PP^{n}\overset{\mu }%
{\rightarrow }\mathbb{R}^{n}$, we have $\left( Y^{\vee },W\right) =\left(
\left( \mathbb{C}^{\times }\right) ^{n},z^{1}+\cdots +z^{n}+\frac{1}{%
z^{1}\cdots z^{n}}\right) $ and $\pi^\vee=\id$. In general $\pi^\vee=\id$ in the toric cases.
    \end{enumerate}
\end{exmp}

\begin{rem}
    We do not need $L$ to be a torus so far, but it is necessary latter when we discuss Hamiltonian actions on $Y^\vee$.
\end{rem}

\begin{rem}\label{HomlevelTeleman}
\begin{enumerate}[(a)]
    \item []
    \item Let $Y'=\pt$ be with the trivial $T$-action and constant moment map with value $\lambda$. Then we can intrepret the fiber product $[\bar Y'\sslash T]\times_{[\pt/T_\C]}[Y/T_\C]$ as $Y\sslash _\lambda T=\mu_Y^{-1}(\lambda)/T$. On the other hand, the fiber product $Y'^\vee\times_{\check T_\C}Y^\vee$ is equal to $(\pi^\vee)^{-1}(e^{-\lambda})$. The hom level statement for 3d mirror symmetry says 
    \[\Fuk(Y\sslash _\lambda T)\simeq D^b((\pi^\vee)^{-1}(e^{-\lambda})).\] 
    This is also a conjecture of Teleman \cite{tel2014}, which says that the symplectic quotients of $Y$ are 2d mirror to the fibres of $\pi^\vee:Y^\vee\to \check T_\C$. In \Cref{TelConj}, we checked this Teleman conjecture locally in the space level using SYZ's picture. The harder part of Teleman conjecture, which involves the superpotential, is tackled in the joint work \cite{LLL} of the second named author with Lau and Li.
    \item Let $Y_1,Y_2$ be two $T$-Hamiltonian manifolds, and $\pi_1:Y_1^\vee\to \check T_\C$, $\pi_2:Y_2^\vee\to \check T_\C$ be the corresponding Teleman maps. We can interpret 
    \[[\bar{Y}_1\sslash T]\times_{[\pt/T_\C]} [Y_2\sslash T]
    \]
    as $(\bar Y_1\times Y_2)\sslash T$ (where $\bar{Y}$ is $Y$ with the opposite symplectic form and moment map). On the other hand, $\bar Y_1\times Y_2$ with the diagonal $T$-action gives the Teleman map $\pi^\vee=\pi_2^\vee/\pi_1^\vee:Y_1^\vee\times Y_2^\vee\to \check T_\C$. Teleman conjecture implies $(\bar Y_1\times Y_2)\sslash T$ is mirror to $(\pi^\vee)^{-1}(1)=Y_1^\vee\times_{\check T_\C}Y_2^\vee$, which is the statement for 3d mirror symmetry in the hom level, namely
    \[
    _A\Hom_{T^*[\pt/T_\C]}([Y_1\sslash T],[Y_2\sslash T])\simeq\  _B\Hom_{T^*\check T_\C}(Y_1^\vee,Y_2^\vee).
    \]
\end{enumerate}
\end{rem}

\subsection{3d mirror symmetry for hypertoric varieties/stacks}\label{2.2}

In this subsection, we let $X$ be a hypertoric variety/stack, and $X^!$ be the Gale dual hypertoric stack (see below). The 3d mirror symmetry between $X$ and $X^!$ are described in the brane level by an SYZ-type transform.

Let $\mathbb{P}_\Sigma$ be a toric variety with the dense torus $T^n_\C$ and fan $\Sigma$. Let $\Sigma(1)=\{u_1,\dots,u_l\}$ be the set of rays in $\Sigma$. Let $T_\C$ be a complex torus, and $\rho: T_\C\to T^n_\C$ be a homomorphism. Let $C=[\mathbb{P}_\Sigma/T_\C]$.
\begin{df}\label{galedef}
    The Gale dual $C^!$ of $C$ is the toric stack
    \[[(\C^l\times \check T_\C)/\check T^n_\C].\]
    Here, $\check T^n_\C$ acts on $\C^l$ through $(u_1,\dots,u_l):\check T^n_\C\to (\Cx)^l$, and acts on $\check T_\C$ through the dual homomorphism $\check \rho:\check T^n_\C\to \check T_\C$. We may also write $\C^{\Sigma(1)}$ in place of $\C^l$ to emphasize the $\check T^n_\C$-action.
\end{df}
In this subsection, we let $X=T^*C$ and $X^!=T^*C^!$. In other words,  
\[X=[T^*\PP_\Sigma\sslash T_\C],   X^!=[T^*(\C^l\times \check T_\C)\sslash \check T^n_\C].\]

If $T_\C$ is trivial, and $\PP_\Sigma$ is the toric variety associated with the short exact sequence
\[1\to T^{l-n}_\C \to (\Cx)^l  \to T^{n}_\C \to 1\]
of complex tori, then $C^!$ is the toric stack associated with the dual short exact sequence
\[1\to \check T^{n}_\C \to (\Cx)^l  \to \check T^{l-n}_\C \to 1.\]

On the other hand, if $\PP_\Sigma=\pt$, then $X=T^*[\pt/T_\C]$ and $X^!=T^*\check T_\C$ is the 3d mirror pair we considered in the previous subsection.

\begin{rem}
    Note that even $X$ is a smooth complex symplectic manifold, $X^!$ may still be stacky. In fact, to obtain a non-stacky version of $X^!$, one has to choose a K\"ahler parameter of $X^!$, which corresponds to an equivariant parameter of $X$. See \Cref{matchingKahEqu}. 
\end{rem}

We will describe the SYZ type transform from $\Br_A(X;C)$ to $\Br_B(X^!;C^!)$. We first discuss a special case: the 3d mirror symmetry between the A-model of $X=T^*T^n_\C$ and B-model of $T^*[ \pt/\check T^n_\C] $, picturally
\begin{center}
 \begin{tabular}{ c c c }
\\
 \begin{tabular}{ c }
 A-model \\
 $T^*T^n_\C$
\end{tabular} 
& $\underleftrightarrow{\text{ 3d mirror symmetry }}$ & 
 \begin{tabular}{ c }
 B-model \\
 $T^*\check[\pt/\check T^n_\C]$.
\end{tabular} \\
\\
\end{tabular}
\end{center}
This is reversing the A- and B-sides of the 3d mirror symmetry we discussed
in the last subsection. 

A 3d brane in $\Br(T^*T^n_\C;T^n_\C)$ is a K\"ahler manifold $Y$, together with a holomorphic map $Y\to T^n_\C$. By reversing $X$ and $X^!$ from the previous subsection, one would expect the SYZ mirror $Y^\vee$ of $Y$ should admit a Hamiltonian $\check T^n$-action, providing a 3d brane in $\Br(T^*[\pt/T^n_\C];[\pt/T^n_\C])$.

Recall that $Y^\vee$ is constructed as a moduli space of Lagrangians $L \subset Y$ and unitary flat line bundles $E$ on $L$, and $\check T^n$ parametrizes unitary flat line bundles over $T^n_\C$. The $\check T^n$-action on $Y^\vee$ can be realized by pulling back the unitary flat line bundles on $T_\C$ to $L$ and tensoring them with $E$. As we will demonstrate in \Cref{Aurouxpicture}, this action is both holomorphic and Hamiltonian (\Cref{G-hamaction}). In other words, the Hamltonian $\check T^n$-action on $Y^\vee$ provides a 3d brane in $\Br(T^*[\pt/\check T^n_\C];[\pt/T^n_\C])$.

\begin{exmp}\label{exmpCxm}
    \item The 2d mirror of $\left( \mathbb{C}^{\times }\right)^{m}$ is $%
\left( \mathbb{C}^{\times }\right) ^{m}$ with dual coordinates $z_{j}$'s and 
$z^{j}$'s respectively. With the diagonal action $\rho $ on $(\Cx)^m$ by $\mathbb{C}%
^{\times }$, we have 
\begin{equation*}
\begin{tikzcd}
    (\Cx)^m\ar[d,"\prod_{j=1}^{m}z_{j}"]\\
    \Cx
\end{tikzcd}
\text{ is 3d mirror to }
\begin{tikzcd}
    (\Cx)^m \ar[r,symbol=\curvearrowleft,"\rho"]&\Cx.
\end{tikzcd}
\end{equation*}%
It is convenient to rephrase ``$(\Cx)^m\curvearrowleft \Cx$'' as ``$[(\Cx)^m/\Cx]\to [\pt/\Cx]$''.
\end{exmp}

Before discussing the general case, we provide an example first.
\begin{exmp}\label{exmpcm}
In this example, we discuss the 3d mirror symmetry between the A-model of $X=T^*\mathbb{C}$ and B-model of $T^*\left[ \mathbb{C}/\mathbb{C}^{\times }\right]$, picturally
\begin{center}
 \begin{tabular}{ c c c }
\\
 \begin{tabular}{ c }
 A-model \\
 $T^*\C$
\end{tabular} 
& $\underleftrightarrow{\text{ 3d mirror symmetry }}$ & 
 \begin{tabular}{ c }
 B-model \\
 $T^*[\C/\Cx]$
\end{tabular} \\
\\
\end{tabular}
\end{center}
This 3d duality will be explained via the duality between partial compactification and potential function in mirror symmetry. 

For example, $\mathbb{C}^{m}$ is a partial compactification of $\left(\mathbb{C}^{\times }\right) ^{m}$ by adding union of coordinate hyperplanes $\bigcup _{j=1}^{m}H_{j}$ where $H_{j}=\left\{ x_{j}=0\right\}$; its mirror is the LG model $W=\sum_{j=1}^{m}x^{j}:\left( \mathbb{C}^{\times }\right)^{m}\rightarrow \mathbb{C}$, where each term $x^{j}$ in $W$ is the contribution of Maslov index 2 holomorphic disks meeting $H_{j}$. We write
\begin{equation*}
    \begin{tikzcd}
         \C^m\ar[d,"\prod x_j"]\\ \C
    \end{tikzcd}
    \text{ is 3d mirror to }
        \begin{tikzcd}
         (\Cx)^m\ar[r,symbol=\curvearrowleft]\ar[d,"\sum x^j"]&\Cx\ar[d,equal]\\
         \C\ar[r,symbol=\curvearrowleft]&\Cx
    \end{tikzcd}
\end{equation*}
where $\Cx$ acts on $\C$ via the standard action, and acts on $(\Cx)^m$ via the diagonal action.
This is an example of a transform from $\Br_A(T^*\C;\C)$ to $\Br_B(T^*[\C/\Cx])$.

Note that the A-sides of this example and \Cref{exmpCxm} are compatible, as seen by the diagram
\begin{equation*}
\begin{tikzcd}
    (\Cx)^m\ar[r,symbol=\subset]\ar[d,"\prod x_j"]&\C^m\ar[d,"\prod x_j"]\\
    \Cx\ar[r,symbol=\subset]&\C.
\end{tikzcd}
\end{equation*}
On the mirror side, it corresponds to the diagram

\begin{equation*}
\begin{tikzcd}
    \left[(\Cx)^m/\Cx\right]\ar[r,equal] \ar[d]&\left[(\Cx)^m/\Cx\right]\ar[d,"\sum x^j"]\\
    \left[\pt/\Cx\right]&\left[\C/\Cx\right] \ar[l].
\end{tikzcd}
\end{equation*}
The readers may refer to \Cref{SYZtype} for functoriality of 3d mirror symmetry of this type.
\end{exmp}

We now discuss the general case: 3d mirror symmetry between the A-model of $X=T^*C$ and B-model of $X^!=T^*C^!$ for a pair of Gale dual toric stacks $C$ and $C^!$.
\begin{center}
 \begin{tabular}{ c c c }
\\
 \begin{tabular}{ c }
 A-model \\
 $X=T^*[\PP_\Sigma/T_\C]$
\end{tabular} 
& $\underleftrightarrow{\text{ 3d mirror symmetry }}$ & 
 \begin{tabular}{ c }
 B-model \\
 $X^!=T^*[\C^l\times \check T_\C/\check T^n_\C]$
\end{tabular} \\
\\
\end{tabular}
\end{center}

Recall that $C=[\PP_\Sigma/T_\C]$ and $C^!=[(\C^l\times \check T_\C)/\check T^n_\C]$. Let $Y$ be a K\"ahler manifold, and $\pi:Y\to \PP_\Sigma$ be a $T$-equivariant holomorphic map. In other words, $[Y\sslash T]$ is a 3d brane in $\Br_A(X;C)$.

Let $Y^\circ=\pi^{-1}(T^n_\C)$, we have the pullback diagram
\begin{equation*}
    \begin{tikzcd}
        Y^\circ\ar[r,symbol=\subset]\ar[d,"\pi|_{Y^\circ}"]& Y\ar[d," \pi"]\\
        T^n_\C\ar[r,symbol=\subset]&\PP_\Sigma.
    \end{tikzcd}
\end{equation*}
If $Y^\vee$ is an SYZ mirror of $Y^\circ$, then there is a Hamiltonian $\check T^n$-action on $Y^\vee$ that is mirror to the map $Y^\circ\to T^n_\C$. On the other hand, there is a holomorphic map $\pi^\vee:Y^\vee\to \check T_\C$ that is mirror to the $T$-action on $Y^\circ$.

Recall $Y^\vee$ is regarded as the moduli space of Lagrangians with unitary flat line bundles $(L,E)$. Let $\beta\in H_2(L,Y)$ be a disc class, then $z_{\beta}(L,E)=e^{-\int_{\beta} \omega} \cdot \hol_{\partial \beta}(E)$ defines a (local) holomorphic function on $Y^\vee$ (see \Cref{Aurouxpicture}). 

For $j=1,\dots,l$, let $H_j$ be the toric divisor of $\PP_\Sigma$ corresponding to $u_j$, $D_j=\pi^{-1}(H_j)$, and $W_j:Y^\vee\to \C$ be the function
\[W_j(L,E)=\sum_{\beta\cap D_i=\delta_{ij}}n_\beta z_\beta,\]
where $n_\beta$ is the number of holomorphic discs in the class $n_\beta$. 

In \Cref{Aurouxpicture}, we will show that the map
\[F=(F_1,\dots,F_l,\pi^\vee):Y^\vee\to \C^l\times \check T_\C\]
is $\check T^n$-equivariant. Therefore, $[Y^\vee\sslash \check T^n]$ is a 3d brane of $X^!$ supported on $C^!$\footnote{The idea of using multipotential for LG models are known to the expects, see, for example, the last section in \cite{AurouxFibration} and \cite{Sukjoo}}.

\begin{exmp}
    Suppose $\Cx$ acts only on the first coordinate of $\C^n$, then 
\begin{equation*}
\begin{tikzcd}
    \left[\C^n/\Cx\right]\ar[d,"\prod_{j=1}^{n}x_{j}"]\\
    \left[\C/\Cx\right]
\end{tikzcd}
\text{ is 3d mirror to }
\begin{tikzcd}
    \left[(\Cx)^n/\Cx\right]\ar[r,symbol=\cong]\ar[d,"\sum_{j=1}^{n}x_{j}"]&(\Cx)^{n-1}\ar[d,"1+\sum_{j=2}^{n}x_{j}"]\\
    \left[\C\times \Cx/\Cx\right]\ar[r,symbol=\cong]&\C.
\end{tikzcd}
\end{equation*}
This is an example explaining the 3d mirror symmetry of the A-model of $T^*[\C/\Cx]$ and the B-model of $T^*\C$.
\begin{center}
 \begin{tabular}{ c c c }\\
 \begin{tabular}{ c }
 A-model \\
 $T^*[\C/\Cx]$
\end{tabular} 
& $\underleftrightarrow{\text{ 3d mirror symmetry }}$ & 
 \begin{tabular}{ c }
 B-model \\
 $T^*\C$.
\end{tabular} \\
\\
\end{tabular}
\end{center}
\end{exmp}

\subsection{Examples of functoriality of 3d mirror symmetry}\label{stratified}
It is expected that complex Lagrangian correspondences should induce transforms of 3d branes. For example, the diagonal 
$\Delta_{T^*\Cx}\subset T^*\C\times T^*\Cx$
induces a transform
\[\Phi_1:\Br_A(T^*\C;\C)\to \Br_A(T^*\Cx;\Cx).\]
Explicitly, $\Phi_1$ sends a 3d brane $\pi:Y\to \C$ of $T^*\C$ to the 3d brane $\pi:Y^\circ \to \Cx$, where $Y^\circ=\pi^{-1}(\Cx)$ as in the previous section.

On the other hand, the projection $[\C/\Cx]\to [\pt/\Cx]$ induces a complex Lagrangian in $T^*[\C/\Cx]\times T^*[\pt/\Cx]$, and hence gives a transform
\[\Phi_1^!:\Br_B(T^*[\C/\Cx];[\C/\Cx])\to \Br_B(T^*[\pt/\Cx];[\pt/\Cx]).\]
Explicitly, $\Phi_1^!$ sends a 3d brane $[Z\sslash S^1]\to [\C/\Cx]$ to the 3d brane $[Z\sslash S^1]\to [\pt/\Cx]$. \Cref{exmpcm} shows that $\Phi_1$ and $\Phi_1^!$ corresponds to each other under 3d mirror symmetry. In other words, the following commutative diagram commutes.
\[
\begin{tikzcd}
    \Br_{A}(T^*\C;\C)\ar[d,"\Phi_1"]\ar[rrr,leftrightarrow,"\text{ 3d MS}"]&&&\Br_{B}(T^*[\C/\Cx];[\C/\Cx]),\ar[d,"\Phi_1^!"]\\
    \Br_{A}(T^*\Cx;\Cx)\ar[rrr,leftrightarrow,"\text{ 3d MS}"]&&&\Br_{B}(T^*[\pt/\Cx];[\pt/\Cx]).
\end{tikzcd}
\]

Similarly, one may expect the following. The complex Lagrangian $\C\times \pt\subset T^*\C\times \pt$ induces a transform 
\[
\Phi_2:\Br_A(T^*\C,\C)\to \Br_A(\pt);
\]
and the complex Lagrangian $\Delta_{[\Cx/\Cx]}\subset T^*[\C/\Cx]\times \pt= T^*[\C/\Cx]\times T^*[\Cx/\Cx]$ induces a transform 
\[
\Phi_2^!:\Br_B(T^*[\C/\Cx],[\C/\Cx])\to \Br_B(\pt).
\]
We expect $\Phi_2$ and $\Phi_2^!$ corresponds to each other under 3d mirror symmetry\footnote{Functoriality similar to the following is also discussed in \cite{GHM}.}.
\begin{equation}\label{comdiag2}
    \begin{tikzcd}
    \Br_{A}(T^*\C;\C)\ar[d,"\Phi_2"]\ar[rrr,leftrightarrow,"\text{ 3d MS}"]&&&\Br_{B}(T^*[\C/\Cx];[\C/\Cx])\ar[d,"\Phi_2^!"]\\
    \Br_{A}(\pt)\ar[rrr,leftrightarrow,"\text{ 3d MS}"]&&&\Br_{B}(\pt).
\end{tikzcd}
\end{equation}
Explicitly, $\Phi_2$ sends a 3d brane $\pi:Y\to \C$ of $T^*\C$ to the (possibly singular) 3d brane $\pi^{-1}(0)$, and $\Phi_2^!$ sends a 3d brane $F:[Z\sslash S^1]\to [\C/\Cx]$ to the 3d brane $F^{-1}(\Cx/\Cx)$.

\begin{exmp}\label{exmpAAK}
    As in \Cref{exmpcm},
    \begin{equation*}
    \begin{tikzcd}
         \C^m\ar[d,"\pi=\prod x_j"]\\ \C
    \end{tikzcd}
    \text{  is 3d mirror to }
        \begin{tikzcd}
         \left[(\Cx)^m/\Cx\right]\ar[d,"F=\sum x^j"]\\
         \left[\C/\Cx\right].
    \end{tikzcd}
\end{equation*}
The fibres of $\pi$ over the stractum $\Cx\subset \C$ is isomorphic to $(\Cx)^{m-1}$, which is mirror to $F^{-1}([0/\Cx])\cong (\Cx)^{m-1}$.

In contrast, the zero fiber of $\pi$ is given by $\pi^{-1}(0)={\prod x_j=0}\subset \C^m$. Meanwhile, the free quotient $\{\sum x^j\neq 0\}/\Cx$ is isomorphic to the $(m-1)$-dimensional pair of pants. These two spaces indeed constitute a 2d mirror pair; see, for instance, \cite{Lekilipairofpants}.

\end{exmp}

In general, let $\PP_{\Sigma_1}$ be a toric subvariety of $\PP_\Sigma$ with fan $\Sigma_1$. We define $C_1=[\PP_{\Sigma_1}/T_\C]\subset [\PP_{\Sigma}/T_\C]=C$. We expect there is a commutative diagram

\begin{equation}\label{LagCorrespondence}
    \begin{tikzcd}
    \Br_{A}(T^*C;C)\ar[d,"\Phi"]\ar[rrr,leftrightarrow,"\text{ 3d MS}"]&&&\Br_{B}(T^*C^!;C^!)\ar[d,"\Phi^!"]\\
    \Br_{A}(T^*C_1;C_1)\ar[rrr,leftrightarrow,"\text{ 3d MS}"]&&&\Br_{B}(T^*C_1^!;C_1^!).
\end{tikzcd}
\end{equation}
Here, $\Phi$ is the restriction of a 3d brane from $C$ to the substack $C_1$. 

On the other hand, $\Phi^!$ sends a 3d brane $F:[Z\sslash \check T^n]\to C^!$ to (i) the composition $[Z\sslash \check T^n]\to C^!\to C_1^!$ if $\PP_{\Sigma_1}\subset \PP_{\Sigma}$ is an open toric subvariety; (ii) the restriction $F^{-1}(C_1^!)\to C^!_1$ if $\PP_{\Sigma_1}\subset \PP_{\Sigma}$ is a closed toric subvariety. The general case is a combination of (i) and (ii).

\subsection{SYZ type construction of 3d mirror manifolds}\label{SYZtype}

Recall that SYZ proposal says that mirror Calabi-Yau manifolds $Y$ and $Y^\vee$ admit dual Lagrangian torus fibrations. Furthermore mirror duality should
be realized as family of Fourier type transformation between Lagrangian tori 
$T$ in $Y$ and the corresponding Lagrangian tori $\check T$ in $Y^{\vee}$. Put
these differently, mirror symmetry comes from duality of 1d TQFTs along Lagranian tori $T$ in $Y$. We are going to explain how 3d mirror symmetry comes from mirror symmetry along complex Lagrangian tori $T_{\mathbb{C}}$ in $X$, at least in the Abelian case. 

Suppose $T_{\mathbb{C}}\subset X$ is a complex Lagrangian torus in $X$, a 3d
brane in $X$ with support on $T_{\mathbb{C}}$ is%
\[
\begin{array}{ccc}
Y &  &  \\ 
\downarrow  &  &  \\ 
T_{\mathbb{C}} & \subset  & X
\end{array}
\]
with $Y$ a K\"ahler manifold with holomorphic fibration $Y\rightarrow T_{\mathbb{C}}$. As explained earlier, the mirror of $Y\rightarrow T_{\mathbb{C}}$ is the mirror $Y^\vee$ of $Y$ together with a Hamiltonian $\check T$-action. Namely, $[ Y^{\vee}\sslash \check T] \rightarrow [\pt/\check T_\C]$ is a 3d brane in $T^*[\pt/\check T_\C]$. Thus we should have a complex Lagrangian correspondence in $T^*[\pt/\check T_\C] \times
X^{!}$ sending $[ Y^\vee/\check T_\C ] \to [\pt/\check T_\C]$ to a 3d brane in $X^{!}$. This complex
Lagrangian correspondence is 3d mirror to an inclusion $T^*T_{\mathbb{C}}\subset X$.

Next, for simplicity we assume $T_{\mathbb{C}}=\mathbb{C}^{\times }$ at the
time being, suppose $\mathbb{C}\subset X$ is a complex Lagrangian in $X
$ and   
\[
\begin{array}{ccc}
Y &  &  \\ 
\downarrow  &  &  \\ 
\mathbb{C} & \subset  & X%
\end{array}%
\]%
is a 3d brane in $X$. The mirror to $Y\rightarrow \mathbb{C}$ is given by $[ Y^\vee/\mathbb{C}^{\times }] \rightarrow [ \mathbb{C}/\mathbb{C}^{\times }] $ which is a 3d brane in $T^{\ast }[\mathbb{C}/\mathbb{C}^{\times }] $. As before we should have a complex Lagrangian correspondence in $T^{\ast }[\mathbb{C}/\mathbb{C}
^{\times }] \times X^{!}$ sending $[ Y^\vee/\mathbb{C}^{\times}] \rightarrow [\pt/T_{\mathbb{C}}^{\vee}] $ to a 3d brane in $X^{!}$. 

Similarly, given a 3d brane  
\[
\begin{array}{ccc}
Y &  &  \\ 
\downarrow  &  &  \\ 
\PP^{1} & \subset  & X,
\end{array}%
\]%
the mirror to $Y\rightarrow \PP^{1}$ is a 3d brane $\left[ Y^\vee/\mathbb{C}^{\times }\right] \rightarrow\left[\mathbb{C}_{1,-1}^{2}/\mathbb{C}^{\times }\right] $ in $T^{\ast }\left[ \mathbb{C}_{1,-1}^{2}/\mathbb{C}^{\times }\right]$. Here the subscript $1,-1$ indicate the weights of the $\mathbb{C}^{\times }$-action. 

Note that there is a pushout diagram%
\[
\begin{array}{ccc}
\mathbb{C}^{\times } & \rightarrow  & \mathbb{C} \\ 
\downarrow  &  & \downarrow  \\ 
\mathbb{C} & \rightarrow  & \PP^{1}%
\end{array}%
\]%
which corresponds to the pullback diagram 
\[
\begin{array}{ccc}
\left[ \pt/\mathbb{C}^{\times }\right]  & \leftarrow  & \left[ \mathbb{C}/%
\mathbb{C}^{\times }\right]  \\ 
\uparrow  &  & \uparrow  \\ 
\left[ \mathbb{C}/\mathbb{C}^{\times }\right]  & \leftarrow  & \left[ 
\mathbb{C}_{1,1}^{2}/\mathbb{C}^{\times }\right] 
\end{array}
\]
in the 3d mirror side. They give the various complex Lagrangian correspondences relating various 3d
mirror pairs. 

In \Cref{Gluing}, we use this idea to construct the Coulomb branch of a gauge theory with matters by gluing copies of the Coulomb branch of the pure gauge theory (see also \Cref{Gluingintro}).

\subsection{Related works}\label{relatedworks}{}

\begin{exmp}\label{LogCY}
    $T^*[\C/\Cx] \xleftrightarrow{\text{ 3d MS }} T^*\C$ and mirror symmetry for a log Calabi-Yau pair $(Y,D)$.

Let $Y$ be a K\"{a}hler manifold, and $D=\left\{ s=0\right\} \subset Y$ be
an anti-canonical divisor. Let $K_Y$ be the canonical bundle of $Y$, and $K_{Y}^{\times }$ be the corresponding $\mathbb{C}^{\times }$-bundle on $Y$. The section $s$ gives a $\Cx$-equivariant map $\phi_s:K_{Y}^{\times}\rightarrow \mathbb{C}$,
or equivalently, a 3d brane $Y=[K_{Y}^{\times }/\Cx] \to  [\C/\Cx]$ in $\Br_A(T^*[\C/\Cx];[\C/\Cx])$. We have

\begin{equation*}
    \begin{tikzcd}
         Y\ar[r,symbol=\cong]&K^\times_Y/\Cx\ar[d]\\ &\left[\C/\Cx\right]
    \end{tikzcd}
    \text{ is 3d mirror to }
        \begin{tikzcd}
         (K_Y^\times)^\vee/\Cx\ar[r,symbol=\cong]\ar[d,"\widetilde{f}/\Cx"]&Y^\vee\ar[d,"f"]\\\left[\C\times \Cx\ar[r,symbol=\cong]/\Cx\right]&\C.
    \end{tikzcd}
\end{equation*}
Here $(K_Y^\times)^\vee$ is the SYZ mirror of $\{\phi_s\neq 0\}\subset K_Y$, and $\Tilde{f}$ is obtained by counting Maslov two discs intersecting ${\phi_s=0}$; in other words, $((K_Y^\times)^\vee,\widetilde{f})$ is an SYZ mirror of $K_Y^\times$. As mentioned earlier, the conjecture of Teleman asserts that $(Y^\vee,f)$ is a 2d mirror of $Y$, see \cite{LLL} for a proof of the (modified) Teleman conjecture.

Now, if we apply the commutative diagram (\ref{comdiag2}) with $C_1=[\pt/\Cx]$, then we obtain
\begin{equation*}
\begin{tikzcd}
D\ar[d]\\ 
\left[\pt/\Cx\right]
\end{tikzcd}
\text{ is 3d mirror to }
\begin{tikzcd}
    f^{-1}(\Cx)\ar[d]\\
    \Cx.
\end{tikzcd}
\end{equation*}
In particular, we expect that $D$ is 2d mirror to a (generic) nonzero fibre of $f$. This expectation, and that $Y^\vee$ is 2d mirror to $Y\setminus D$ are well-known principles of relative mirror symmetry for log Calabi-Yau pairs.
\end{exmp} 

More generally, suppose we have a decomposition $D=D_1\dots+D_n$, then this provide a 3d brane $Y\to [\C^n/(\Cx)^n]$ of $T^*[\C^n/(\Cx)^n]$. Let $L_i$ be the line bundle corresponding to $\mathcal{O}_X(D)$, and $L_i^\times$ be the associated $\Cx$-bundle on $X$. If we denote 
\[
P=L^\times_1\times_YL^\times\times_Y\cdots \times_Y L^\times_n,
\]
then we expect 
\begin{equation*}
    \begin{tikzcd}
         Y\ar[r,symbol=\cong]&P/(\Cx)^n\ar[d]\\ &\left[\C^n/(\Cx)^n\right]
    \end{tikzcd}
    \text{ is 3d mirror to }
        \begin{tikzcd}
         P^\vee/(\Cx)^n\ar[r,symbol=\cong]\ar[d,"\widetilde{f}/\Cx"]&Y^\vee\ar[d,"f"]\\\left[\C^n\times (\Cx)^n\ar[r,symbol=\cong]/(\Cx)^n\right]&\C^n.
    \end{tikzcd}
\end{equation*}
In particular, we expect $f^{-1}(p)$ is mirror to $\bigcap_{i=1}^n D_i$ for a generic $p\in \C^n$ (apply (\ref{comdiag2}) with $C_1=[\pt/(\Cx)^n]$).

It would be interesting to see what happens when we impose linear equivalences among $D_1,D_2,\dots,D_n$, as illustrated by the next example.

\begin{exmp} \label{DHT}
$T^*[\C^2_{1,-1}/\Cx] \xleftrightarrow{ 3d MS }T^*[\C^2_{1,1}/\Cx]$ and Doran-Harder-Thompson conjecture.

Let $f:\mathcal{V}\rightarrow \mathbb{C}$ be a flat morphism (and $\mathcal{V}$ is smooth) such that $f^{-1}(0)=Y_{1}\cup _{Y_{12}}Y_{2}$ a simple normal crossing divisor with
two components of quasi-Fano varieties, and all other fibres are smooth
Calabi-Yau varieties. In particular $Y_{12}$ is an anti-canonical divisor in
both $Y_{1}$ and $Y_{2}$; moreover $O_{\mathcal{V}}\left( Y_{1}\right)
\simeq O_{\mathcal{V}}\left( -Y_{2}\right) $ which we denote as $L$. Then 
$Y_{1}$ and $Y_{2}$ induces a $\mathbb{C}^{\times }$-equivariant map $%
L^{\times }\rightarrow \mathbb{C}_{1,-1}^{2}$, where $L^{\times }$ is the
associated $\mathbb{C}^{\times }$-bundle of $L$ over $\mathcal{V}$. Namely $
\mathcal{V}\cong L^{\times }/\mathbb{C}^{\times }\rightarrow \lbrack 
\mathbb{C}_{1,-1}^{2}/\mathbb{C}^{\times }]$ defines a 3d brane in $T^{\ast
}[\mathbb{C}_{1,-1}^{2}/\mathbb{C}^{\times }]$.

Let $(Y_{i}^{\vee },W_{i})$ be a 2d mirror of $(Y_{i},Y_{12})$ for $i=1,2$. \citeauthor{DHT} \cite{DHT} proposed that the 2d-mirror of a generic fiber $U$ of $f$ is obtained from gluing $W_1:Y_{1}^{\vee }\to \C$ and $W_2:Y_{2}^\vee\to\C$ to form a family $U^\vee\to \PP^1$. 

In our framework, we have
\begin{equation*}
    \begin{tikzcd}
         \mathcal{V}\ar[r,symbol=\cong]&L^\times/\Cx\ar[d,"\pi"]\\ &\left[\C^2_{1,-1}/\Cx\right]
    \end{tikzcd}
    \text{ is 3d mirror to }
        \begin{tikzcd}
         \check {L}^\times/(\Cx)^2\ar[r,symbol=\cong]\ar[d]&U^\vee\ar[d,"F"]\\\left[\C^2_{1,1}\times (\Cx)^2/\Cx\right]\ar[r,symbol=\cong]&\left[\C^2_{1,1}/\Cx\right].
    \end{tikzcd}
\end{equation*}
Note that $\mathbb{P}^{1}$ is a GIT quotient of $\mathbb{C}_{1,1}^{2}$ by $
\mathbb{C}^{\times }$. 

On the other hand, applying (\ref{comdiag2}) with $C_1=[0\times \C_1/\Cx]$ tells us that
\begin{equation*}
    \begin{tikzcd}
         &Y_1\ar[d,"\pi"]\\
         \left[0\times \C_1/\Cx\right]\ar[r,symbol=\cong]&\left[\C/\Cx\right]
    \end{tikzcd}
    \text{ is 3d mirror to }
        \begin{tikzcd}
         F^{-1}(\Cx\times \C/\Cx)\ar[d]&\\
         \C\ar[r,symbol=\cong]&\left[\Cx\times\C_1/\Cx\right].
    \end{tikzcd}
\end{equation*}
Combining with \Cref{LogCY}, we see that $(F^{-1}(\Cx\times \C/\Cx), F)$ should be a 2d mirror of $(Y_1,Y_{12})$ (similar for $(Y_2, Y_{12})$). This agrees with the construction in the DHT proposal.
\end{exmp}

\begin{exmp}\label{AAk}
    $T^*\C\xleftrightarrow{ 3d MS }T^*[\Cx/\Cx]$ and the Abouzaid-Auroux-Katzarkov mirror symmetry for hypersurface in $(\Cx)^n$.
    
Let $Y$ be a toric Calabi-Yau of dimension $d$, and $\pi:Y\to \C$ the function with $\pi^{-1}(0)$ equal to the union of all toric divisors. Then
\begin{equation*}
    \begin{tikzcd}
         Y\ar[d,"\pi"]\\ \C
    \end{tikzcd}
    \text{  is 3d mirror to }
        \begin{tikzcd}
         \left[(\Cx)^d/\Cx\right]\ar[d,"F=\sum_{i=1}^m z^{u_i}"]\\
         \left[\C/\Cx\right],
    \end{tikzcd}
\end{equation*}
where $u_1,\dots,u_m$ are the ray generators of the fan associated to $Y$. According to \Cref{stratified}, we expect $\pi^{-1}(0)$ is 2d mirror to $F^{-1}([\Cx/\Cx])$ (apply (\ref{comdiag2}) with $C_1=0$).

Note that $F^{-1}([\Cx/\Cx])$ is isomorphic to the hypersurface $Z$ in $(\Cx)^m$ defined by the equation
\[\sum_{i=1}^m z^{u_i}=1.\]

On the other hand, by Kn\"orrer periodicity, we have (\cite{Orlov}, see \cite{Jeffs} for the corresponding statement in A-side):
\[D^b(\pi^{-1}(0))\simeq MF(Y\times \C_x,x\pi).\]
One can show that $(Y\times \C_x,x\pi)$ is the mirror of $Z$ proposed in \cite{AAK} (see also \cite{AAhypersurface}). Note that \Cref{exmpAAK} is a special case.
\end{exmp}

\begin{exmp}
    If $Y=C$, and $\pi$ is the identity, then $[Y^\vee/\check T^n_\C]=[\check T^n_\C/\check T^n_\C]$ is a point. The cotangent fibre $L$ of this point in $C^!$ is a section of the complex moment map (cocore). This agrees with the considerations of category $\mathcal{O}$. Where $C$ is the support of a simple object in the category $\mathcal{O}$ for $X$, and the conormal $L$ is the support of a projective object in the category $\mathcal{O}$ for $X^!$\footnote{We thank Justin Hilburn for discussions related to this}, see \cite{BLPWhypertoric,CGH,Hilburnthesis}.
\end{exmp}

\subsection{Non-Abelian equivariant mirror symmetry}\label{nonabelian}
In this subsection, we will consider the case when $X=T^*[\pt/G_\C]$, where $G_\C$ is the complexification of a compact group $G$.

In this case, the 3d mirror $X^!$ is not necessarily a cotangent bundle, so we cannot stay in the zero section level. There is a way to move an LG family over the zero section to a complex family over another Lagrangian, as we are describing now.

We begin with a more general situation. Suppose $\pi: Y\to B$ is a holomorphic submersion of complex manifolds, and $W:Y\to \C$ is a holomorphic function. Let $p\in Y$ and $F=\pi^{-1}(\pi(p))$ be the fibre of $\pi$ containing $p$. Note that we have an exact sequence
\[0\to T^*_{\pi(p)}B\to T^*_pY\to T^*_{p}F\to 0,\]
and $dW$ is an element of $T^*_pY$. If $dW(p)$ vanishes along $F$, then the above sequence says that we can regard $dW$ as an element of $T^*_{\pi(p)}B$. We write $\widetilde C=\widetilde C_Y\subset Y^\vee$ for the fibrewise critical locus of $W$, and define
\[
C=C_Y=\{dW(p)\in T^*_{\pi(p)}B:dW \in \widetilde C_Y\}\subset T^*B.
\]
One can see that $C$ is a complex Lagrangian in $T^*B$. There is another description for $C$. The graph of $\pi$ is a submanifold of $Y\times B$, and its conormal defines a Lagrangian correspondence from $T^*Y$ to $T^*B$, which sends the graph of $dW$ to $C$. Note that this latter description is more general, and $C$ may be defined even $\pi$ is not a submersion.

\begin{rem}
    If we use the languages of shifted symplectic structures \cite{ShiftSym}, and define $\widetilde C$ as the derived critical locus, then $\widetilde{C}$ is also a Lagrangian, so we can work with $\widetilde{C}$ directly. Moreover, complex Lagrangian correspondence in this setting is always well-defined \Cite{LagCor}.
\end{rem}

Let $Y$ be a $T$-Hamiltonian K\"ahler manifold, then we can obtain an LG model $(Y^\vee,W)$, and a holomorphic map $W:Y^\vee\to \check T_\C$. We can apply the construction above to $W$ and $\pi$, and obtain a complex Lagrangian $C_Y$ in $T^*\check T_\C$.

\begin{exmp}\label{exam1}
 	Let $Y=\mathbb{P}^1$ with the standard toric $T=S^1$-action. The Hori-Vafa mirror of $Y$ is $(\Cx, W_Y=w+1/w)$, and the fibration $\Cx\to \Cx=\check T_\C$ is the identity map. In this case, $C_Y$ is simply the graph of $dW_Y$. If $z$ and $h$ are the base and fiber coordinates of $T^*\check T_\C=T^*\Cx$, then $C_Y$ is defined by the equation
 	\[h=z-\frac{1}{z}.\]
 \end{exmp}
 
  \begin{exmp}\label{exam2}
 	We still let $Y=\mathbb{P}^1$, but assume $T=S^1$ is the double cover of $S^1$ in \Cref{exam1}. In other words, $\{\pm 1\}\subset S^1$ lies in the kernel of the action. We still have $(Y^\vee,W_Y)=(\Cx,w+1/w)$, but $Y^\vee\to \check T_\Cx$ is the square map $w\mapsto z=w^2$. In this case $\widetilde C$ is isomorphic to $\Cx$ (with coordinate $w$), and the map $C\to T^*\Cx$ is given by
 	\[z=w^2,h=\frac{1}{2}(w-w^{-1}).\]
 \end{exmp}

When $X=T^*[\pt/G_\C]$, $X^!$ is the BFM space introduced by \cite{BFM}. $X^!$ is a smooth affine symplectic variety whose coordinate ring is the subalgebra of $W_G$-invariant elements of the algebra $A_{\check G}$ described below. Let $\Phi=\Phi(\check G,\check T)$ be the set of roots for $(\check G,\check T)$. Each $\alpha\in \Phi$ is considered as a function on $\check T_\C$, we write $h_{\alpha}$ for the corresponding element in $\lt_\C$, and $s_\alpha$ for the corresponding simple reflection. We let $h^\vee_{\alpha}\in \lt^*$ be such that $s_\alpha(h)=h- h^\vee_\alpha(h) h_\alpha$ for $h\in \lt_\C$

\begin{df}\label{BFM space}
    \begin{enumerate}\item[] 
    \item The affine blowup algebra $A^\circ_{\check G}$ for $\check G$ is \[A^\circ_{\check G}=\C\left[T^*\check T_\C,\left\{\frac{\alpha-1}{h_{\alpha}^\vee}\right\}_{\alpha\in \Phi}\right].\]
    \item $A_{\check G}$ is defined to be the intersection 
    \[A_{\check G}=\bigcap_{\alpha\in \Phi}\left(A^\circ_{\check G}\left[\frac{1}{h_{\alpha'}}\right]_{\alpha'\in \Phi\setminus \{\alpha\}}\right).\]
\end{enumerate}
\end{df}

Let $Y$ be a $G$-Hamiltonian manifold, and recall that we define $Y^\vee$ as the moduli space of $T$-invariant Lagrangians branes. It is clear that $Y^\vee$ has a natural $W_G=N(T)/T$ action, so that $\pi:Y^\vee\to \check T_\C$ is $W_G$-equivariant and $W$ is $W_G$-invariant. This implies $C_Y\subset T^*\check T_\C$ is $W_G$-invariant. We consider the diagram
\begin{equation*}
    \begin{tikzcd}
        T^*\check T_\C\ar[rd]&&\ar[dl]BFM(\check G)\\
    &T^*\check T_\C/W_G
    \end{tikzcd}
\end{equation*}
We proposed that $C_Y/W_G\subset T^*\check T_\C/W_G$ can be lifted to a complex Lagrangian in $BFM(\check G)$.

\begin{exmp}
 	In \Cref{exam1}, the $S^1$-action on $Y$ naturally extends to a $G=\operatorname{PSU}(2)$ action.
  As we have already seen, $C'\cong C\subset T^*\Cx$ is defined by equation $h=z-1/z$. $C$ is clearly $W_G=\Z_2$-invariant, and we have an extended ring homomorphism
    \[A^\circ_{\check G}=\C\left[z^{\pm1},h,\frac{z^2-1}{h}\right]\to\frac{\C\left[z^{\pm1},h\right]}{(h-(z-z^{-1}))}\]
    sending $\frac{z^2-1}{h}$ to $z$, which induces the lift $C\to \spec A^\circ_{\check G}$.
 \end{exmp}
 
  \begin{exmp}
 	In \Cref{exam1}, the $S^1$-action on $Y$ naturally extends to a $G=\operatorname{SU}(2)$ action.
As we have already seen $\widetilde C$ is isomorphic to $\Cx$ (with coordinate $w$), and we have a the map $\widetilde C\to T^*\Cx$ is given by $z=w^2,h=\frac{1}{2}(w-w^{-1})$. This map is $W_G=\Z_2$-equivariant, and we also have an extended ring homomorphism
 	\[A^\circ_{\check G}=\C\left[z^{\pm 1},h,\frac{z-1}{h}\right]\to\C\left[w^{\pm1}\right]\]
 	sending $\frac{z-1}{h}$ to $2w$, which induces the lift $C\to \spec A^\circ_{\check G}$.
 \end{exmp}

In \cite{paper2}, we prove the following theorems, and show that they imply the existence of a lift of $C_Y$ to $BFM(\check G)$.

\begin{thm}\label{Floerthm1}
Let $Y$ be a $G$-Hamiltonian manifold, $L\subset Y$ be a compact, connected, $T$-relatively spin, Lagrangian submanifold, and $E$ be a unitary flat line bundle on $L$. If $HF^\bullet((L,E),(L,E))\neq 0$, then $\fz(L,E)\triangleq\exp(-\mu_T(L))\hol_T(E)\in T_\C^\vee$ lies in the center for the Langlands dual group $ \check G$.
\end{thm}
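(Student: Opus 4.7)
The plan is to reduce the claim to the rank-one case (one coroot at a time) and then exploit the interplay between the $G$-action on $Y$ and the $T$-equivariant Lagrangian Floer theory of $L$.

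First, I would invoke the standard criterion that $z\in\check T_\C$ lies in $Z(\check G)$ if and only if $\alpha(z)=1$ for every root $\alpha\in\Phi(\check G,\check T)$. Each such $\alpha$ is a coroot $\alpha^\vee$ of $G$ and determines a rank-one subgroup $G_\alpha\subset G$ (isomorphic to $SU(2)$ or $SO(3)$) with maximal torus $T_\alpha=T\cap G_\alpha$. Restricting the $G$-Hamiltonian structure on $Y$ to $G_\alpha$ leaves $L$, $E$, and $HF^\bullet((L,E),(L,E))$ unchanged, while the $T_\alpha$-moment map is $\mu_G$ composed with $\lig^\ast\to\lt_\alpha^\ast$, and $\hol_{T_\alpha}(E)$ is the restriction of $\hol_T(E)$. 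Hence $\fz_{G_\alpha}(L,E)$ is the image of $\fz_G(L,E)$ under $\check T_\C\to\check T_{\alpha,\C}$, reducing the problem to the case $\mathrm{rk}\,G=1$.

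Second, in the rank-one case, I would set up $T$-equivariant Lagrangian Floer cohomology $HF^\bullet_T((L,E),(L,E))$ as a module over $H^\bullet_T(\mathrm{pt})=\C[\lt_\C]$. Non-equivariantly, nonvanishing of $HF^\bullet((L,E),(L,E))$ says that $(L,E)$ is a critical point of the Landau-Ginzburg potential $W$ on $Y^\vee$, so $\fz(L,E)=\pi^\vee(L,E)\in\check T_\C$ lies in the image of the critical locus under the Teleman map. The $G$-action induces a Weyl-group action on $Y^\vee$ under which $\pi^\vee$ is $W_G$-equivariant, and more importantly it should induce an action of $\lig_\C$ on the $T$-equivariant Floer cohomology, built from counts of holomorphic discs weighted by the moment-map coordinates coming from the root subgroups of $G$. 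Tracing through the action of a raising or lowering operator for the root $\alpha$, together with a localization with respect to the equivariant parameter along $\lt_{\alpha,\C}$, should produce, for every nonzero class $x\in HF^\bullet((L,E),(L,E))$, a relation of the form $(\alpha(\fz(L,E))-1)\cdot x=0$, forcing $\alpha(\fz(L,E))=1$.

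The main obstacle is the last step: constructing a $\lig_\C$-action on equivariant Floer cohomology that sees the factor $\alpha(\fz)-1$. This requires either a $T$-equivariant deformation of the Fukaya category of $Y$ together with an open-closed map into a suitable variant of $G$-equivariant quantum cohomology and a localization argument, or a direct geometric analysis of moduli of holomorphic discs bounding $L$ with insertions coming from the $G$-moment map. The rigorous implementation, which underlies Theorem~\ref{Floerthm1} and its generalizations, is carried out in our companion work \cite{paper2}.
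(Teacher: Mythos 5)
Your reduction is sound as far as it goes: the center of $\check G$ is $\bigcap_{\alpha}\ker\alpha$ over the roots $\alpha$ of $(\check G,\check T)$, these are exactly the coroots of $G$, and restricting the Hamiltonian action to the corresponding rank-one subgroup $G_\alpha$ changes neither $L$, $E$, nor $HF^\bullet((L,E),(L,E))$, while sending $\fz_G(L,E)$ to its image in $\check T_{\alpha,\C}$. So the theorem does reduce to proving $\alpha(\fz(L,E))=1$ for each coroot $\alpha$ of $G$. The problem is that everything after this reduction is not a proof. The entire content of the theorem is concentrated in producing the relation $(\alpha(\fz(L,E))-1)\cdot x=0$ for a nonzero Floer class $x$, and your proposal only asserts that a $\lig_\C$-action on $T$-equivariant Floer cohomology, built from unspecified counts of holomorphic discs with moment-map insertions, ``should produce'' this relation after a localization. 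No such action is constructed, no moduli spaces are defined, and no reason is given why the putative raising/lowering operators would output the factor $\alpha(\fz)-1$ rather than, say, $\alpha(\fz)-c$ for some disc-counting constant $c$. Your closing sentence then defers the rigorous implementation to \cite{paper2}; since the paper under review also states this theorem without proof and cites \cite{paper2} for it, that citation is circular as a justification, and there is no in-paper argument against which your sketch could be favorably compared.

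If you want a self-contained mechanism, the decisive geometric fact about a coroot $\alpha^\vee$ (as opposed to an arbitrary cocharacter of $T$) is that the loop $t\mapsto\alpha^\vee(e^{2\pi it})$ is contractible inside $G_\alpha$ (it lifts to the simply connected group $SU(2)$), hence yields a contractible loop of Hamiltonian diffeomorphisms of $Y$ preserving the $T$-invariant Lagrangian $L$. The associated Seidel-type automorphism of $HF^\bullet((L,E),(L,E))$ must therefore be the identity, while on a $T$-invariant brane it acts, up to quantum correction terms, by multiplication by $\exp(-\langle\mu_T(L),\alpha^\vee\rangle)\cdot\hol_T(E)(\alpha^\vee)=\alpha^\vee(\fz(L,E))$; nonvanishing of the Floer cohomology would then force $\alpha^\vee(\fz(L,E))=1$. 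Some argument of this concreteness --- together with control of the correction terms, which is presumably the technical content of \cite{paper2} --- is what your sketch is missing; without it the proposal records the correct target of the proof but not a proof.
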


\begin{thm}\label{Floerthm2}
Let $Y$ and $(L,E)$ be as above. Suppose the minimal Maslov number of $L$ is at least $2$, and $H^\bullet(L)$ is generated in degree $1$. Let $\alpha\in\Phi$, if $dW\in H_1(L)$ lies in the image of $H_1(T)^{s_\alpha}$ under the natural map
\[H_1(T)\to H_1(L),\]
then $\fz(L,E)\in \ker \alpha$.
\end{thm}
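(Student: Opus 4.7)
The idea is to refine the proof of Theorem \ref{Floerthm1} by tracking the specific root $\alpha$ rather than passing directly to the center. Theorem \ref{Floerthm1} effectively shows $\fz(L,E) \in \bigcap_{\beta\in\Phi}\ker\beta$; here I want to isolate the single condition $\fz(L,E) \in \ker \alpha$ out of the $s_\alpha$-symmetry encoded in the hypothesis on $dW$.

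My plan is the following. First, I would use the hypotheses (minimal Maslov $\geq 2$ and $H^\bullet(L)$ generated in degree $1$) to place us in the framework where the Fukaya $A_\infty$-structure on $L$ is unobstructed in the sense of FOOO, so that the disk potential $W$ is a well-defined function on the formal moduli space $Y^\vee$ of pairs $(L',E')$ near $(L,E)$, and all relevant bulk-boundary operations are detected by degree-one classes. Under the SYZ description, the Teleman map $\pi\colon Y^\vee\to\check T_\C$ is given by $(L',E') \mapsto \exp(-\mu_T(L'))\hol_T(E') = \fz(L',E')$, and its derivative at $(L,E)$ is identified with the transpose of the orbit map $\phi\colon H_1(T;\C)\to H_1(L;\C)$. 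Hence the condition "$dW$ lies in the image of $H_1(T)^{s_\alpha}$ under $\phi$" translates into the assertion that $dW$, viewed as a covector on $\check T_\C$ pulled back via $\pi$, is $s_\alpha$-invariant; equivalently, the $h_\alpha$-component of $dW$ vanishes.

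Second, I would exploit the $s_\alpha$-action on $Y^\vee$ itself. Since $L$ is only $T$-invariant and not $G$-invariant, $s_\alpha$ acts naturally on the moduli of $T$-invariant Lagrangian branes by applying a lift of $s_\alpha$ to $L$ and pulling back $E$; the Teleman map $\pi$ is $s_\alpha$-equivariant and the disk potential $W$ is $s_\alpha$-invariant. The hypothesis $HF^\bullet((L,E),(L,E))\neq 0$, combined with the refined form of $dW$, should then feed through the $T$-equivariant closed-open map $H_T^\bullet(Y)\to HF^\bullet((L,E),(L,E))$ to produce a relation
\begin{equation*}
\bigl(\alpha(\fz(L,E)) - 1\bigr)\cdot u \;=\; 0
\end{equation*}
in $HF^\bullet((L,E),(L,E))$, where $u$ is a unit supplied by Floer non-triviality. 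This forces $\alpha(\fz(L,E)) = 1$, i.e., $\fz(L,E)\in\ker\alpha$.

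The main obstacle is establishing this last identity rigorously; it is essentially a refined equivariant divisor equation adapted to the $\alpha$-direction, expressing the $h_\alpha$-component of $dW$ as $\alpha(\fz(L,E)) - 1$ up to a nonzero disk-counting prefactor. Proving it requires isolating the contribution of holomorphic disks whose boundary winds around the $T$-orbits in the $h_\alpha$-direction, and controlling higher-order $A_\infty$ corrections. The assumptions on minimal Maslov number and degree-one generation are precisely what is needed to reduce the computation to the level of $H^1(L;\C)$ and to suppress the higher obstructions, so that the identity becomes a divisor-type consequence of the $W_G$-equivariance of the pair $(\pi,W)$ on $Y^\vee$. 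Once this ingredient is in place, the argument runs in parallel with that of Theorem \ref{Floerthm1}, with the Weyl reflection $s_\alpha$ refining "central" to "annihilated by $\alpha$".
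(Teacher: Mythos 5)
The paper does not actually prove Theorem \ref{Floerthm2}: both it and Theorem \ref{Floerthm1} are quoted from the authors' earlier work \cite{paper2}, so there is no in-text argument to compare your proposal against. Judged on its own terms, your write-up is a plan rather than a proof. The entire content of the theorem is concentrated in the identity $\bigl(\alpha(\fz(L,E))-1\bigr)\cdot u=0$, which you state and then explicitly defer ("the main obstacle is establishing this last identity rigorously"). Identifying the needed divisor-type relation in the $h_\alpha$-direction is a sensible first step, and your translation of the hypothesis (that $dW$ coming from $H_1(T)^{s_\alpha}$ means the $h_\alpha^\vee$-component of the corresponding cotangent vector $h$ vanishes, which is exactly what is needed to make $(\alpha-1)/h_\alpha^\vee$ regular on $C_Y$ in \Cref{BFM space}) is the right geometric reading. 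But without a construction of the relevant equivariant closed-open or divisor-axiom input, a verification that the $\alpha$-directional contribution is the claimed one, and control of the higher $A_\infty$ corrections, nothing has been proved.

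There is also a structural problem with how you power the argument. You invoke $HF^\bullet((L,E),(L,E))\neq 0$ to supply the unit $u$, but that is the hypothesis of Theorem \ref{Floerthm1}, not of Theorem \ref{Floerthm2}; the latter replaces it with the condition on the location of $dW$. Indeed, if Floer non-vanishing were assumed here, the conclusion would be an immediate corollary of Theorem \ref{Floerthm1}, since the center of $\check G$ is contained in $\ker\alpha$ for every root $\alpha$, and the extra hypotheses on the minimal Maslov number and on degree-one generation of $H^\bullet(L)$ would be idle. So the real task is to extract $\alpha(\fz(L,E))=1$ from those hypotheses together with the position of $dW$, without Floer non-vanishing, and your proposal does not engage with that. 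A secondary caveat: the Weyl group acts on $Y^\vee$ by permuting branes, so while $W$ is $W_G$-invariant and $\pi^\vee$ is $W_G$-equivariant globally, $s_\alpha$ does not fix the particular point $(L,E)$; converting this global equivariance into a pointwise identity in $HF^\bullet((L,E),(L,E))$ of the fixed brane requires an argument you have not supplied.
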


\begin{rem}
Let $V$ be a $G$-representation. \citeauthor{BFN} (\cite{BFN}) gives a construction of the Coulomb branch $Coul(G,V)$, which is (the affinization of) the 3d mirror of $T^*[V/G_\C]$. In their construction, they defined a space $R_{G,V}$, which is a closed subvariety of an infinite dimensional vector bundles over the affine Grassmannian $\operatorname{Gr}_{G_\C}$ for $G_\C$. The coordinate ring of the Coulomb branch $Coul(G,V)$ is the $G$-equivariant Borel-Moore homology $H_{BM}^G(R_{G,V})$\footnote{More precisely, it should be the $G_\C[[t]]$-equivariant Borel-Moore homology} of $R_{G,V}$.

There is a commutative diagram
\[\begin{tikzcd}
    H_{BM}^T(R_{G,V})\ar[r]&H_{BM}^T(R_{G})\\
    H_{BM}^T(R_{T,V})\ar[r]\ar[u]&H_{BM}^T(R_{T})\ar[u].
\end{tikzcd}\]

Suppose $Z$ is a $G$-Hamiltonian K\"ahler manifold, $(Z^\vee,W_Z)$ is its mirror, and $\pi_Z^\vee:Z^\vee\to \check T_\C$ is the Teleman map.

If we let $Y=V\times Z$, and $G$ actions on $Y$ with the diagonal action, then the mirror of $Y$ is $(Y^\vee=Z^\vee\times (\Cx)^l,W_Y=W_Z+\sum_{i=1}^l x_i)$ with Teleman map $\pi_Y^\vee(z,x)=\pi_Z^\vee\cdot \check \rho(x)$. We have the following commutative diagram
\[\begin{tikzcd}
    (Z^\vee,W_Z)\ar[r,symbol=\cong]\ar[d]&(Y^\vee,W_Z)/(\Cx)^l\ar[d]\\
    \check T_\C\ar[r,symbol=\subset]& \C^l\times T_\C^\vee /(\Cx)^l.
\end{tikzcd}
\]
Therefore, the problem of non-Abelian lifting reduces to the case when $V=0$.
\end{rem}

\subsection{Gluing}\label{Gluingintro}
In \Cref{Gluing}, we describe how to obtain the 3d mirror of $X=[T^*V\sslash T_\C]$ by gluing copies of 3d mirror of $T^*\check T_\C$, namely $T^*\check T_\C$. We will see an example in this subsection. We let $T_\C=\Cx$, and $V=\C$ be the standard representation.

Let $Y$ be a $S^1$-Hamiltonian K\"ahler manifold, then its mirror $(Y^\vee,W)$ should posses a holomorphic map $\pi^\vee :Y^\vee\to \Cx$. As discussed in \Cref{nonabelian}, we can obtain a complex Lagrangian $C_Y\subset T^*\Cx\cong \Cx_z\times \C_h$. By definition, $(z_1,h_1)\in C_Y$ if and only if there exists $p\in Y^\vee$ such that the equality
\begin{equation}\label{C_Y}
    z_1=\pi^\vee,\ dW=h_1d\log \pi^\vee
\end{equation}
holds at $p$.

On the other hand, if we consider the diagonal action of $S^1$ on $Y\times V\cong Y\times \C$, the Teleman map would be $\pi^\vee\cdot x :Y^\vee\times \Cx_x$, and $(z_2,h_2)\in C_{Y\times V}$ if and only if there exists $p\in Y^\vee$ and $x\in \Cx$ such that the equality
\begin{equation*}
    z_2=x\pi^\vee,\ dW+dx=h_2(d\log x\pi^\vee)
\end{equation*}
holds at $(p,x)$. The above equation can be simplified as $h_2=x$ and
\begin{equation}\label{C_YxC}
    z_2=h_2\pi^\vee ,\ dW(p)=h_2d\log \pi^\vee.
\end{equation}
holds at $p$. Comparing \Cref{C_Y,C_YxC}, we can glue $C_{Y}$ and $C_{Y\times V}$ by identifying $(z_1,h_1)$ with $(z_2,h_2)$ when
\begin{equation}\label{BFNC2}
    h_1=h_2,z_2=h_1z_1.
\end{equation}
Note that \Cref{BFNC2} is independent of $Y$, so we can use the same formula to glue two copies of $T^*\Cx$, the space we obtain would be $\C^2-(0,0)$, which is a subset of $T^*\C^2$. In \Cref{Gluing}, we will use the more abstract language of Lagrangian correspondence to derive the gluing formula, but the essential idea is the same.

\section{SYZ construction for the mirror 3d branes}\label{Aurouxpicture}
	In this section, we explain the 3d analogue of the SYZ type transform of 3d branes
 \[
 \Br(T^*C;C)\to \Br(T^*C^!;C^!)
 \]
 where $C$ is a toric stack, and $C^!$ is the Gale dual stack defined in \Cref{galedef}. We begin by recalling some Lie theorectic notions.

    Let $T_\C\cong (\Cx)^r$ be a complex torus with the (unique) maximal compact subgroup $T$, $\lt_\C$ (resp. $\lt$) be the Lie algebra of $T_\C$ (resp. $T$), and $\Lambda=\Lambda_T=2\pi i\lt_\Z\cong \pi_1(T)$ be the kernel of the exponential map $\exp:\lt\to T$. A character (resp. cocharacter) of $T$ is considered as an element of $\Hom_\Z(\lt_\Z, \Z)=\vcentcolon\lt_\Z^*$ (resp. $\Hom_\Z(\Z,\lt_\Z)=\lt_\Z$).  We fix an integral basis $\lambda_1,\lambda_2\,\dots,\lambda_r$ of $\Lambda$, and let $\ell_1,\ell_2,\dots,\ell_r\in \lt_\C^*=\Hom_\R(\lt,\C)$ be such that $\ell_i(\lambda_j)=\delta_{ij}$.

    If $\ell\in \lt_\C^*$, we write $d\ell$ for the corresponding left-invariant complex-valued 1-form on $T_\C$. We fix the holomorphic volume form
	\[\Omega_{T_\C}=i^r\bigwedge_{j=1}^rd\ell_j.\]
	
	If $z_1,\dots,z_r$ are the standard complex coordinates of $(\Cx)^r$, then the formula above reads
    \[\Omega_{(\Cx)^r}=\frac{1}{(2\pi i)^{r}} \bigwedge_{j=1}^r\frac{dz_j}{z_j}.\]


 We use the notation $\check T_\C$ to denote the dual group $\Hom_\Z(\Lambda,\Cx)$ of $T_\C$ with maximal compact subgroup $ \check T=\Hom_\Z(\Lambda,\operatorname{U}(1))$. The Lie algebra $\check \lt_\C=\Hom_\Z(\Lambda,\C)$ of $\check T_\C$ is canonically identified with $\lt_\C^*$. Moreover, we have
\begin{align*}
    \Lambda_{\check T}&=\Hom_\Z(\lt_\Z,\Z)\\
    \check \lt_\Z &=\Hom_\Z(\Lambda_{T},\Z).
\end{align*}
On the other hand, the Lie algebra $\check \lt$ of $\check T$ and the dual of $\lt$ are defined as
\begin{align*}
     \check \lt&=\Hom_\Z(\Lambda,i\R).\\
    \lt^*&=\Hom_\R(\lt,\R)
\end{align*}
Each of them can be identified with a real subspace of $\lt_\C^*$ with $\check\lt=i\lt^*$ and $\lt_\C^*=\lt^*\oplus i\lt^*$.

 If a Lie group $G$ acts on a smooth manifold $M$ and $\xi$ is an element of the Lie algebra of $G$, we denote by $\xi^\sharp$ the fundamental vector field $p\mapsto \frac{d}{dt}\big|_{t=0}\exp(t\xi)\cdot p$ induced by $\xi$ on $M$.

	

We next describe a generalization of the SYZ construction \cite{SYZ} for Calabi-Yau manifolds, and \cite{AurouxComp} for log Calabi- Yau manifolds. 

 \subsection{Equivariant SYZ type transform}
	Let $Y$ be a K\"ahler manifold of complex dimension $d$, and $T$ acts on $Y$ preserving the K\"ahler structure and has a moment map $\mu: Y\to \lt^*$\footnote{Our convention for moment map is $\omega(\cdot,\xi^\sharp)=\langle d\mu,\xi\rangle$ for $\xi\in \lt$.}. In other words, we have a 3d brane $[Y\sslash T]\to [\pt/T_\C]$ in $T^*[\pt/T_\C]$.
 
 We denote $\omega$, $J$, and $g$ the $T$-invariant K\"ahler form, complex structure, and Riemannian metric of $Y$. We may use the notations $\omega_Y$, $J_Y$, etc. if we want to specify their dependences on $Y$. Suppose $\Omega$ is a $T$-invariant meromorphic volume form on $Y$ so that $D\equiv -\operatorname{div}(\Omega)\geq 0$.
	\begin{df}\label{dfSlag}
		A connected compact Lagrangian submanifold $L\subset Y\setminus D$ is special if $\operatorname{Im}(\Omega)|_{L}=0$.
	\end{df}
	
    Let $L\subset Y\setminus D$ be a special Lagrangian and $\psi=\psi_L:L\to \R_{>0}$ be a function so that $\frac{1}{\psi}\operatorname{Re}(\Omega_Y)|_L$ is equal to the Riemannian volume form $\operatorname{vol}_L$ of $L$.
	\begin{df}
		A 1-form $\alpha\in \Omega^1(L,\R)$ is $\psi$-harmonic if $d\alpha=d(\psi*\alpha)=0$.
	\end{df}

    \begin{lem}\label{G-inv}
		Let $L\subset Y\setminus D$ be a $T$-invariant special Lagrangian, and $\psi:L\to \R$ be the $T$-invariant function such that
		\[\operatorname{Re}(\Omega_Y)|_L=\psi\operatorname{vol}_L,\]
		then every $\psi$-harmonic 1-form on $L$ is $T$-invariant.
	\end{lem}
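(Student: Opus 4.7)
The plan is to exploit the $T$-invariance of the K\"ahler structure and of $\Omega_Y$, which together force $\psi$ to be $T$-invariant as well. Indeed, since $\operatorname{Re}(\Omega_Y)|_L$ and $\operatorname{vol}_L$ are both $T$-invariant top forms on $L$, their ratio $\psi$ is $T$-invariant. Consequently, for any $\xi \in \lt$ the fundamental vector field $\xi^\sharp$ on $L$ is a Killing field, and the Lie derivative $\mathcal{L}_{\xi^\sharp}$ commutes with $d$, with the Hodge star $*$, and with multiplication by $\psi$.

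Given a $\psi$-harmonic 1-form $\alpha$ on $L$, I will show $\mathcal{L}_{\xi^\sharp}\alpha = 0$ for every $\xi \in \lt$; since $T$ is connected this yields the $T$-invariance of $\alpha$. First, from the commutation properties above, $\mathcal{L}_{\xi^\sharp}\alpha$ is again $\psi$-harmonic: $d\,\mathcal{L}_{\xi^\sharp}\alpha = \mathcal{L}_{\xi^\sharp}d\alpha = 0$ and $d(\psi * \mathcal{L}_{\xi^\sharp}\alpha) = \mathcal{L}_{\xi^\sharp}\,d(\psi * \alpha) = 0$. Moreover, Cartan's magic formula together with $d\alpha = 0$ gives $\mathcal{L}_{\xi^\sharp}\alpha = d(i_{\xi^\sharp}\alpha)$, so $\mathcal{L}_{\xi^\sharp}\alpha$ is exact. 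It therefore suffices to prove that any exact $\psi$-harmonic 1-form on $L$ vanishes.

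The crux is this last step. Writing such a form as $df$, closedness of $L$ and Stokes' theorem give $\int_L d(f\,\psi * df) = 0$, which expands as
\[
0 \;=\; \int_L df \wedge \psi * df \;+\; \int_L f\, d(\psi * df) \;=\; \int_L \psi\, |df|^2\, \operatorname{vol}_L,
\]
using $d(\psi * df) = 0$ and $df\wedge * df = |df|^2\operatorname{vol}_L$. Since $\psi > 0$ this forces $df = 0$, hence $\mathcal{L}_{\xi^\sharp}\alpha = 0$ as required.

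There is no genuine obstacle here: the only conceptual point is recognizing that $\psi$-harmonicity is preserved under the $T$-action because $\psi$ and the Riemannian metric are both $T$-invariant, so that the weighted codifferential $d \circ (\psi\cdot *)$ is $T$-equivariant. The vanishing of exact $\psi$-harmonic forms is a direct weighted analogue of the standard Hodge-theoretic observation that a closed manifold has no nonzero exact harmonic forms, and the positivity of $\psi$ is exactly what is needed to run the integration-by-parts argument.
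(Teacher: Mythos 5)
Your proof is correct, but it follows a genuinely different route from the paper's. The paper proves the lemma by a conformal rescaling trick: it replaces $g|_L$ by the $T$-invariant metric $g_\psi=\psi^{\frac{2}{d-2}}g|_L$, observes that a $1$-form is $\psi$-harmonic precisely when it is harmonic for $g_\psi$, and then invokes the standard fact that harmonic forms on a compact manifold are invariant under a connected group of isometries (uniqueness of the harmonic representative in a cohomology class, plus triviality of the $T$-action on $H^1(L)$). You instead work infinitesimally and re-prove the relevant piece of weighted Hodge theory from scratch: you show $\mathcal{L}_{\xi^\sharp}\alpha$ is both $\psi$-harmonic (since $\xi^\sharp$ is Killing and $\psi$ is $T$-invariant) and exact (Cartan's formula), and then kill it by the integration-by-parts identity $\int_L \psi\,|df|^2\operatorname{vol}_L=0$, which uses exactly the positivity of $\psi$ and the closedness of $L$. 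The two arguments rest on the same underlying principle, but yours is self-contained and works uniformly in all dimensions, whereas the paper's rescaling exponent $\frac{2}{d-2}$ degenerates when $d=2$ (the Hodge star on $1$-forms is conformally invariant there); the paper's version, on the other hand, is a one-line reduction to a textbook fact. Your step identifying $\psi$ as $T$-invariant is fine but is already built into the statement of the lemma, so you could omit it.
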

	\begin{proof}
		Let $g_\psi$ be the $T$-invariant metric $\psi^{\frac{2}{d-2}}g|_L$ on $L$. Then a 1-form is $\psi$-harmonic if and only if it is harmonic for the $T$-invariant metric $g_\psi$, and is thus $T$-invariant. 
	\end{proof}

 Let $Y^\vee$ be the set of $\uL=(L,E)$, where $L\subset Y\setminus D$ is a $T$-invariant special Lagrangian submanifold and $E$ is an isomorphism class of unitary flat line bundles on $L$. We denote $NL$ the normal bundle of $L$ in $Y$. $Y$ has a differentiable manifold structure and the tangent space at a point $\uL\in Y^\vee$ can be identified with the vector space of pairs $(u,\alpha)$, where $u\in C^\infty(NL)$ is such that $\iota_u\omega$ is $\psi_L$-harmonic and $\alpha\in H^1(L;\R)$ (\cite{AurouxComp})\footnote{A closed 1-form $A$ on $L$ defines a flat $\operatorname{U}(1)$ connection $E=d-iA$ on $L$, and its holonomy on a closed curve $\gamma: S^1\to L$ is $\exp(i\int_\gamma A)$.}.

	\begin{df}\label{Mpt}
    Suppose that for each $\uL\in Y^\vee$, the sum
    \begin{equation}\label{Wsum}
			\sum_{\substack{\beta\in H_2(Y,L)\\\beta\cdot D=1}}n_\beta(L)\exp (-\int_\beta\omega)\operatorname{hol}_{\partial \beta}(E)
		\end{equation}
  converges, where $n_\beta(L)$ is the number of holomorphic discs in the class $\beta$. We define $W:Y^\vee\to \C$ to be the function that assigns each $\uL\in Y^\vee$ the value given by (\ref{Wsum}).
	\end{df}

	\begin{rem}
        We ignore the various assumptions to make $n_\beta(L)$ well-defined and the issues of convergence and wall crossing of $W$ in this paper. The readers can refer to \cite{AurouxComp, FOOOI,FOOOII} for more technical details. In this subsection, we will only prove statements concerning individual terms in (\ref{Wsum}), and the statements for $W$ follow as corollaries once the sum (\ref{Wsum}) is well-defined.
	\end{rem}
 
    It is known that $Y^\vee$ is a K\"ahler manifold and (each term in) $W$ is holomorphic. Furthermore, $Y^\vee$ is Calabi-Yau if we restrict $L$ to be special Lagrangian tori. These facts  follow from \Cref{complexstr}, \Cref{holoW} and \Cref{CalabiYau} below, whose proof can be found in \cite{AurouxComp}.
	
	\begin{lem}\label{complexstr}
		$Y^\vee$ has an integrable complex structure $J^\vee$. At a tangent space $T_{\uL} Y^\vee$, it is defined as 
		\[
		J^\vee:(u,\alpha)\mapsto(a,[-\iota_u\omega])
		\]
		where $a \in C^\infty(NL)$ is the unique section such that $\iota_a\omega$ is $\psi$-harmonic and represents the cohomology class $\alpha$. 
	\end{lem}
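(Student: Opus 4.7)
I plan to verify three items in turn: well-definedness of $J^\vee$, the identity $(J^\vee)^2=-\id$, and integrability. Well-definedness follows because the Lagrangian condition makes $u\mapsto\iota_u\omega$ a pointwise isomorphism $NL\xrightarrow{\sim}T^*L$, so sections $u$ with $\iota_u\omega$ being $\psi$-harmonic correspond bijectively to $\psi$-harmonic real $1$-forms on $L$; as in the proof of \Cref{G-inv}, these are ordinary harmonic forms for the $T$-invariant metric $g_\psi=\psi^{2/(d-2)}g|_L$, so Hodge theory provides a unique such representative in each class of $H^1(L;\R)$. The identity $(J^\vee)^2=-\id$ is then one line: writing $x=[\iota_u\omega]$, applying $J^\vee$ twice sends $(u,\alpha)\mapsto(a,-x)\mapsto(-u,-\alpha)$, since the section whose $\psi$-harmonic contraction represents $-x$ is $-u$ and $[\iota_a\omega]=\alpha$ by construction.

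For integrability, I would adopt the standard Auroux-style strategy of exhibiting $J^\vee$-holomorphic functions that serve locally as coordinates. For $\beta\in H_2(Y,L;\Z)$ set
\[
z_\beta(\uL)=\exp\!\bigl(-\textstyle\int_\beta\omega\bigr)\,\hol_{\partial\beta}(E).
\]
Using $\mathcal{L}_u\omega=d\iota_u\omega$ and Stokes' theorem, the first variation is
\[
d\log z_\beta(u,\alpha)=-\int_{\partial\beta}\iota_u\omega+i\int_{\partial\beta}\alpha,
\]
which depends only on the cohomology classes $[\iota_u\omega]$ and $\alpha$. Substituting $J^\vee(u,\alpha)=(a,[-\iota_u\omega])$ and using $[\iota_a\omega]=\alpha$ gives $-\int_{\partial\beta}\alpha-i\int_{\partial\beta}\iota_u\omega=i\,d\log z_\beta(u,\alpha)$, so every $z_\beta$ is $J^\vee$-holomorphic. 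Picking $\beta_1,\dots,\beta_{b_1(L)}$ whose boundaries span a finite-index subgroup of $H_1(L;\Z)_{\mathrm{free}}$, the nondegenerate pairing $H_1(L;\Z)\times H^1(L;\R)\to\R$ and the tangent space description above force $\{d\log z_{\beta_j}\}$ to be $\C$-linearly independent at $\uL$. Hence $\log z_{\beta_1},\dots,\log z_{\beta_{b_1(L)}}$ supply local holomorphic coordinates, and $J^\vee$ is integrable.

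The main obstacle I expect lies on the analytic rather than the formal side. One must invoke McLean's unobstructedness theorem, in the $T$-equivariant form underlying \Cref{G-inv}, to integrate an infinitesimal deformation $u$ to a genuine family of special Lagrangians, together with continuity of flat-bundle holonomy under smooth deformations of $(L,E)$, in order that $z_\beta$ be smooth on $Y^\vee$. A secondary nuisance is that $\partial\colon H_2(Y,L;\Z)\to H_1(L;\Z)$ need not be surjective; when it is not, the coordinates $z_\beta$ should be read on a suitable covering of a neighborhood of $\uL$ (equivalently, allowing $\beta$ to be a chain with prescribed boundary), which leaves the integrability argument unaffected.
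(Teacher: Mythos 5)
Your argument is correct and is essentially the argument of \cite{AurouxComp}, which is exactly where the paper defers the proof of this lemma: well-definedness via the conformal rescaling $g_\psi$ as in \Cref{G-inv}, $(J^\vee)^2=-\id$ from uniqueness of $\psi$-harmonic representatives, and integrability by exhibiting the functions $z_\beta$ as local holomorphic coordinates through the first-variation formula $d\log z_\beta(u,\alpha)=-\int_{\partial\beta}\iota_u\omega+i\int_{\partial\beta}\alpha$. Your caveats about smoothness of the moduli space (McLean-type unobstructedness in the $\psi$-harmonic setting) and about $\partial:H_2(Y,L;\Z)\to H_1(L;\Z)$ possibly failing to be surjective are the right ones and are handled the same way in the reference.
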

	\begin{lem}\label{holoW}
		For each $\beta\in H_2(Y,L;\Z)$,
		\[z_\beta(\uL)=\exp (-\int_\beta\omega)\hol_{\partial \beta}(E)\]
		is a holomorphic function. In particular, if the sum (\ref{Wsum}) converges absolutely, then $W$ is holomorphic.
	\end{lem}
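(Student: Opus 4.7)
The plan is to show that the logarithmic derivative $d\log z_\beta$ is complex-linear with respect to $J^\vee$ by computing its value on an arbitrary tangent vector $(u,\alpha) \in T_{\uL}Y^\vee$ and then checking the Cauchy-Riemann condition $d\log z_\beta \circ J^\vee = i \cdot d\log z_\beta$. Since $z_\beta$ is locally nowhere vanishing, holomorphicity of $\log z_\beta$ is equivalent to holomorphicity of $z_\beta$, so working with the logarithm is legitimate.

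First I would compute each of the two contributions to $d\log z_\beta(u,\alpha)$ separately. For the symplectic area term, I consider a smooth one-parameter family $L_t$ of Lagrangian deformations with $L_0=L$ and infinitesimal generator $u\in C^\infty(NL)$. Choosing a representing disc $\beta_t$ whose boundary sweeps $L_t$, Stokes's theorem applied to the cylinder traced out by $\partial\beta_t$ gives
\[
\frac{d}{dt}\Bigr|_{t=0}\int_{\beta_t}\omega \;=\; \int_{\partial\beta}\iota_u\omega,
\]
using $d\omega=0$. For the holonomy term, viewing the infinitesimal variation of $E$ as a closed one-form $\alpha\in H^1(L;\R)$ (i.e.\ the connection changes by $-i\alpha$), the standard formula for the variation of holonomy around the loop $\partial\beta$ yields $i\int_{\partial\beta}\alpha$. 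Combining,
\[
d\log z_\beta(u,\alpha) \;=\; -\int_{\partial\beta}\iota_u\omega \;+\; i\int_{\partial\beta}\alpha.
\]

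Next I verify the Cauchy-Riemann condition. By \Cref{complexstr}, $J^\vee(u,\alpha)=(a,[-\iota_u\omega])$ where $a\in C^\infty(NL)$ is the unique section with $\iota_a\omega$ being $\psi$-harmonic and representing the class $\alpha\in H^1(L;\R)$. Hence
\[
d\log z_\beta\bigl(J^\vee(u,\alpha)\bigr) \;=\; -\int_{\partial\beta}\iota_a\omega \;+\; i\int_{\partial\beta}(-\iota_u\omega).
\]
Since $\iota_a\omega$ is closed and represents the same de Rham class as $\alpha$, its integral over the cycle $\partial\beta$ equals $\int_{\partial\beta}\alpha$. Therefore
\[
d\log z_\beta\bigl(J^\vee(u,\alpha)\bigr) \;=\; -\int_{\partial\beta}\alpha - i\int_{\partial\beta}\iota_u\omega \;=\; i\left(-\int_{\partial\beta}\iota_u\omega + i\int_{\partial\beta}\alpha\right) \;=\; i\cdot d\log z_\beta(u,\alpha),
\]
which is exactly holomorphicity of $z_\beta$.

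The main conceptual subtlety, which I would want to be careful about, lies in the first variational formula: one must ensure that the choice of smoothly varying representatives $\beta_t$ does not affect the computation (it doesn't, because $\omega$ is closed) and that the deformation $u$ of $L$ is genuinely integrable to a family of Lagrangians — this is guaranteed precisely by the condition that $\iota_u\omega$ is closed (in fact $\psi$-harmonic). For the last sentence of the lemma, once every $z_\beta$ is holomorphic, absolute convergence of $\sum_\beta n_\beta(L) z_\beta$ on compact subsets transfers holomorphicity to $W$ by the standard Weierstrass argument.
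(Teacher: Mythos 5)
Your argument is correct and is essentially the standard one: the paper itself does not spell out a proof but defers to \cite{AurouxComp}, whose Lemma on the holomorphicity of $z_\beta$ proceeds exactly as you do, computing $d\log z_\beta(u,\alpha)=\int_{\partial\beta}(-\iota_u\omega+i\alpha)$ via the first variation of area and holonomy and then checking complex-linearity against $J^\vee(u,\alpha)=(a,[-\iota_u\omega])$ using that $\iota_a\omega$ and $\alpha$ are cohomologous on the cycle $\partial\beta$. The only (minor) point to keep in mind is that for the final sentence one wants locally uniform (not merely pointwise absolute) convergence to invoke Weierstrass, a subtlety the paper itself explicitly sets aside.
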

	
	\begin{df}
		Given $(u_1,\alpha_1), (u_2,\alpha_2)\in T_{\uL}Y^\vee$, we define
		\begin{equation}\label{symplectic}
			\omega^\vee((u_1,\alpha_1),(u_2,\alpha_2))=\int_L\alpha_2\wedge\iota_{u_1}\operatorname{Im}(\Omega)-\alpha_1\wedge\iota_{u_2}\operatorname{Im}(\Omega).
		\end{equation}
		
		Similarly, suppose $L$ is a torus, and $(u_1,\alpha_1),\dots,(u_d,\alpha_d)\in T_{\uL}Y^\vee$, one can define
		
		\begin{equation}\label{holo-vol}
			\Omega^\vee((u_1,\alpha_1),\dots,(u_n,\alpha_d))=\int_L((-\iota_{u_1}\omega+i\alpha_1)\wedge\cdots\wedge (-\iota_{u_d}\omega+i\alpha_d)).
		\end{equation}
	\end{df}
	\begin{lem}\label{CalabiYau}
		$\omega^\vee$ is a K\"ahler form on $Y^\vee$, compatible with $J^\vee$; and $\Omega^\vee$ is a holomorphic volume form on the open subset of $Y^\vee$ formed by special Lagrangian tori.
	\end{lem}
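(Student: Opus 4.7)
The plan is to systematically translate everything on $Y^\vee$ into Hodge theory on $L$ via the key pointwise identity
\[
\iota_u\operatorname{Im}(\Omega)|_L \;=\; \psi\cdot *\!\bigl(\iota_u\omega|_L\bigr)\qquad (u\in NL),
\]
where $*$ is the Hodge star for the induced metric $g|_L$. This is a direct computation in a $g$-orthonormal adapted frame using the special Lagrangian conditions $\operatorname{Re}(\Omega)|_L=\psi\operatorname{vol}_L$ and $\operatorname{Im}(\Omega)|_L=0$. Under the identification $u\leftrightarrow\theta_u:=\iota_u\omega|_L$ and $\alpha\leftrightarrow\theta_\alpha$ (the $\psi$-harmonic representative, which exists and is $T$-invariant by \Cref{G-inv}), the tangent space $T_{\uL}Y^\vee$ becomes pairs of $\psi$-harmonic 1-forms on $L$, and \Cref{complexstr} says $J^\vee$ is simply $(\theta_u,\theta_\alpha)\mapsto(\theta_\alpha,-\theta_u)$.

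With this identity in hand the defining formulas become
\[
\omega^\vee(v_1,v_2) = \int_L \psi\bigl(\theta_{\alpha_2}\wedge *\theta_{u_1}-\theta_{\alpha_1}\wedge *\theta_{u_2}\bigr),\qquad \Omega^\vee = \int_L \bigwedge_{i}\bigl(-\theta_{u_i}+i\theta_{\alpha_i}\bigr).
\]
Well-definedness (independence of the representative of $\alpha_i$) follows from Stokes, since $\psi*\theta_{u_j}$ is closed by $\psi$-harmonicity. The compatibility $\omega^\vee(J^\vee v_1,J^\vee v_2)=\omega^\vee(v_1,v_2)$ is then immediate from the symmetry $\alpha\wedge*\beta=\beta\wedge*\alpha$, and the associated bilinear form $g^\vee(v,v)=\omega^\vee(J^\vee v,v)=\int_L\psi\bigl(|\theta_u|^2+|\theta_\alpha|^2\bigr)\operatorname{vol}_L$ is manifestly positive definite. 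For $\Omega^\vee$, the $\C$-linear 1-form $v\mapsto -\theta_u+i\theta_\alpha$ satisfies $\phi_{J^\vee v}=i\phi_v$, so $\Omega^\vee$ is of type $(d,0)$; when $L$ is a torus, a $\Z$-basis of $H^1(L;\Z)$ produces $d$ linearly independent $\psi$-harmonic 1-forms, so $\Omega^\vee$ is non-vanishing on the tori locus.

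The remaining step, closedness of both $\omega^\vee$ and $\Omega^\vee$, is where I expect the real work to be: the integrands depend non-trivially on $L$ itself, through $\operatorname{Im}(\Omega)|_L$ and through the $\psi$-harmonic representatives (which depend on $g|_L$). Following \cite{AurouxComp}, I would trivialise the periods lattice $\Lambda\subset H^1(L;\R)$ locally near $\uL$ and identify a neighbourhood of $\uL$ in $Y^\vee$ with an open subset of $(T^*B)_{\C}/\Lambda$, where $B$ parametrises unparametrised $T$-invariant special Lagrangian deformations of $L$. Under this identification $\omega^\vee$ becomes the canonical symplectic form on a twisted cotangent bundle (possibly with a B-field correction coming from $[\omega]$), so $d\omega^\vee=0$ is automatic; and $\Omega^\vee$ is an analogous wedge of a holomorphic volume form on $B$ with the fibre form, hence $d\Omega^\vee=0$. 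Equivalently, one exhibits a Liouville-type primitive $\alpha^\vee$ defined by pairings of $\operatorname{Im}(\Omega)$ against deformation vector fields and verifies, using $\psi$-harmonicity, that the boundary terms arising from varying $L$ cancel. This cancellation is the technical heart of the argument and is precisely where the special Lagrangian hypothesis is used in its full strength.
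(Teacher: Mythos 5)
The paper does not actually prove this lemma: it is stated with the remark that ``the proof can be found in \cite{AurouxComp}'', so there is no internal argument to compare against. Your sketch is, in substance, a reconstruction of Auroux's proof, and it is sound in outline. The key pointwise identity relating $\iota_u\operatorname{Im}(\Omega)|_L$ to $\psi\,{*}(\iota_u\omega|_L)$ is exactly the mechanism used there (beware that with the standard orientation conventions the identity carries a minus sign, e.g.\ for $\Omega=dz_1\wedge\cdots\wedge dz_d$ on $\C^d$ and $L=\R^d$ one finds $\iota_{\partial_{y_1}}\operatorname{Im}\Omega|_L=-{*}(\iota_{\partial_{y_1}}\omega|_L)$; this only flips which of $\omega^\vee(v,J^\vee v)$ or $\omega^\vee(J^\vee v,v)$ is positive, so your positivity computation goes through once the conventions are fixed consistently). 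Your use of \Cref{G-inv} and \Cref{complexstr} to reduce everything to $\psi$-harmonic forms, the Stokes argument for well-definedness, the type-$(d,0)$ check for $\Omega^\vee$, and the non-vanishing on the torus locus via the surjectivity of $\bigwedge^d H^1(T^d)\to H^d(T^d)$ are all correct.

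The one place where your argument remains a sketch rather than a proof is closedness of $\omega^\vee$ and $d\Omega^\vee=0$, which you rightly flag as the technical heart: the identification of a neighbourhood of $\uL$ with an open set in a (twisted) cotangent bundle of the space of special Lagrangian deformations, and the verification that $\omega^\vee$ becomes the canonical symplectic form there, is asserted rather than carried out. Since this is precisely the content being imported from \cite{AurouxComp} in the paper itself, this is not a defect relative to the paper, but if you wanted a self-contained proof you would need to exhibit the local period coordinates (or a Liouville primitive) explicitly and check that the $L$-dependence of $\operatorname{Im}(\Omega)|_L$ and of the $\psi$-harmonic representatives does not obstruct closedness.
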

 Henceforth, we will restrict $L$ to be a torus, although most propositions not involving $\Omega^\vee$ holds more generally.

	Suppose $\uL\in Y^\vee$, and $p\in L$, then the map $T\ni t \mapsto tp\in L$ induces a group homomorphism $\Lambda\cong \pi_1(T)\to \pi_1(L)$. The holonomy of $E$ therefore induces an element 
	\begin{equation*}
		\hol_T(E) \in\Hom_\Z(\Lambda,\operatorname{U}(1))=\check T,
	\end{equation*}
	which is independent of $p$ because we assume $L$ is connected. Since $L$ is $T$-invariant, $\mu$ is constant on $L$, we call this common value $\mu(L)$. Recall that $\check \lt=i\lt^*$.
	
	\begin{prop}\label{holo-pro}
		The function 
		\[\pi^\vee:Y^\vee\to \check T_\C=\exp(\lt^*)\cdot\check T\]
		\begin{equation}\label{formulapi}
			\uL\mapsto \exp(-\mu(L))\cdot \hol_T(E)
		\end{equation}
		is holomorphic.
	\end{prop}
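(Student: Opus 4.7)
The plan is to verify that $\pi^\vee$ is holomorphic by computing $d\pi^\vee$ at an arbitrary $\uL\in Y^\vee$ and checking $\C$-linearity with respect to $J^\vee$. Identify $T_{\pi^\vee(\uL)}\check T_\C$ with $\lt_\C^*=\lt^*\oplus i\lt^*$ via left-translation, so that (locally) $\log\pi^\vee$ takes values in $\lt_\C^*$, with real part $-\mu(L)\in\lt^*$ and imaginary part $\log\hol_T(E)\in\check\lt=i\lt^*$.

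The central calculation is to show that for $(u,\alpha)\in T_{\uL} Y^\vee$ and $\xi\in\lt$,
\[
d(\log\pi^\vee)(u,\alpha)(\xi)=-\iota_u\omega(\xi^\sharp)+i\,\alpha_h(\xi^\sharp),
\]
where $\alpha_h$ is the $\psi_L$-harmonic representative of $\alpha$. For the moment-map contribution the paper's convention $\omega(\cdot,\xi^\sharp)=\langle d\mu,\xi\rangle$ immediately gives $d(\mu(L)(\xi))(u,\alpha)=\iota_u\omega(\xi^\sharp)$, and this is in fact a single scalar rather than a function on $L$: being $\psi_L$-harmonic, $\iota_u\omega$ is $T$-invariant by \Cref{G-inv}, so Cartan's formula combined with $d(\iota_u\omega)=0$ forces $\iota_u\omega(\xi^\sharp)$ to be closed, hence constant on the connected manifold $L$. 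For the holonomy part, parametrize $E$ by a unitary connection $A_E$ so that $\hol_T(E)(\lambda)=\exp(i\int_{\gamma_\lambda}A_E)$; differentiating in direction $(u,\alpha)$ gives $i\int_{\gamma_\lambda}\alpha_h=i\,\alpha_h(\lambda^\sharp)$, the last equality again by the constancy argument. Extending linearly from $\lambda\in\Lambda$ to $\xi\in\lt$ produces the displayed formula.

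Now apply the same formula to $J^\vee(u,\alpha)=(a,[-\iota_u\omega])$ of \Cref{complexstr}, where by construction $\iota_a\omega=\alpha_h$:
\[
d(\log\pi^\vee)(J^\vee(u,\alpha))(\xi)=-\iota_a\omega(\xi^\sharp)-i\iota_u\omega(\xi^\sharp)=-\alpha_h(\xi^\sharp)-i\iota_u\omega(\xi^\sharp)=i\cdot d(\log\pi^\vee)(u,\alpha)(\xi),
\]
so $d\pi^\vee$ is complex-linear and $\pi^\vee$ is holomorphic.

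The calculation itself is short; the only real obstacle is bookkeeping---keeping the moment map sign convention consistent, handling the identifications $\check T_\C\cong \Hom_\Z(\Lambda,\Cx)$ and $\lt_\C^*=\lt^*\oplus i\lt^*$ (with $\mu(L)$ contributing to the $\lt^*$-summand and $\hol_T(E)$ to the $i\lt^*$-summand), and replacing the period integrals $\int_{\gamma_\lambda}(\cdot)$ by pointwise pairings $(\cdot)(\lambda^\sharp)$. This last step, namely the constancy of $\iota_u\omega(\xi^\sharp)$ and $\alpha_h(\xi^\sharp)$ on $L$, is the one genuinely geometric input and is essentially the $T$-invariance content of \Cref{G-inv}.
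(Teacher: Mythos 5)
Your proof is correct and follows essentially the same route as the paper's: the paper's entire argument is the single identity $d\log\pi^\vee(u,\alpha)(\xi)=(-\iota_u\omega+i\alpha)(\xi^\sharp)$ followed by the observation that this is $\C$-linear for $J^\vee(u,\alpha)=(a,[-\iota_u\omega])$, which is exactly your central calculation. You merely make explicit what the paper leaves implicit, namely the constancy of $\iota_u\omega(\xi^\sharp)$ and $\alpha_h(\xi^\sharp)$ on $L$ via \Cref{G-inv}, which is a worthwhile clarification but not a different approach.
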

	\begin{proof}
		Let $(u,\alpha)\in T_{\uL}Y$, and $\xi\in \lt$. We have
		\begin{equation}\label{critical}
			d\log\pi^\vee(u,\alpha)(\xi)=-\omega(u,\xi^\sharp)+i\alpha(\xi^\sharp)
			=(-\iota_u\omega+i\alpha)(\xi^\sharp),
		\end{equation}
		which is $\C$-linear in $(u,\alpha)$.
	\end{proof}

    The discussions below are related to the Teleman's conjecture concerning the mirror symmetry of symplectic reductions. Readers that are more interested in the SYZ construction of mirror 3d branes can jump to \Cref{fibration over complex tori}.
    
	As a consequence of (\ref{critical}), we have the following:
	\begin{cor}\label{criticaliff}
		Let $\uL=(L,E)\in Y^\vee$. Then $L$ contains a critical point of $\mu$, if and only if $\uL$ is a critical point of $\pi^\vee$.
	\end{cor}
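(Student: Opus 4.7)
My plan is to exploit the formula
\[
d\log\pi^\vee(u,\alpha)(\xi) = -\omega(u,\xi^\sharp) + i\alpha(\xi^\sharp)
\]
from the proof of \Cref{holo-pro} and reinterpret both sides of the equivalence in terms of the $T$-isotropy on $L$. Write $\lt_0\subset\lt$ for the Lie algebra of the stabilizer of the $T$-action on $L$ (which I would check is constant on $L$ under the standing $T$-invariance and torus hypotheses). The moment map relation $d\mu^\xi = \omega(\cdot,\xi^\sharp)$ together with nondegeneracy of $\omega$ identifies critical points of $\mu$ lying on $L$ with points whose stabilizer Lie algebra is nontrivial; on the other side, $\uL$ is a critical point of $\pi^\vee$ precisely when $d\log\pi^\vee|_{\uL}\colon T_{\uL}Y^\vee\to\check\lt_\C$ fails to be surjective, i.e., when some nonzero $\xi\in\lt$ annihilates its image.

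For the $(\Leftarrow)$ direction, given $p\in L$ with $0\neq\xi\in\lt_0$, one has $\xi^\sharp_p=0$, and since $\xi^\sharp$ is $T$-invariant it vanishes on the full $T$-orbit through $p$; evaluating the displayed identity at any such point kills both $\omega(u,\xi^\sharp)$ and $\alpha(\xi^\sharp)$, producing a nonzero $\xi$ in the annihilator of the image of $d\log\pi^\vee$. For $(\Rightarrow)$, suppose a nonzero $\xi\in\lt$ annihilates $d\log\pi^\vee(\cdot)$. Setting $u=0$ and letting $\alpha$ range over $H^1(L;\R)$ forces $\alpha(\xi^\sharp)=0$ for every cohomology class $\alpha$. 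Since $L$ is a torus and $\xi^\sharp$ is a $T$-invariant (hence translation-invariant) vector field on it, $\xi^\sharp$ is either identically zero or nowhere vanishing; a nowhere-vanishing $\xi^\sharp$ would represent a nonzero class in $H_1(L;\R)$, contradicting the vanishing against every $\alpha\in H^1(L;\R)$ via the nondegenerate Poincar\'e pairing. Hence $\xi^\sharp\equiv 0$ on $L$, so $\xi$ lies in the stabilizer Lie algebra at every point of $L$, and every point of $L$ is a critical point of $\mu$.

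The step requiring most care is the transition from the pointwise identity in \eqref{critical} to a scalar condition, since $\omega(u,\xi^\sharp)$ and $\alpha(\xi^\sharp)$ are a priori functions on $L$ while $d\log\pi^\vee(u,\alpha)(\xi)$ is a single complex number. I would justify this by observing that, for admissible $u$ (with $\iota_u\omega$ being $\psi$-harmonic) and harmonic $\alpha$, both $\iota_u\omega$ and $\alpha$ are $T$-invariant by \Cref{G-inv}; paired with the $T$-invariant vector field $\xi^\sharp$ they give $T$-invariant functions on $L$, which on a torus Lagrangian with translation $T/T_0$-action are necessarily constant. This constancy is precisely what enables the pointwise-evaluation argument in the $(\Leftarrow)$ direction and the cohomological pairing argument in the $(\Rightarrow)$ direction.
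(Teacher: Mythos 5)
Your overall strategy is the same as the paper's: use \eqref{critical} to identify criticality of $\pi^\vee$ at $\uL$ with the failure of surjectivity of the restriction $H^1(L;\R)\to H^1(T;\R)$ (equivalently, non-injectivity of $\xi\mapsto[\gamma_\xi]\in H_1(L;\R)$), and then relate this to vanishing of fundamental vector fields on $L$. The direction ``critical point of $\mu$ on $L$ $\Rightarrow$ $\uL$ critical'' goes through, although your stated justification for passing from the pointwise identity to a scalar one is off: the real part $\omega(u,\xi^\sharp)$ is constant on $L$ because $\mu$ is constant on each Lagrangian of the family (not because $T$-invariant functions on $L$ are constant), and the imaginary part should be read as the pairing $\langle\alpha,[\gamma_\xi]\rangle$, which vanishes because the orbit through the critical point is degenerate.

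The converse direction has a genuine gap. You write that $\xi^\sharp$ is ``$T$-invariant (hence translation-invariant)'' on $L$ and conclude that it is identically zero or nowhere vanishing, and that in the latter case it represents a nonzero class in $H_1(L;\R)$. But $L$ is only \emph{diffeomorphic} to a torus; it carries no canonical group structure, and the $T$-action on $L$ is in general not transitive and not by translations of any such structure (e.g.\ $\dim T=r$ may be much smaller than $\dim L=d$). So $T$-invariance of $\xi^\sharp$ gives neither the dichotomy nor, crucially, the nonvanishing of the orbit class $[\gamma_\xi]$ when the action is locally free. That nonvanishing is precisely the nontrivial content the paper isolates in \Cref{lemmaTactsT}: assuming only that $L$ contains no critical point of $\mu$ (so the $T$-action on $L$ is locally free), one chooses $T$-invariant closed $1$-forms $\alpha_1,\dots,\alpha_d$ on $L$ whose wedge is a volume form --- possible because $\bigwedge^dH^1(L;\R)\to H^d(L;\R)$ is onto for a torus --- and observes that their restrictions to an orbit span its cotangent space, hence $H^1(L;\R)\to H^1(T\cdot p;\R)$ is surjective. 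Without this lemma (or some substitute such as the Conner--Raymond theorem on torus actions on closed aspherical manifolds), your argument does not close: a nowhere-vanishing invariant vector field on a manifold diffeomorphic to a torus can in principle have null-homologous closed orbits, and ruling this out is exactly the point. A secondary, fixable omission: criticality of $\pi^\vee$ a priori produces an annihilating $\xi\in\lt_\C$, and one must note that $d\log\pi^\vee$ is the complexification of a real map to reduce to $\xi\in\lt$.
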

	\begin{proof}
		From \Cref{critical}, one sees that $\uL$ is a critical point of $\pi^\vee$ if and only if the pullback $H^1(L,\R)\to H^1(T,\R)$ is not surjective.
		
		If $p\in L$ is a critical point of $\mu$, then $\xi^\sharp$ vanishes along the orbit $T\cdot p$ for some $\xi\in \lt$, so the pullback $H^1(L,\R)\to H^1(T\cdot p,\R)$ is not surjective.
		
		For the converse, note that a $\psi_L$-harmonic 1-form on $L$ is $T$-invariant by Lemma \ref{G-inv}. If $L$ does not contain a critical point of $\mu$, then the action of $T$ on $L$ is locally free. Since we assume $L$ is a torus, \Cref{lemmaTactsT} below implies the restriction map $H^1(L;\R)\to H^1(T;\R)$ is surjective.
	\end{proof}
	\begin{lem}\label{lemmaTactsT}
		Let $M$ be a compact, connected, orientable d-dimensional manifold such that the natural map
		\[\bigwedge^dH^1(M;\R)\to H^d(M;\R)\]
		is surjective. Suppose $T$ acts on $M$ in a locally free manner, and $p\in M$, then the pullback 
		\[H^1(M;\R)\to H^1(T;\R)\]
		via the map $g\mapsto g\cdot p $ is surjective.
	\end{lem}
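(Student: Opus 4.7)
The plan is to combine the cup-product hypothesis with Cartan calculus to reduce the surjectivity to a rank condition on a constant matrix built from $T$-invariant 1-forms, and then to rule out rank drop using local freeness of the action.

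First I would replace every class by a $T$-invariant representative, which is possible because $T$ is compact and connected and one can average with Haar measure. The key preliminary observation is that for any $T$-invariant closed 1-form $\alpha$ and any $\eta\in\lt$, Cartan's formula gives
\[
0=\mathcal{L}_{\eta^\sharp}\alpha = d\,\iota_{\eta^\sharp}\alpha+\iota_{\eta^\sharp}\,d\alpha = d(\alpha(\eta^\sharp)),
\]
so the function $\alpha(\eta^\sharp)$ is constant on the connected manifold $M$. Consequently the restriction map $H^1(M;\R)\to H^1(T;\R)\cong \lt^*$ induced by $g\mapsto g\cdot p$ is independent of $p$ and sends the class of an invariant representative $\alpha$ to the linear functional $\xi\mapsto\alpha(\xi^\sharp)$.

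Using the cup-product hypothesis together with $H^d(M;\R)\cong\R$, I would choose $T$-invariant closed 1-forms $\alpha_1,\dots,\alpha_d$ such that $\int_M\alpha_1\wedge\cdots\wedge\alpha_d\neq 0$, fix a basis $\xi_1,\dots,\xi_r$ of $\lt$, and form the (globally constant) $d\times r$ matrix $A=\bigl(\alpha_i(\xi_k^\sharp)\bigr)_{i,k}$. The main claim is that $A$ has rank $r$, from which surjectivity follows at once because the rows of $A$ are the images in $\lt^*$ of $[\alpha_1],\dots,[\alpha_d]$ under the restriction map.

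To establish the rank claim by contradiction, suppose $\operatorname{rank}A<r$, so that every $r\times r$ minor of $A$ vanishes. Local freeness of the action implies that $\xi_1^\sharp,\dots,\xi_r^\sharp$ are pointwise linearly independent and can be completed to a frame $\xi_1^\sharp,\dots,\xi_r^\sharp,v_{r+1},\dots,v_d$ of $T_qM$ at any $q\in M$. Laplace expansion along the first $r$ columns writes the determinant
\[
(\alpha_1\wedge\cdots\wedge\alpha_d)(\xi_1^\sharp,\dots,\xi_r^\sharp,v_{r+1},\dots,v_d)
\]
as a signed sum over $r$-subsets $I\subset\{1,\dots,d\}$ of products of an $r\times r$ minor of $A$ (indexed by $I$) with a $(d-r)\times(d-r)$ minor built from the horizontal block $(\alpha_j(v_l))_{j\notin I,\,l>r}$; all these terms vanish, so $\alpha_1\wedge\cdots\wedge\alpha_d\equiv 0$ pointwise on $M$, contradicting the nonzero integral. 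The essential bridge from the global cohomological hypothesis to the pointwise surjectivity at $p$, and the main obstacle to anticipate, is precisely the constancy of $\alpha(\eta^\sharp)$ supplied by Cartan's formula, together with local freeness which promotes a rank drop into identical vanishing of the top form.
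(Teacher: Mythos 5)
Your proof is correct and follows essentially the same route as the paper: both select $T$-invariant closed 1-forms $\alpha_1,\dots,\alpha_d$ realizing a nonzero top cup product and deduce that their restrictions to the orbit directions span $H^1(T;\R)\cong\lt^*$. Your version is in fact slightly more careful — the paper asserts that $\alpha_1\wedge\cdots\wedge\alpha_d$ is a volume form (only its integral is guaranteed nonzero), whereas your constancy of $\alpha_i(\xi_k^\sharp)$ via Cartan's formula together with the Laplace-expansion rank argument correctly converts the nonvanishing integral into the required pointwise surjectivity.
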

	\begin{proof}
		The conclusion does not depend on $p$ since $M$ is connected. 
  
  Let $\alpha_1,\dots,\alpha_d$ be $T$-invariant closed 1-forms on $M$ such that $\alpha=\alpha_1\wedge\cdots\wedge\alpha_d$ is a volumne form. Let $p$ be an arbitary point of $M$, then $\alpha_1|_{T\cdot p},\dots,\alpha_d|_{T\cdot p}$ are $T$-invariant differential 1-forms that span the cotangent space at $p$, hence must span $H^1(T\cdot p;\R)$.
	\end{proof}
	
	\subsection*{Teleman's conjecture without potential}\label{TelConj}
	In the rest of this subsection, we assume $\mu^{-1}(0)\not \subset D$ and $T$ acts freely on $\mu^{-1}(0)$, in particular, $0\in \lt^*$ is a regular value of $\mu$. We let $Z$ denote the symplectic reduction $Y\sslash T=\mu^{-1}(0)/T$. We will show that $(\pi^\vee)^{-1}(1)\cong Z^\vee$, the SYZ mirror of $Z$ (see definitions below). 
	
	We first show that $Z$ has a natural meromorphic volume form. Let $\pr:\mu^{-1}(0)\to Z$ be the quotient map and $D_Z=\pr(\mu^{-1}(0)\cap D)$.
	
	\begin{lem}\label{holvolquo}
		The $(d-r,0)$-form 
		\[\Omega'\vcentcolon=\iota_{\lambda_r^\sharp}\cdots\iota_{\lambda_1^\sharp}\Omega\]
		descends to a meromorphic volume form $\Omega_Z$ on the K\"ahler manifold $Z$ and 
  \[
  \operatorname{div}(\Omega_Z)+D_Z\equiv 0.
  \]
	\end{lem}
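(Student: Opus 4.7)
The plan is to prove three things in turn: first, $\Omega'$ pulled back to $\mu^{-1}(0)$ is basic for the $T$-action and hence descends to a complex-valued top-degree form $\Omega_Z$ on $Z$; second, $\Omega_Z$ has type $(d-r,0)$ for the quotient K\"ahler structure; third, the divisor of $\Omega_Z$ equals $-D_Z$.

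For the first point, $T$-invariance of $\Omega'$ is immediate because $\Omega$ is $T$-invariant by hypothesis and each $\lambda_j^\sharp$ is $T$-invariant (as $T$ is abelian). Horizontality follows from the antisymmetry of interior products: for any $k$, $\iota_{\lambda_k^\sharp}\Omega' = \iota_{\lambda_k^\sharp}\iota_{\lambda_r^\sharp}\cdots\iota_{\lambda_1^\sharp}\Omega = 0$. Since $T$ acts freely on $\mu^{-1}(0)$, the vectors $\lambda_j^\sharp(p)$ span the vertical tangent spaces, so standard descent yields a unique complex-valued $(d-r)$-form $\Omega_Z$ on $Z$ with $\pr^*\Omega_Z = \Omega'|_{\mu^{-1}(0)}$.

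For the second point, I would exploit that a Hamiltonian $T$-action on a K\"ahler manifold extends locally to a holomorphic $T_\C$-action: the Killing fields $\xi^\sharp$ together with the fields $J\xi^\sharp$ generate commuting complex flows (using integrability of $J$ and that $T$ acts by holomorphic isometries). In a $T_\C$-invariant neighbourhood $U$ of a free point of $\mu^{-1}(0)$, the Kempf--Ness picture identifies $(\mu^{-1}(0)\cap U)/T$ with $U/T_\C$ as complex manifolds. On $U$, the holomorphic vector fields $\lambda_j^{\sharp,(1,0)}$ span the tangent space to the $T_\C$-orbits, and the interior product $\iota_{\lambda_j^\sharp}$ acts on the $(d,0)$-form $\Omega$ through its $(1,0)$-part, so $\Omega'$ is a meromorphic $(d-r,0)$-form on $U$ that is $T_\C$-invariant and $T_\C$-basic. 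It therefore descends holomorphically to a meromorphic $(d-r,0)$-form on $U/T_\C$, which must agree with the $\Omega_Z$ from the first step.

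For the third point, at any $p \in \mu^{-1}(0)\setminus D$ where $T$ acts freely, the inclusion $V \subseteq V^{\perp_\omega}$ (valid because $\mu$ is $T$-invariant, so $\omega(\xi^\sharp,\eta^\sharp)=0$) combined with the K\"ahler identity $\omega(\,\cdot\,,J\,\cdot\,)=g$ forces $V\cap JV=0$. Hence $\lambda_1^{\sharp,(1,0)}(p),\dots,\lambda_r^{\sharp,(1,0)}(p)$ are linearly independent in $T_p^{1,0}Y$, so contracting the nonvanishing $\Omega(p)$ with them produces a nonzero $(d-r,0)$-covector; this shows $\Omega_Z$ is holomorphic and nowhere vanishing on $Z\setminus D_Z$. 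In holomorphic slice coordinates near a component of $D\cap\mu^{-1}(0)$ of multiplicity $m$, the same contraction computation shows the pole order of $\Omega_Z$ along the image $D_Z$ is exactly $m$, yielding $\operatorname{div}(\Omega_Z)+D_Z\equiv 0$. The main obstacle is the second step: carefully justifying the local $T_\C$-extension and verifying the descended form really has type $(d-r,0)$ for the quotient complex structure, rather than only being a complex-valued top-degree form; the multiplicity matching in the third step is routine once holomorphic slices are in hand.
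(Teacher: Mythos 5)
Your proposal is correct and follows the same overall skeleton as the paper's proof: basicness of $\Omega'$ (horizontality from antisymmetry of the contractions, invariance from $T$-invariance of $\Omega$ and of the $\lambda_j^\sharp$) gives the descent; linear independence of the $(1,0)$-parts $\lambda_j^\sharp-iJ\lambda_j^\sharp$, which rests on $V\cap JV=0$ for the isotropic orbit tangent space $V$, gives nonvanishing away from $D$; and a local defining equation of $D$ controls the pole order. The one genuine divergence is how holomorphicity of $\Omega_Z$ is established. The paper verifies $d\Omega'=0$ directly by Cartan calculus, using $d\Omega=0$ and $\mathcal{L}_{\lambda_j^\sharp}\Omega=0$, so that the descended form of top $(\,\cdot\,,0)$-type is closed and hence holomorphic where it is finite; you instead invoke the local holomorphic $T_\C$-action generated by the $\lambda_j^\sharp-iJ\lambda_j^\sharp$ together with the local Kempf--Ness identification of $(\mu^{-1}(0)\cap U)/T$ with $U/T_\C$, which exhibits $\Omega'$ as a contraction of $\Omega$ with holomorphic vector fields and makes the holomorphic descent manifest. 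Both work; the paper's route is more elementary (no slice theorem or local complexified action is needed), while yours explains more transparently why the result has type $(d-r,0)$ for the quotient complex structure --- a point the paper leaves implicit. For the pole order, the paper's device is slightly cleaner than your ``holomorphic slice coordinates'': choose a local $T$-invariant defining function $f$ of $D$ near $p$, apply the nonvanishing argument to $f\Omega$ to see that $f\Omega'$ descends to a nonvanishing holomorphic form, and observe that $f$ itself descends to a local equation of $D_Z$; you may wish to phrase your multiplicity matching this way rather than setting up slices.
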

	\begin{exmp}
		Suppose $Y=\C^{n+1}$ with complex coordinates $z_0,\dots,z_n$, and $\operatorname{U}(1)$ acts on $Y$ via
		\[t\cdot(z_0,\dots,z_n)=(tz_0,\dots,tz_n),\]
        and with a moment map $\mu=\sum_j|z_j|^2-c$ for some $c>0$. The fundamental vector field is $\lambda^\sharp=2\pi i\sum z_j\partial_{z_j}$.
		Consider the meromorphic volume form
		\[\Omega=\frac{dz_0\cdots dz_n}{z_0\cdots z_n}.\]
		Then
		\begin{equation*}
			\iota_{\lambda^\sharp}\Omega=2\pi i\sum_{j=0}^n(-1)^j\frac{dz_0\cdots\overset{\wedge}{dz_j}\cdots dz_n}{z_0\cdots \overset{\wedge}{z_j}\cdots z_n},
		\end{equation*}
		which descends to a meromorphic volume form on $\PP^n$ with a simple pole at each coordinate hyperplane.
	\end{exmp}
	\begin{proof}[Proof of Lemma \ref{holvolquo}]
		Recall that a differential form $\alpha$ on $\mu^{-1}(0)$ descends to $Z$ if and only it is basic, i.e., $\iota_{\xi^\sharp}\alpha=\mathcal{L}_{\xi^\sharp}\alpha=0$ for any $\xi\in \lt$.\newline
		We first assume $\mu^{-1}(0)\cap D=\varnothing$. Let $\xi\in \lt$, it is clear that $\iota_{\xi^\sharp}\Omega'=0$.\newline
		Since $[\mathcal{L}_{\xi^\sharp},\iota_{\lambda_j^\sharp}]=\mathcal{L}_{[\xi^\sharp,\lambda_j^\sharp]}=\mathcal{L}_{-[\xi,\lambda_j]^\sharp}=0$ for $j=1.\dots,r$, we have
		\[\mathcal{L}_{\xi^\sharp}\Omega'=\iota_{\lambda_r^\sharp}\cdots\iota_{\lambda_1^\sharp}\mathcal{L}_{\lambda^\sharp}\Omega=0\]
		by the $T$-invariance of $\Omega$.\newline
		We next check that $\Omega_Z$ is non-degenerate. Let $p\in \mu^{-1}(0)$ and $\bar p=\pr(p)$. Let $\bar v_{r+1},\dots, \bar v_d\in T^{1,0}_{\bar p}Z$ be a basis, and $v_{r+1},\dots,v_d\in T^{1,0}_pY$ be some liftings, then
		\begin{align*}
			\Omega_Z(\bar v_{r+1},\dots,\bar v_d)&=\Omega(\lambda_1^\sharp,\dots,\lambda_r^\sharp,v_{r+1},\dots,v_d)\\
			&=\Omega\left(\frac{1}{2}(\lambda_1^\sharp-iJ\lambda_1^\sharp),\dots,\frac{1}{2}(\lambda_r^\sharp-iJ\lambda_r^\sharp),v_{r+1},\dots,v_d\right)\\
			&\neq 0.
		\end{align*}
		It remains to check that $d\Omega'=0$. In fact,
		\[d\Omega'=(-1)^r\iota_{\lambda_r^\sharp}\cdots\iota_{\lambda_1^\sharp}d\Omega+\sum_{j=1}^r(-1)^{r-j}\iota_{\lambda_r^\sharp}\cdots\overset{\wedge}{\iota}_{\lambda_j^\sharp}\cdots\iota_{\lambda_1^\sharp}\mathcal{L}_{\lambda_j^\sharp}\Omega=0.\]
		We now consider the general case. The above argument shows that $\Omega_Z$ is well-defined and nonvanishing on $\pr(\mu^{-1}(0)\setminus D)$.\newline
		Let $p\in \mu^{-1}(0)\cap D$, and let $f$ be a local $T$-invariant defining equation of $D$ near $p$. Apply the above argument, we see that $f\Omega$ descends to a nonvanishing holomorphic volume form in a neighborhood of $\bar p=\pr(p)$. But $f$ also descends to a holomorphic function in a neighborhood of $\bar p$ and is thus a local equation of $D_Z$.
	\end{proof}
	For each vector field $u$ on $Z$, we let $u^h$ denote the unique lift of $u$ to $\mu^{-1}(0)$ that is orthogonal to the $T$-orbits. Similarly, if $S$ is a subset of $Z$ and $u$ is a vector field to $Z$ along $S$, we let $u^h$ denote the unique lift of $u$ to $\pr^{-1}(S)$ that is orthogonal to the $T$-orbits.\newline
	We write $\omega_Z$ for the K\"ahler form of $Z$ obtained from symplectic reduction.
	\begin{lem}\label{matchdef}
		Let $L\subset \mu^{-1}(0)$ be a special Lagrangian of $Y$ and $\bar L=\pr (L)$. Suppose $u\in C^\infty(N\bar L)$, then
		\begin{enumerate}[i)]
			\item $d\iota_u\omega_Z=0|_{\bar L}$ if and only if $d\iota_{u^h}\omega|_L=0$.
			\item $d\iota_u\operatorname{Im}(\Omega_Z)|_{\bar L}=0$ if and only if $d\iota_{u^h}\operatorname{Im}(\Omega)|_L=0$.
		\end{enumerate}
	\end{lem}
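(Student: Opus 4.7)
The plan is to reduce both equivalences to the two reduction identities $\pr^*\omega_Z=\omega|_{\mu^{-1}(0)}$ (standard symplectic reduction) and $\pr^*\Omega_Z=\Omega'|_{\mu^{-1}(0)}$ with $\Omega'=\iota_{\lambda_r^\sharp}\cdots\iota_{\lambda_1^\sharp}\Omega$ (\Cref{holvolquo}), combined with two properties of the horizontal lift: $u^h$ lies in $T\mu^{-1}(0)$, and $u^h$ is $T$-invariant, so in particular $[u^h,\lambda_j^\sharp]=0$ for every $j$. Throughout, pullback along $\pr|_L:L\to\bar L$ is injective on forms of each degree because $\pr|_L$ is a surjective submersion (indeed a principal $T$-bundle), and exterior derivative commutes with both pullback and restriction.

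For part (i), the reduction identity gives $\omega(u^h,v)=\omega_Z(u,\pr_*v)$ for $v\in T\mu^{-1}(0)$, i.e.\ $\iota_{u^h}\omega|_{\mu^{-1}(0)}=\pr^*\iota_u\omega_Z$. Restricting to $L\subset\mu^{-1}(0)$ and applying $d$ yields $d(\iota_{u^h}\omega)|_L=(\pr|_L)^*\,d(\iota_u\omega_Z|_{\bar L})$, so the two closedness conditions are equivalent by injectivity of $(\pr|_L)^*$.

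For part (ii) the direct analog fails because $\Omega$ is a $d$-form while $\Omega_Z$ is a $(d-r)$-form. I would circumvent this with Cartan's formula: since $\Omega$ is holomorphic of top degree, $d\operatorname{Im}\Omega=0$, so $d\iota_{u^h}\operatorname{Im}\Omega=\mathcal{L}_{u^h}\operatorname{Im}\Omega$, and similarly $d\iota_u\operatorname{Im}\Omega_Z=\mathcal{L}_u\operatorname{Im}\Omega_Z$. Using $[u^h,\lambda_j^\sharp]=0$, the operator $\mathcal{L}_{u^h}$ commutes with each $\iota_{\lambda_j^\sharp}$, so on $\mu^{-1}(0)$
\[
\iota_{\lambda_r^\sharp}\cdots\iota_{\lambda_1^\sharp}\,\mathcal{L}_{u^h}\operatorname{Im}\Omega \;=\; \mathcal{L}_{u^h}\operatorname{Im}\Omega' \;=\; \mathcal{L}_{u^h}\pr^*\operatorname{Im}\Omega_Z \;=\; \pr^*\mathcal{L}_u\operatorname{Im}\Omega_Z.
\]
Restricting to $L$ (and noting that each $\lambda_j^\sharp$ is tangent to $L$, so the contractions commute with the restriction) gives the key identity
\[
\iota_{\lambda_r^\sharp}\cdots\iota_{\lambda_1^\sharp}\bigl(d\iota_{u^h}\operatorname{Im}\Omega|_L\bigr) \;=\; (\pr|_L)^*\bigl(d\iota_u\operatorname{Im}\Omega_Z|_{\bar L}\bigr).
\]
On top forms $\Omega^d(L)$, iterated contraction with the pointwise linearly independent $\lambda_1^\sharp,\dots,\lambda_r^\sharp$ (using that $T$ acts freely on $\mu^{-1}(0)\supset L$) is injective into $\Omega^{d-r}(L)$; combined with the injectivity of $(\pr|_L)^*$, the identity proves both directions of (ii).

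The main obstacle I anticipate is bookkeeping: the dimension count is tight, and one must separately verify that restriction-to-$L$ commutes with the contractions $\iota_{\lambda_j^\sharp}$ (because $\lambda_j^\sharp\in T L$), that $\mathcal{L}_{u^h}$ commutes with each $\iota_{\lambda_j^\sharp}$ (via the vanishing bracket from $T$-invariance of $u^h$), and that $d\Omega=d\Omega_Z=0$ so that Cartan's formula collapses to $d\iota=\mathcal{L}$. None of these commutations is individually deep, but they are precisely what turns the degree mismatch between $\Omega$ and $\Omega_Z$ into the clean equivalence required.
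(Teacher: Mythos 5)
Your proof is correct and follows essentially the same route as the paper's: part (i) via $\iota_{u^h}\omega=\pr^*\iota_u\omega_Z$ and injectivity of $\pr^*$, and part (ii) via Cartan's formula together with the commutation $[\mathcal{L}_{u^h},\iota_{\lambda_j^\sharp}]=0$ coming from the $T$-invariance of $u^h$. The only difference is that you spell out the final step --- injectivity of the iterated contraction $\iota_{\lambda_r^\sharp}\cdots\iota_{\lambda_1^\sharp}$ on top-degree forms of $L$, using freeness of the $T$-action --- which the paper leaves implicit in its ``vanishes if and only if'' conclusion.
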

	\begin{proof}
		\begin{enumerate}[i)]
            \item[]
			\item It follows from the fact $\pi^*$ is injective, and the identity
			\begin{equation}\label{F-holo}
                \iota_{u^h}\omega=\iota_{u^h}\pr^*\omega_Z=\pr^*\iota_u\omega_Z.
			\end{equation} 
			\item We first observe the following fact: for each $g\in T$, $g_*u^h$ is also a lift of $u$ that is orthogonal to $\xi^\sharp$ for each $\xi\in \lt$. As a result, $g_*u^h=u^h$ and $[\mathcal{L}_{u^h},\iota_{\xi^\sharp}]=-\mathcal{L}_{[\xi^\sharp,u^h]}=0$ for each $\xi\in \lt$. Now,
			\begin{align*}
				\pr^*(d\iota_u\operatorname{Im}(\Omega_Z)|_{\bar L})&=\pr^*(\mathcal{L}_u\operatorname{Im}(\Omega_Z)|_{\bar L}   )\\&=\mathcal{L}_{u^h}\iota_{\lambda_r^\sharp}\cdots\iota_{\lambda_1^\sharp}\operatorname{Im}(\Omega)|_L\\
				&=\iota_{\lambda_r^\sharp}\cdots\iota_{\lambda_1^\sharp}\mathcal{L}_{u^h}\operatorname{Im}(\Omega)|_L,
			\end{align*}
   which vanishes if and only if $d\iota_{u^h}\operatorname{Im}(\Omega)|_L=\mathcal{L}_{u^h}\operatorname{Im}(\Omega)|_L=0$.
		\end{enumerate}
	\end{proof}
        We define $Z^\vee$ to be the set of $(\bar L, \bar E)$ where $\bar L\subset Z\setminus D_Z$ is a special Lagrangian torus with respect to $\Omega_Z$, and $\bar E$ is a flat $\operatorname{U}(1)$ connection on $\bar L$. $Z^\vee$ has a K\"ahler structure and has a nonvanishing holomorphic volume form.
        On the other hand, since we assume $0$ is a regular value of $\mu^\vee$, \Cref{criticaliff} implies $1\in 
        \check T_\C$ is a regular value $\pi^\vee$. We equip $(\pi^\vee)^{-1}(1)$ with the holomorphic volume form $\Omega_1$ so that $(2\pi)^r\Omega_1\wedge (\pi^\vee)^*\Omega_{\check T_\C}= \Omega_{Y^\vee}^\vee$ along $(\pi^\vee)^{-1}(1)$.
	
	\begin{prop}\label{quotient=fibre}
		$Z^\vee$ is isomorphic to $(\pi^\vee)^{-1}(1)\subset Y^\vee$ as a Calabi-Yau manifold.
	\end{prop}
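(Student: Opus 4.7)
The plan is to construct an explicit map $\Phi: Z^\vee \to (\pi^\vee)^{-1}(1)$ and verify in turn that it is a set bijection, a biholomorphism, and compatible with both $\omega^\vee$ and $\Omega^\vee$. Given $(\bar L, \bar E)\in Z^\vee$, set $L=\pr^{-1}(\bar L)\subset \mu^{-1}(0)$ and $E=\pr^{*}\bar E$; conversely, given $(L,E)\in (\pi^\vee)^{-1}(1)$, the condition $\exp(-\mu(L))\hol_T(E)=1$ forces $\mu(L)=0$ (the real part) and $\hol_T(E)=1$ (the unitary part), so $L\subset \mu^{-1}(0)$ and $E$ descends to a flat bundle on $\bar L=\pr(L)$. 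The fact that $\bar L$ is special Lagrangian with respect to $\Omega_Z$ follows from $T$-invariance of $L$ together with \Cref{holvolquo}: contracting $\operatorname{Im}(\Omega)|_L=0$ by $\lambda_1^\sharp,\dots,\lambda_r^\sharp\in TL$ gives $\operatorname{Im}(\Omega_Z)|_{\bar L}=0$. This establishes the set-theoretic bijection.

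Next I would identify the tangent spaces. The tangent space to $(\pi^\vee)^{-1}(1)\subset Y^\vee$ at $(L,E)$ is the kernel of $d\log\pi^\vee$, which by \eqref{critical} consists of pairs $(u,\alpha)\in T_{\uL}Y^\vee$ satisfying $(-\iota_u\omega+i\alpha)(\xi^\sharp)=0$ for every $\xi\in\lt$. Separating real and imaginary parts, this means $\iota_u\omega$ annihilates the $T$-directions and $\alpha$ annihilates the $T$-orbits, i.e. $u$ is the horizontal lift $\bar u^h$ of some $\bar u\in C^\infty(N\bar L)$ and $\alpha=\pr^*\bar\alpha$ for some $\bar\alpha\in H^1(\bar L;\R)$. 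Using \Cref{matchdef} (and the observation that horizontal lifts of $\psi_{\bar L}$-harmonic forms are $\psi_L$-harmonic, which follows from the product structure of $\operatorname{vol}_L=\operatorname{vol}_{T\text{-orbit}}\wedge\pr^*\operatorname{vol}_{\bar L}$ and the analogous decomposition of $\Omega|_L$ coming from \Cref{holvolquo}), this gives a linear isomorphism $T_{(\bar L,\bar E)}Z^\vee\cong T_{(L,E)}(\pi^\vee)^{-1}(1)$. Compatibility with $J^\vee$ is immediate from \Cref{complexstr}: horizontal lifting intertwines the operation $\bar u\mapsto \bar a$ (the harmonic lift of $[-\iota_{\bar u}\omega_Z]$) with $u\mapsto a$, since the relation $[-\iota_u\omega]=\pr^*[-\iota_{\bar u}\omega_Z]$ preserves cohomology classes across the fibration.

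For the K\"ahler form, I would unfold \eqref{symplectic}. Because $\alpha_i=\pr^*\bar\alpha_i$ and $\iota_{u_i}\operatorname{Im}(\Omega)=\iota_{\bar u_i^h}\operatorname{Im}(\Omega)$ is basic with $\iota_{\lambda_r^\sharp}\cdots\iota_{\lambda_1^\sharp}\iota_{\bar u_i^h}\operatorname{Im}(\Omega)=\pm\iota_{\bar u_i}\operatorname{Im}(\Omega_Z)$ (the sign coming from commuting contractions past $\iota_{\bar u_i^h}$, which vanishes against $\lambda_j^\sharp$), Fubini along the $T$-fibration of $L\to \bar L$ gives $\omega^\vee|_{(\pi^\vee)^{-1}(1)}=\operatorname{vol}(T)\cdot \omega^\vee_Z$, where $\operatorname{vol}(T)$ is measured using $\lambda_1,\dots,\lambda_r$. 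The holomorphic volume form is the delicate step and I expect it to be the main obstacle. Plugging $r$ transverse directions into \eqref{holo-vol} corresponding to a basis dual to $d\log\pi^\vee$ produces, via \eqref{critical}, a factor $\bigwedge_{j=1}^r(-\iota_{u_j}\omega+i\alpha_j)$ whose integral along the $T$-orbits in $L$ computes $(2\pi)^r$ (the volume of $T=\lt/\Lambda$ under our integral basis), while the remaining $d-r$ slots reproduce $\Omega^\vee_Z$ on $Z^\vee$ after descent through \Cref{holvolquo}. This is precisely the factor built into the normalization $(2\pi)^r\Omega_1\wedge(\pi^\vee)^*\Omega_{\check T_\C}=\Omega^\vee_{Y^\vee}$, so $\Omega_1$ pulls back to $\Omega^\vee_Z$, finishing the proof.
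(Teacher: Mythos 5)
Your proposal is correct and follows essentially the same route as the paper: the same map $(\bar L,\bar E)\mapsto(\pr^{-1}(\bar L),\pr^*\bar E)$, the same splitting of the condition $\pi^\vee=1$ into $\mu(L)=0$ and $\hol_T(E)=1$, the same use of \Cref{matchdef} and the fiber-integration identity $\pr_*(\iota_{u^h}\Omega)=\iota_u\Omega_Z$ for the K\"ahler forms, and the same normalization check (the paper verifies $\Omega_{\check T_\C}(\pi^\vee_*(v_1,0),\dots,\pi^\vee_*(v_r,0))=(2\pi)^{-r}$ against the factor $(2\pi)^r$ in the definition of $\Omega_1$) for the holomorphic volume forms. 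The only loose ends are exactly where you flagged them — the bookkeeping of the $(2\pi)^r$ and the harmonicity of horizontal lifts — and the paper handles these just as you anticipate.
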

	
	\begin{proof}
		Let $\bar L\subset Z$ be a Lagrangian torus, then $L=\pr^{-1}(\bar{L})$ is a $T$-invariant Lagrangian torus of $Y$. This gives a bijection between Lagrangian tori of $Z$ and $T$-invariant Lagrangian tori of $Y$ contained in $\mu^{-1}(0)$. 
		
		Let $p\in L$ and $\bar p=\pr(p)\in \bar L$. For $v_{r+1},\dots,v_d\in T_pL$ and $\bar{v}_{j}=d\pr(u_j)$, we have
		\[\Omega_Z(\bar v_{r+1},\dots,\bar v_d)=\Omega(\lambda_1^\sharp,\dots,\lambda_r^\sharp,v_{r+1},\dots,v_d).\]
		Hence $\overline{L}$ is a special Lagrangian of $Z$ (with respect to $\Omega_Z$) if and only if $L$ is a special Lagrangian of $Y$. Moreover, a flat $\operatorname{U}(1)$ connection $E$ on $L$ is equal to the pullback of a connection $\bar E$ on $L$ if and only if $\operatorname{hol}_T(E)\equiv 1$.

        We define $F:Z^\vee\to Y^\vee$ by $F(\bar L,E)=(\pr^{-1}(\bar L),\pr^*E)$. It follows from the above that $F$ gives a bijection $Z\cong (\pi^\vee)^{-1}(1)$. 
		
		If $(u,\alpha)\in T_{(\bar L,\bar E)}Z^\vee$, then $dF(u,\alpha)=(u^h,\pr^* \alpha)$. In particular, $F$ is smooth by \Cref{matchdef}, and is a diffeomorphism by the inverse function theorem. The holomophicity of $F$ follows from the definitions and \Cref{F-holo}.
		
		We now check the compatibility of the K\"ahler forms. Let $u_1, u_2\in C^\infty (N{\bar L})$ be such that $d\iota_{u_j}\Omega_Z=d\iota_{u_j}\operatorname{Im}(\Omega_Z)=0$ for $j=1,2$. We claim that $\iota_{u_j}(\Omega_Z)=\pr_*(\iota_{u_j^h}\Omega)$ on $\bar L$. In fact, let $p\in L$, $\bar p=\pr(p)$, and $v_{r+2}\dots,v_{d}\in T_{\bar p}\bar L$, we consider the parametrization $g:[0,1]^r\to T\cdot p$ sending $t=(t_1,\dots,t_r)$ to $\exp(t_1\lambda_1+\cdots+t_r\lambda_r)p$, then
		\begin{align*}
			&(\pr_*\iota_{u_j^h}\Omega)_{\bar p}(v_{r+2},\dots,v_{d})\\
			=&\int_0^1\cdots \int_0^1\Omega(\lambda^\sharp_1,\cdots,\lambda_r^\sharp,u_j^h,v_{r+2}^h,\dots,v_{d}^h)_{g(t)\cdot p}dt_1\cdots dt_r\\
			=&\int_0^1\cdots \int_0^1\pr^*\Omega_Z(u_j,v_{r+2},\dots,v_{d})dt_1\cdots dt_r\\
			=&\Omega_Z(u_j,v_{r+2},\dots,v_{d}).
		\end{align*}
		Let $\alpha_1,\alpha_2\in H^1(\bar L)$, we have
		\begin{align*}
			\omega^\vee_{Z^\vee}((u_1,\alpha_1),(u_2,\alpha_2))=&\int_{\bar L}\alpha_2\wedge\iota_{u_1}\operatorname{Im}(\Omega_Z)-\alpha_1\wedge\iota_{u_2}\operatorname{Im}(\Omega_Z)\\
			=&\int_{\bar L}\alpha_2\wedge\pr_*(\iota_{u_1^h}\operatorname{Im}(\Omega))-\alpha_1\wedge\pr_*(\iota_{u_2^h}\operatorname{Im}(\Omega))\\
			=&\int_{ L}\pr^*\alpha_2\wedge\iota_{u_1^h}\operatorname{Im}(\Omega)-\pr^*\alpha_1\wedge\iota_{u_2^h}\operatorname{Im}(\Omega)\\
			=&\omega^\vee_{Y^\vee}((u_1^h,\pr^*\alpha_1),(u_2^h,\pr^*\alpha_2))
		\end{align*}
		as claimed.\newline
		Finally, we compare the holomorphic volume forms. Let $u_j\in C^{\infty}(N{\bar L})$ be such that $\iota_{u_j}\Omega_Z=\iota_{u_j}\operatorname{Im}(\Omega_Z)=0$ and $\alpha_j\in H^1(L)$ for $j=r+1,\dots,n$. For each $j=1,\dots,r$, we also choose a vector field $v_j$ along $L$ so that $\iota_{v_j}\omega$ is $\psi_L$-harmonic and pullbacks to $d\ell_j$ along the orbit map $T\to T\cdot p\subset L$. We have
		\begin{align*}
			&\Omega^\vee_{Z^\vee}(u_{r+1},\alpha_{r+1}),\dots,(u_d,\alpha_d)\\
			=&\int_{\bar{L}}((-\iota_{u_{r+1}}\omega_Z+i\alpha_{r+1})\wedge\cdots\wedge (-\iota_{u_d}\omega_Z+i\alpha_d))\\
			=&\int_L \iota_{v_1}\omega\wedge\cdots \wedge\iota_{v_r}\omega\wedge(-\iota_{u^h_{r+1}}\omega+i\pr^*\alpha_{r+1})\wedge\cdots\wedge (-\iota_{u^h_d}\omega+i\pr^*\alpha_d)\\
   =&\Omega^\vee_{Y^\vee}((v_1,0),\dots,(v_r,0),(u^h_{r+1},\pr^*\alpha_{r+1}),\dots,(u^h_d,\pr^*\alpha_d)).
		\end{align*}

    It suffices to verify that $\Omega_{\check T_\C} (\pi^\vee_*(v_1,0),\dots,\pi^\vee_*(v_r,0))=(2\pi )^{-r}$. Note that $2\pi i\ell_1,\dots,2\pi i\ell_r$ forms an integral basis for $\Lambda_{\check T}$ and we have $\langle 2\pi i\ell_j,\frac{\lambda_k}{2\pi i}\rangle=\delta_{jk}$. Therefore,
        \[\Omega_{\check T_\C}=\frac{1}{(2\pi )^{r}}\bigwedge_{j=1}^r d\lambda_j.\]
    On the other hand, we see from \Cref{critical} that $d\lambda_j(d \mu^\vee (v_k,0))=-\langle \ell_k,\lambda_j\rangle=-\delta_{jk}$, the result then follows.
	\end{proof}
	\begin{rem}
		There are two superpotentials defined on $Z^\vee$: one is obtained from $Z$ using (\ref{Wsum}), while the other is obtained by restricting the superpotential of $Y^\vee$ to $(\pi^\vee)^{-1}(1)\cong Z$. Teleman \cite{tel2014} checked in some cases that these two superpotentials agree. In the joint work of the second author with Lau and Li \cite{LLL}, the method of (equivariant) Lagrangian correspondence-trimodules is adopted to study this problem, and the difference between the two superpotentials is accounted for by counting the (equivariant) disc potential of the moment-level Lagrangian.
	\end{rem}
	%
	
	\subsection{Equivariant SYZ type transform relative to a complex torus}\label{fibration over complex tori}
    Throughout this subsection, we let $T^n$ be a compact torus of dimension $n$ with Lie algebra $\lt$, and $\rho:T\to T^n$ be a homomorphism. We also fix a $T$-equivariant holomorphic map 
    \[
    \pi:Y\to T^n_\C.
    \]
    We will show that there is a Hamiltonian $\check T^n$ action on $Y^\vee$ so that $\pi^\vee:Y^\vee\to \check T_\C$ is equivariant with respect to the dual homomorphism $\check \rho:\check T^n \to \check T$.

    Note that $\check T^n$ parametrized the unitary flat line bundles on $T^n_\C$, the $\check T^n$ action on $Y^\vee$ is defined by tensoring with the pullback of line bundles on $T_\C$.

    More explicitly, let $\uL=(L,E)\in Y^\vee$, and $\nabla_E$ denote the flat connection on $E$. For any $i\ell\in \Hom_\Z(\pi_1(T^n_\C),i\R)\cong i H^1(T^n_\C;\R)$, we choose a closed 1-form $\beta_\ell$ on $T^n_\C$ so that $i\beta_\ell$ represents the class of $i\ell$. The action of $\check T^n$ on $Y^\vee$ is defined as
    \[\exp(i\ell)\cdot (L,E)=(L,E'),\]
    where
    \[\nabla_{E'}=\nabla_{E}-i\pi^*\beta_\ell|_L.\]
	\begin{prop}\label{G-hamaction}
		The action of $\ \check T^n$ on $Y^\vee$ described above preserves the complex structure $J^\vee$ and is Hamiltonian with a moment map given by
		\begin{equation}\label{check pi}
			\mu^\vee(L,E)=-\int_L\log|\pi|\operatorname{Re}(\Omega).
		\end{equation}
	Here, we define $\log |g\exp (\xi)|=\xi$ for $g\in T^n$ and $i\xi\in \lt^n$,
	\end{prop}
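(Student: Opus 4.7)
The plan is to identify the fundamental vector field of the $\check T^n$-action explicitly, verify preservation of $J^\vee$ by a structural argument, and then verify the moment-map condition by a direct computation that exploits the holomorphicity of $\pi$. For $\xi = i\ell \in \check\lt^n$, the one-parameter subgroup $\exp(ti\ell)$ fixes $L$ and only shifts $\nabla_E$ by $-it\pi^*\beta_\ell|_L$, so in the $(u,\alpha)$-description of $T_{\uL}Y^\vee$ the fundamental vector field is $\xi^\sharp = (0,\,[\pi^*\beta_\ell|_L])$. Preservation of $J^\vee$ is structural: the action preserves the projection $Y^\vee \to \mathcal M$ onto the moduli of special Lagrangians, translating vertically within each fibre, which is a torsor over $H^1(L;\R)/(\text{lattice})$. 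Since $J^\vee$, as defined in \Cref{complexstr}, depends only on $L$ and the ambient K\"ahler data on $Y$, not on $E$, the vertical vector field $\xi^\sharp$ preserves it; equivalently $\mathcal L_{\xi^\sharp}J^\vee = 0$, and the flow is holomorphic.

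For the moment map, the key input is that $g := \ell(\log\pi):Y\to\C/2\pi i\Z$ is (multivalued) holomorphic. Writing $dg = df + i\,\pi^*\beta_\ell$ with $f := \ell(\log|\pi|)$, the Cauchy--Riemann equations for $\pi$ give the pointwise identity $\pi^*\beta_\ell = J^*df$ on $Y$ (for the representative $\beta_\ell = \operatorname{Im}(d\log)(\ell)$ on $T^n_\C$). Since $\Omega$ is a top holomorphic form, $dg\wedge\Omega$ is of type $(n+1,0)$ in complex dimension $n$ and hence vanishes; separating real and imaginary parts of $dg\wedge\Omega = 0$ yields
\[
J^*df \wedge \operatorname{Im}(\Omega) = df \wedge \operatorname{Re}(\Omega).
\]
Starting from (\ref{symplectic}) with $\xi^\sharp$ from above, applying the Leibniz rule
\[
\iota_u\bigl(J^*df \wedge \operatorname{Im}(\Omega)\bigr) = J^*df(u)\operatorname{Im}(\Omega) - J^*df \wedge \iota_u\operatorname{Im}(\Omega)
\]
together with $\operatorname{Im}(\Omega)|_L = 0$, and using Stokes on the closed torus $L$ to integrate by parts $\int_L df|_L \wedge \iota_u\operatorname{Re}(\Omega)|_L = -\int_L f\,\mathcal L_{\tilde u}\operatorname{Re}(\Omega)|_L$ (valid since $\operatorname{Re}(\Omega)$ is closed), one arrives at
\[
\omega^\vee(\xi^\sharp,(u,\alpha)) = \int_L \mathcal L_{\tilde u}\bigl(f\operatorname{Re}(\Omega)\bigr)\big|_L = \frac{d}{dt}\bigg|_{t=0}\int_{L_t} f\operatorname{Re}(\Omega) = -d\mu^\vee_\xi(u,\alpha),
\]
which matches the paper's convention $\omega(\cdot,\xi^\sharp) = d\mu_\xi$. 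Preservation of $\omega^\vee$ under the action is then automatic from $\mathcal L_{\xi^\sharp}\omega^\vee = d\iota_{\xi^\sharp}\omega^\vee = -d^2\mu^\vee_\xi = 0$.

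The main obstacle is establishing and correctly using the holomorphic identity $J^*df \wedge \operatorname{Im}(\Omega) = df \wedge \operatorname{Re}(\Omega)$, which requires the right choice of representative $\beta_\ell$ and exploits the Cauchy--Riemann equations for $\pi$; everything else, including the Leibniz step, the Stokes integration by parts on the closed Lagrangian torus, and the identification of $\mathcal L_{\tilde u}(f\operatorname{Re}(\Omega))$ with the derivative of the integral along the normal deformation, is routine Cartan calculus.
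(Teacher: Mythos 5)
Your proof is correct and follows the same skeleton as the paper's: identify the fundamental vector field as $(0,[\pi^*\beta_\ell|_L])$, observe that $J^\vee$-preservation is automatic because the action only translates the flat connection while $J^\vee$ depends on $L$ and the ambient data alone (the paper phrases this as the differential being the identity on $H^1(L;\C)$), then compute $d\mu^\vee_\xi(u,\alpha)=\int_L\mathcal L_u(f\operatorname{Re}\Omega)$, integrate by parts with Stokes, and reduce the moment-map condition to a pointwise identity on $L$ relating $J^*df\wedge\iota_u\operatorname{Im}(\Omega)$ to $df(u)\operatorname{Re}(\Omega)$ and $df\wedge\iota_u\operatorname{Re}(\Omega)$. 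The one place you diverge is in how that pointwise identity is proved: the paper verifies it by a direct computation in adapted complex coordinates (its \Cref{lemmavolform}), whereas you derive it from the vanishing of the $(d+1,0)$-form $dg\wedge\Omega$ for the locally defined holomorphic function $g=\ell(\log\pi)$, separating real and imaginary parts and using $\operatorname{Im}(\Omega)|_L=0$. Your route is coordinate-free and makes transparent exactly where holomorphicity of $\pi$ and speciality of $L$ enter, at the cost of having to pin down the correct representative $\beta_\ell=\operatorname{Im}(d\log)(\ell)$ (harmless, since the symplectic pairing only sees the cohomology class); the paper's coordinate check is more pedestrian but avoids any discussion of multivaluedness. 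Both are complete arguments.
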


\begin{proof}
If we view $g\in\check T^n$ as a map $Y^\vee\to Y^\vee$, then $(dg)_{\uL}:T_{\uL}Y^\vee\to T_{g\cdot \uL}Y^\vee$ can be identified canonically with the identity map on $H^1(L;\C)$, and is thus complex linear.

Let $(u,\alpha)\in T_{\uL}Y^\vee$, and $i\ell\in \check\lt^n =i(\lt^n)^*$. If $\mu^\vee$ is defined by the formula (\ref{check pi}), then
		\begin{align*}
			\langle  i\ell,d\mu^\vee (u,\alpha)\rangle&=-\int_L\mathcal{L}_{u}\left (\langle i\ell,\log|\pi|\rangle\operatorname{Re}(\Omega)\right)\\
			&=-\int_L\langle i\ell,d\log|\pi|\rangle(u)\operatorname{Re}(\Omega)+\int_L\langle i\ell,d\log|\pi|\rangle\wedge\iota_u\operatorname{Re}(\Omega).
		\end{align*}

We write $a=\langle i\ell,d\log|\pi|\rangle$, and $b=-a\circ J_Y$. Since $\pi$ is holomorphic,
		\begin{equation*}
			b(\xi^\sharp)=\ell(\xi)
		\end{equation*}
For any $\xi\in \lt^n$. Therefore,
		\begin{align*}
			\check \omega((u,\alpha),(i\ell)^\sharp)&=\int_Lb\wedge \iota_u\operatorname{Im}(\Omega).
		\end{align*}
  It remains to verify
  \begin{equation*}
      b\wedge \iota_u\operatorname{Im}(\Omega)=-b(J_{T^n_\C} u)\operatorname{Re}(\Omega)+b\circ J_{T^n_\C} \wedge\iota_u\operatorname{Re}(\Omega)
  \end{equation*}
 on $L$, which follows from \Cref{lemmavolform} below.
 \end{proof}
	
	\begin{lem}\label{lemmavolform}
		Let $(z_1,\dots,z_d)$ be the complex coordinates of $\ \C^d\cong \R^{2d}$ with $z_j=x_j+iy_j$, and $\Omega=dz_1\wedge\cdots\wedge dz_d$. Let $\alpha=\sum_j(a_jdx_j+b_jdy_j)$ and $u=\sum_j(p_j\frac{\partial}{\partial x_j}+q_j\frac{\partial}{\partial y_j})$.
		Then
		\[\alpha\wedge \iota_u\operatorname{Im}(\Omega)|_L=-\alpha(Ju)\operatorname{Re}(\Omega)|_L+\alpha\circ J\wedge \iota_u\operatorname{Re}(\Omega)|_L=\sum_ja_jq_jdx_1\wedge\cdots\wedge dx_n\]
		where $L$ is the linear subspace of $\R^{2d}$ defined by $y_1=y_2=\cdots=y_d=0$
	\end{lem}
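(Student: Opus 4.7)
This is a pointwise linear-algebra identity on $\C^d \cong \R^{2d}$ with $L = \{y_1=\cdots=y_d=0\}$. My plan is to observe that each of the three expressions in the statement, once restricted to $L$, is a top-degree form on $L$ and hence a scalar multiple of $dx_1 \wedge \cdots \wedge dx_d$. It therefore suffices to read off the coefficient in front of $dx_1 \wedge \cdots \wedge dx_d$ in each of the three expressions and check they all equal $\sum_j a_j q_j$. The whole argument is bilinear in $(u,\alpha)$, so in principle one could reduce to the basis cases $\alpha = dx_k$ or $dy_k$ and $u = \partial_{x_k}$ or $\partial_{y_k}$, but a direct expansion is just as short.

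\textbf{Step 1: the left-hand side.} Expand $\Omega = (dx_1+i\,dy_1)\wedge\cdots\wedge(dx_d+i\,dy_d)$ and collect the imaginary part. The terms of $\operatorname{Im}(\Omega)$ with exactly one $dy$ are
$\sum_j dx_1 \wedge \cdots \wedge dx_{j-1}\wedge dy_j \wedge dx_{j+1}\wedge\cdots\wedge dx_d$,
while all other terms of $\operatorname{Im}(\Omega)$ have at least three $dy$'s. After applying $\iota_u$ and restricting to $L$, only contributions pure in $dx$ survive; these come from contracting $q_j \partial_{y_j}$ against the single $dy_j$-factor and then wedging with the $a_j\,dx_j$ piece of $\alpha$. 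The two sign $(-1)^{j-1}$ factors cancel, yielding $\alpha \wedge \iota_u \operatorname{Im}(\Omega)|_L = \bigl(\sum_j a_j q_j\bigr) dx_1 \wedge \cdots \wedge dx_d$.

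\textbf{Step 2: the right-hand side.} With the standard $J$ given by $J\partial_{x_j}=\partial_{y_j}$, $J\partial_{y_j}=-\partial_{x_j}$, one computes $Ju = \sum_j(p_j\partial_{y_j} - q_j\partial_{x_j})$, so $\alpha(Ju) = \sum_j(b_j p_j - a_j q_j)$. Similarly $\alpha \circ J = \sum_j(-a_j\, dy_j + b_j\, dx_j)$. Since $\operatorname{Re}(\Omega) = dx_1\wedge\cdots\wedge dx_d + (\text{terms with }\ge 2\; dy\text{'s})$, the only pure-$dx$ part of $\iota_u \operatorname{Re}(\Omega)$ comes from contracting the leading term with $\sum_j p_j \partial_{x_j}$, giving $\sum_j p_j(-1)^{j-1} dx_1 \wedge\cdots\widehat{dx_j}\cdots\wedge dx_d$. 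Wedging with the pure-$dx$ part $\sum_k b_k\, dx_k$ of $\alpha\circ J$ (the $dy_j$ terms die on $L$) yields $\alpha\circ J \wedge \iota_u\operatorname{Re}(\Omega)|_L = \bigl(\sum_j b_j p_j\bigr) dx_1\wedge\cdots\wedge dx_d$. Combining,
\[
-\alpha(Ju)\operatorname{Re}(\Omega)|_L + \alpha\circ J \wedge \iota_u \operatorname{Re}(\Omega)|_L = \Bigl(-\sum_j(b_j p_j - a_j q_j) + \sum_j b_j p_j\Bigr) dx_1\wedge\cdots\wedge dx_d,
\]
which simplifies to $\bigl(\sum_j a_j q_j\bigr) dx_1\wedge\cdots\wedge dx_d$, matching Step 1.

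\textbf{Remark on difficulty.} There is no conceptual obstacle here; the main thing to watch is keeping the signs straight when tracking which terms of $\operatorname{Re}(\Omega)$ and $\operatorname{Im}(\Omega)$ survive after $\iota_u$, wedge, and pullback to $L$. Once one notes that on $L$ only the pure-$dx$ top form contributes, the bookkeeping collapses to the two cases above.
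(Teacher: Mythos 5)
Your proof is correct, and it is exactly the "direct calculation" that the paper leaves to the reader: all three expressions restrict on $L$ to a multiple of $dx_1\wedge\cdots\wedge dx_d$, and your coefficient bookkeeping (the cancelling $(-1)^{j-1}$ signs, $\alpha(Ju)=\sum_j(b_jp_j-a_jq_j)$, and $\alpha\circ J=\sum_j(b_j\,dx_j-a_j\,dy_j)$) is accurate. Nothing further is needed.
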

	\begin{proof}
		It follows from a direct calculation.
	\end{proof}
    \begin{prop}\label{TT'eq}
        The holomorphic map $\mu^\vee:Y^\vee\to \check T_\C$ is $\check T^n$-equivariant.
    \end{prop}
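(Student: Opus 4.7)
The plan is to verify the equivariance by direct unfolding of the definitions. Fix $g=\exp(i\ell)\in \check T^n$ with $i\ell\in i(\lt^n)^*$, and an element $\uL=(L,E)\in Y^\vee$. By construction, the $\check T^n$-action does not move $L$; it only replaces $\nabla_E$ by $\nabla_{E'}=\nabla_E-i\pi^*\beta_\ell|_L$. Consequently $\mu(g\cdot L)=\mu(L)$, and in view of the formula $\pi^\vee(L,E)=\exp(-\mu(L))\cdot\hol_T(E)$ from \Cref{holo-pro}, the equivariance reduces to the identity
\[
\hol_T(E') \;=\; \check\rho(g)\cdot \hol_T(E)\qquad\text{in }\check T.
\]

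To establish this identity, pick any basepoint $p\in L$ and, for each $\lambda\in \Lambda_T$, form the $T$-orbit loop $\gamma_\lambda\colon [0,1]\to L$, $s\mapsto \exp(s\lambda)\cdot p$. The standard comparison of holonomies of a $U(1)$-connection differing by the purely imaginary 1-form $-i\pi^*\beta_\ell$ gives
\[
\hol_{\gamma_\lambda}(E') \;=\; \hol_{\gamma_\lambda}(E)\cdot \exp\!\Bigl(i\int_{\gamma_\lambda}\pi^*\beta_\ell\Bigr).
\]
Now the $T$-equivariance of $\pi\colon Y\to T^n_\C$ (with $T$ acting through $\rho$) implies that the loop $\pi\circ\gamma_\lambda$ in $T^n_\C$ represents the class $\rho_*\lambda\in H_1(T^n_\C;\Z)$ (it is, up to the fixed translation by $\pi(p)$, the loop $s\mapsto \exp(s\,\rho(\lambda))$ in $T^n\subset T^n_\C$). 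Since $[\beta_\ell]=\ell\in H^1(T^n_\C;\R)$, we get
\[
\int_{\gamma_\lambda}\pi^*\beta_\ell \;=\; \langle \ell,\rho_*\lambda\rangle.
\]

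Finally, the definition of the dual homomorphism $\check\rho\colon\check T^n\to \check T$ gives
\[
\check\rho(\exp(i\ell))(\lambda) \;=\; \exp(i\langle\ell,\rho_*\lambda\rangle),
\]
so the extra holonomy factor equals $\check\rho(g)$ evaluated at $\lambda$. Since $\lambda$ was an arbitrary generator of $\Lambda_T=\pi_1(T)$, this yields $(\ast)$, and then the equivariance of $\pi^\vee$ follows from the fact that $\check T_\C$ is abelian, so left multiplication by $\check\rho(g)$ commutes with the factor $\exp(-\mu(L))$.

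There is no real obstacle beyond careful sign bookkeeping: the whole argument is a direct unwinding of definitions, with the only substantive input being the $T$-equivariance of $\pi$, which is precisely what makes the orbit loops push down to the correct homology classes in $T^n_\C$. A completely parallel computation (or, equivalently, the fact that a moment map for a holomorphic Hamiltonian action on a K\"ahler manifold is automatically equivariant) shows that $\mu^\vee$ itself is $\check T^n$-equivariant, so either reading of the statement goes through.
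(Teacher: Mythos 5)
Your proof is correct and follows essentially the same route as the paper: the paper packages the extra holonomy factor as $\hol_T(\pi^*E_g|_L)=\check\rho(g)$ for the pulled-back flat bundle $E_g$, while you unwind exactly that computation explicitly via the connection form $\beta_\ell$ and the orbit loops $\gamma_\lambda$. (You are also right that the codomain $\check T_\C$ in the statement indicates the map intended is $\pi^\vee$, not the moment map $\mu^\vee$; the paper's own proof invokes the formula for $\pi^\vee$.)
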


    \begin{proof}
    Let $g \in \check{T}^n$, and denote by $E_g$ the corresponding unitary flat line bundle on $T^n_\C$, i.e.,  $\hol_T(E_g) = \check{\rho}(g)$.
    
    Now, let $\uL=(L,E)\in Y^\vee$, then $g\cdot \uL= (L,E')$, where $E'=E\otimes \pi^*E_g|_L$. Therefore, 
        \[\hol_T(E')=\check \rho(g)\hol_T(E).\]
        The result then follows from \Cref{formulapi}.
    \end{proof}
	\begin{prop}\label{W0Ginv}
 For each disc class $\beta\in H_2(Y,L)$, the function $z_\beta:Y^\vee\to \C$,
 \[z_\beta:(L,E)\mapsto \exp(-\int_\beta\omega)\hol_{\partial \beta}(E)\]
 is $\check T^n$-invariant. In particular, if the sum (\ref{Wsum}) converges absolutely, then $W:Y^\vee\to \C$ is $\check T^n$-invariant.
	\end{prop}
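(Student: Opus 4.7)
The plan is to verify the two factors defining $z_\beta$ separately. The first factor $\exp(-\int_\beta \omega)$ depends only on the symplectic area of the disc class $\beta$ with boundary on $L$. Since the $\check T^n$-action fixes $L$ (changing only the flat connection $E$ on it) and certainly fixes $\omega$, this factor is manifestly $\check T^n$-invariant. So all the content is in showing $\hol_{\partial\beta}(E)$ is invariant.

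For the holonomy factor, I would trace through the definition of the action. Given $i\ell \in \check{\lt}^n$, choose a closed 1-form $\beta_\ell$ on $T^n_\C$ representing the class of $i\ell$. By definition, $\exp(i\ell)\cdot (L,E)=(L,E')$ where $\nabla_{E'}=\nabla_E - i\pi^*\beta_\ell|_L$. Using the holonomy convention from the paper's footnote, for any loop $\gamma$ in $L$,
\[
\hol_\gamma(E') \;=\; \hol_\gamma(E)\cdot \exp\!\left(i\int_\gamma \pi^*\beta_\ell\right)\;=\;\hol_\gamma(E)\cdot\exp\!\left(i\int_{\pi_*\gamma}\beta_\ell\right).
\]
Applying this to $\gamma=\partial\beta$, the claim reduces to showing that $\int_{\pi_*\partial\beta}\beta_\ell=0$.

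The crucial geometric point is that $\partial\beta \in H_1(L)$ bounds a disc in $Y$, namely any representative $u:(\D,\partial\D)\to(Y,L)$ of the class $\beta$. Its composition with $\pi$ gives a disc $\pi\circ u:(\D,\partial\D)\to (T^n_\C,\pi(L))$ whose boundary realizes $\pi_*\partial\beta$. Since $\beta_\ell$ is closed, Stokes's theorem yields
\[
\int_{\pi_*\partial\beta}\beta_\ell \;=\; \int_{\D}(\pi\circ u)^*d\beta_\ell \;=\; 0.
\]
Hence $\hol_{\partial\beta}(E')=\hol_{\partial\beta}(E)$, and $z_\beta$ is $\check T^n$-invariant. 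The last sentence of the statement follows immediately: if (\ref{Wsum}) converges absolutely, $W$ is a $\check T^n$-invariant sum of $\check T^n$-invariant functions.

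The only mildly delicate step is ensuring the holonomy computation is independent of the choice of representative 1-form $\beta_\ell$ in the cohomology class of $i\ell$, but this is automatic: a different choice differs by an exact form $dh$, and its pullback contributes $\int_{\partial\beta}d(h\circ\pi|_L)=0$ by Stokes's theorem on $\partial\beta$ (a cycle). So the obstruction I expected --- namely that $\partial\beta$ might have nontrivial image in $H_1(T^n_\C)$ --- vanishes precisely because $\beta$ is a \emph{disc} class, not an arbitrary relative homology class.
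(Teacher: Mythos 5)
Your proof is correct and is essentially the paper's argument: the paper simply observes that $\pi_*(\partial\beta)$ is null-homotopic in $T^n_\C$ (because $\beta$ is a disc class), so any unitary flat line bundle on $T^n_\C$ pulls back trivially along $\partial\beta$ and the holonomy factor is unchanged. Your Stokes-theorem computation with the explicit connection $\nabla_{E'}=\nabla_E-i\pi^*\beta_\ell$ is just a more spelled-out version of the same observation.
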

	\begin{proof}
		Let $\beta\in H_2(Y,L)$, then the image of $\partial \beta$ under $\pi$ is null homotopic, thus any unitary flat line bundle on $T^n_\C$ pullbacks to a trivial line bundle on $\partial \beta$. The result then follows from the formula of $z_\beta$ above.
	\end{proof}
 
	\subsection{Equivariant fibrations over a toric variety}\label{fibration over toric varieties}
 In the remaining part of this subsection, we let $\PP_\Sigma$ denote an $n$ dimensional smooth toric variety, with dense torus $T^n_\C\cong (\Cx)^n$ and fan $\Sigma$. We also let $\rho:T\to T^n$ be a homomorphism, therefore, $T$ acts on $\PP_\Sigma$.

Let $\Sigma (1)=\{u_{1},u_{2},\dots ,u_{l}\}$ be the generators of the rays
in $\Sigma $. We write $H_{j}$ for the toric divisor in $\mathbb{P}_{\Sigma }$
corresponding to $u_{j}$.

We assume there is a $T$-equivariant holomorphic map $\pi :Y\rightarrow \mathbb{P}_{\Sigma }$ such that the anti-canonical divisor $D$ of $Y$ has a decomposition
\begin{equation*}
D=\sum_{j=0}^{l}D_{j},
\end{equation*}%
where $D_{j}=\pi ^{-1}(H_{j})$ (counted with multiplicities) for $j>0$ and $%
D_{0}$ has no irreducible components contained in $\cup _{j}D_{j}$. This
implies that $\pi (Y\setminus D)\subset T_{\C}^{n}$, the results in 
\Cref{fibration over
complex tori} provide a Hamiltonian $\check{T}^{n}$-action on $Y^{\vee }$.

 \begin{prop}\label{equivariant W}
 Let $\beta\in H_2(Y,L)$ be a disc class such that $\beta\cdot D_j=n_j$ for $j=1,\dots,l$, and $z_\beta:Y^\vee\to \C$ be the function
 \[\uL=(L,E)\mapsto \exp(-\int_\beta\omega)\hol_{\partial \beta}(E).\]
 Then for each $g\in \check T^n$ and $\uL=(L,E)\in Y^\vee$, we have
 \[z_\beta(g
 \uL)=\left(\prod_{i=1}^lu_j(g)^{n_j}\right) z_\beta(\uL).\]
 \end{prop}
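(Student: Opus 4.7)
The plan is to unwind the definition of $z_\beta$ and of the $\check T^n$-action, reducing the statement to a classical fact in toric geometry about how the homology class of the boundary of a disc in $\PP_\Sigma$ relative to $T^n_\C$ is read off from its intersection numbers with the toric divisors $H_j$.

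First I would compute $z_\beta(g \cdot \uL)$ directly. Since the action of $g = \exp(i\ell) \in \check T^n$ only modifies the flat connection on $L$ (via tensoring $E$ with $\pi^* E_g|_L$) and leaves the symplectic structure of $Y$ and the Lagrangian $L$ unchanged, the factor $\exp(-\int_\beta \omega)$ is unaffected. By multiplicativity of holonomy under tensor product,
\[
z_\beta(g \cdot \uL) \;=\; \exp\!\left(-\int_\beta \omega\right) \hol_{\partial \beta}(E) \cdot \hol_{\partial \beta}\bigl(\pi^* E_g|_L\bigr) \;=\; z_\beta(\uL) \cdot \hol_{\pi_*(\partial \beta)}(E_g).
\]
Thus it remains to identify $\hol_{\pi_*(\partial \beta)}(E_g)$ in terms of the $n_j$'s.

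Next I would identify the class $[\pi_*(\partial \beta)] \in H_1(T^n_\C) \cong \Lambda_{T^n}$. Because $L \subset Y \setminus D$ maps into $T^n_\C$, the composition $\pi \circ \beta$ defines a relative cycle in $H_2(\PP_\Sigma, T^n_\C)$, and its intersection number with $H_j$ is $(\pi_*\beta) \cdot H_j = \beta \cdot \pi^*H_j = \beta \cdot D_j = n_j$. The standard toric-geometry statement is that the connecting map $\partial: H_2(\PP_\Sigma, T^n_\C) \to H_1(T^n_\C) = N$ sends such a relative class to $\sum_j n_j u_j$, where the ray generator $u_j$ is viewed as an element of the cocharacter lattice $\pi_1(T^n_\C)$. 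This follows by reducing to the local model of a small meridian disc around a smooth point of $H_j$, whose boundary in $T^n_\C$ is precisely the circle generated by $u_j$, and then using additivity. Thus $[\pi_*(\partial \beta)] = \sum_j n_j u_j$ in $\Lambda_{T^n}$.

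Combining, since $E_g$ is the flat $\oU(1)$-bundle on $T^n_\C$ determined by $g \in \check T^n = \Hom_\Z(\Lambda_{T^n}, \oU(1))$, evaluation on a loop $\gamma \in \Lambda_{T^n}$ gives $\hol_\gamma(E_g) = g(\gamma)$. Applied to $\gamma = \sum_j n_j u_j$ this yields
\[
\hol_{\pi_*(\partial \beta)}(E_g) \;=\; g\!\left(\sum_j n_j u_j\right) \;=\; \prod_{j=1}^l u_j(g)^{n_j},
\]
where $u_j(g)$ is the character of $\check T^n$ dual to $u_j$ appearing in \Cref{galedef}. Substituting back finishes the proof.

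The main obstacle is the toric-geometric computation of $[\pi_*(\partial \beta)]$; the rest is essentially bookkeeping about how tensoring by a pullback flat bundle affects holonomy. I would present it cleanly by first reducing to the case $Y = \PP_\Sigma$ and $\pi = \id$ by naturality of $\pi_*$, and then verifying the formula on the generators of $H_2(\PP_\Sigma, T^n_\C)$ given by small meridian discs at each $H_j$, where the statement is immediate.
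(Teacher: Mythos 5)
Your proposal is correct and follows essentially the same route as the paper: reduce to multiplicativity of holonomy under tensoring with $\pi^*E_g$, then identify $\pi_*(\partial\beta)$ with the class $\sum_j n_j u_j$ in the cocharacter lattice of $T^n_\C$ and evaluate $g$ on it. The paper simply asserts the homotopy $\pi_*(\partial\beta)\simeq \prod_j u_j^{n_j}(z)\cdot p$ without the meridian-disc justification you sketch, but the argument is the same.
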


 \begin{proof}
    Suppose $g\in \check T^n$ corresponds to a unitary flat line bundle $E_g$ on $T^n_\C$, then $z_\beta(g\uL)=\hol_{\pi_*(\partial \beta)}(E_g)z_\beta(p)$. However, $\pi_*(\partial \beta)$ is homotopic to the product loop $S^1\to \PP_\Sigma$, $z\mapsto \prod_{j=1}^lu_{j}^{n_j}(z)\cdot p$, where $p\in L$ is an arbitrary point. Since $\hol_{u_j(S^1)\cdot p}(E_g)=u_j(g)$, we have
    \[\hol_{\pi_*(\partial \beta)}(E_g)=\prod_{i=1}^lu_j(g)^{n_j}\]
    as claimed.	
	\end{proof}
 We can understand \Cref{equivariant W} as follows. Suppose the sum (\ref{Wsum}) converges absolutely, then we can write
 \[W=\sum_{j=0}^lW_j,\]
 where
 \begin{equation}\label{Wjdef}
			W_j(L,E)= \sum_{\substack{\beta\in H_2(Y,L)\\\beta\cdot D=\beta\cdot D_j=1}}n_\beta(L)z_\beta
		\end{equation}
  for $j=0,1,\dots,l$.
	
 \Cref{equivariant W} says that $W_0$ is $\check T^n$-invariant, and the holomorphic map 
  \[W_{\Sigma(1)}=(W_1,W_2,\dots,W_l):Y^\vee\to \C^{\Sigma(1)}\cong \C^l\] 
  is $\check T^n$-equivariant, where $\check T^n_\C$ acts on $\C^l$ via the homomorphism
  \[
  (u_1,\dots,u_l):T^n_\C\to(\Cx)^l. 
  \]

We can summarize the results of this section in the following theorem. 
\begin{thm}\label{SYZtransform}
        Let $Y$ be a K\"ahler manifold with a Hamiltonian $T$-action, $\rho:T\to T^n$ be a homomorphism, and $\PP_\Sigma$ be an $n$ dimensional smooth toric variety with dense torus $T_\C^n$ and fan $\Sigma$. Suppose $\Sigma(1)=\{u_1,u_2,\dots,u_l\}$, and $H_i\subset \PP_\Sigma$ be the toric divisor corresponding to $u_i$. Assume 
        \begin{enumerate}[(a)]
            \item there is a $T$-equivariant holomorphic map $\pi:Y\to \PP_\Sigma$;
            \item $Y$ has a $T$-invariant anti-canonical divisor 
        \[D=\sum_{j=0}^lD_j\]
        such that $D_i$ is the inverse images of the toric divisors $H_1$ of $\PP_\Sigma$, and $\pi(D_0)$ does not contain any toric divisors;
        \item The sum 
        \begin{equation*}
			\sum_{\substack{\beta\in H_2(Y,L)\\\beta\cdot D=1}}n_\beta(L)\exp (-\int_\beta\omega)\operatorname{hol}_{\partial \beta}(E)
		\end{equation*} 
  converges absolutely.
        \end{enumerate}
         Then
        \begin{enumerate}[(i)]
            \item there is a holomorphic map $\pi^\vee:Y^\vee\to \check T_\C$ defined by 
            \begin{equation*}
			\pi^\vee(L,E)\mapsto \exp(-\mu(L))\cdot \hol_T(E);
		\end{equation*}
            \item there is a holomorphic and Hamiltonian $\check T^n$-action on $Y^\vee$, with moment map $\mu^\vee:Y^\vee\to \lt^*$ given by 
            \begin{equation*}
			\mu^\vee(L,E)=-\int_L\log|\pi|\operatorname{Re}(\Omega).
		\end{equation*}
            \item if we write $W=\sum_{j=0}^lW_j$, where $W_j$ is defined by
        \begin{equation*}
			W_j:(L,E)\mapsto \sum_{\substack{\beta\in H_2(Y,L)\\\beta\cdot D=\beta\cdot D_j=1}}n_\beta(L)z_\beta,
		\end{equation*} then $W_0$ is $\check T^n$-invariant, and the holomorphic map 
            \[F=W_{\Sigma(1)}\times \pi^\vee=(W_1,W_2,\dots,W_l)\times \pi^\vee:Y^\vee\to\C^{\Sigma(1)}\times \check T_\C\] 
            is $\check T^n$-equivariant.
        \end{enumerate}
\end{thm}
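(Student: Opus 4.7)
The plan is to assemble the theorem from the component propositions already established in this section, treating each clause in turn. The three assertions are of quite different flavors, so I would handle them sequentially.

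For (i), I would simply invoke Proposition \ref{holo-pro}: the key calculation is to differentiate $\log \pi^\vee(L,E) = -\mu(L) + \log \hol_T(E)$ along a tangent vector $(u,\alpha) \in T_{\uL} Y^\vee$ and observe that the result evaluated at $\xi \in \lt$ equals $(-\iota_u \omega + i\alpha)(\xi^\sharp)$. This expression is $\C$-linear in $(u,\alpha)$ with respect to the almost complex structure $J^\vee$ of Lemma \ref{complexstr} (where $(u,\alpha) \mapsto (a,[-\iota_u\omega])$), since replacing $(u,\alpha)$ by $J^\vee(u,\alpha) = (a, [-\iota_u\omega])$ multiplies $-\iota_u\omega + i\alpha$ by $i$ on cohomology.

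For (ii), I would first define the $\check T^n$-action on $Y^\vee$ by tensoring the flat bundle $E$ with the pullback $\pi^*E_g|_L$ of the unitary flat line bundle $E_g$ on $T^n_\C$ corresponding to $g$, exactly as in the construction preceding Proposition \ref{G-hamaction}. The differential of this action at $\uL$ acts trivially on the deformation component $u$ and tautologically on $\alpha \in H^1(L;\R)$, so it is $J^\vee$-linear. Holomorphicity of the action then follows. The moment map computation is the content of Proposition \ref{G-hamaction}: I would differentiate $\langle i\ell, \mu^\vee\rangle = -\int_L \langle i\ell, \log|\pi|\rangle \operatorname{Re}(\Omega)$ using Cartan's formula, and pair it with $\omega^\vee((u,\alpha), (i\ell)^\sharp)$ using definition \eqref{symplectic}. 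The identification reduces, on $L$, to the pointwise algebraic identity of Lemma \ref{lemmavolform} applied to the holomorphic coordinates on $T^n_\C$ and the fact that $\langle i\ell, d\log|\pi|\rangle \circ J_Y$ evaluated on $\xi^\sharp$ returns $\ell(\xi)$. Finally, the $\check T^n$-equivariance of $\pi^\vee$ along $\check\rho$ is Proposition \ref{TT'eq}: tensoring $E$ with $\pi^*E_g|_L$ multiplies $\hol_T(E)$ by $\check\rho(g)$.

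For (iii), the main input is Proposition \ref{equivariant W}. For a disc class $\beta \in H_2(Y,L)$ with $\beta \cdot D_j = n_j$, the boundary $\pi_*(\partial \beta) \subset T^n_\C$ is homotopic to the product loop $\prod u_j^{n_j}$, so the holonomy of $E_g$ along it is $\prod u_j(g)^{n_j}$. Thus each monomial $z_\beta$ transforms by the character $\prod u_j(g)^{n_j}$. Summing over the classes defining $W_j$ in \eqref{Wjdef}: since discs contributing to $W_0$ meet only $D_0$, they have trivial weight and $W_0$ is $\check T^n$-invariant; discs contributing to $W_j$ for $j \geq 1$ have weight exactly $u_j$, so $W_j$ is equivariant for the character $u_j: \check T^n \to \Cx$. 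Combined with part (ii), the map $F = (W_1,\dots,W_l,\pi^\vee): Y^\vee \to \C^{\Sigma(1)} \times \check T_\C$ is equivariant for the $\check T^n$-action on the target given by $(u_1,\dots,u_l) \times \check\rho$.

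The main obstacle here is not logical but analytic: the statements about $W$ itself (as opposed to the individual monomials $z_\beta$) rely on assumption (c), namely absolute convergence of the disc sum, together with the standing assumptions needed to make $n_\beta(L)$ well-defined (Maslov index constraints, regularity, orientations, and avoidance of wall-crossing). Granting these, the theorem reduces to a term-by-term verification that is already carried out in Propositions \ref{holo-pro}, \ref{G-hamaction}, \ref{TT'eq}, \ref{W0Ginv}, and \ref{equivariant W}.
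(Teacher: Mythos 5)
Your proposal is correct and follows essentially the same route as the paper: the theorem is stated there explicitly as a summary of the section, with (i) given by Proposition \ref{holo-pro}, (ii) by the construction of the $\check T^n$-action together with Propositions \ref{G-hamaction} and \ref{TT'eq}, and (iii) by Propositions \ref{W0Ginv} and \ref{equivariant W}. Your term-by-term reading of each cited proposition, including the caveat that the statements about $W$ itself require the absolute convergence hypothesis (c), matches the paper's treatment.
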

The assumptions of the theorem say that $[Y \sslash T] \to C$ is a 3d brane of $X=T^*C$, while the conclusions assert that $[Y \sslash T] \to C^!$ is a 3d brane of $X^!=T^*C^!$. Hence, \Cref{SYZtransform} offers a method to construct a mirror 3d brane, transforming the 3d brane $(C, [Y \sslash T])$ of $T^*C$ into the mirror 3d brane $(C^!, [Y^\vee \sslash \check{T}^n])$ of $T^*C^!$.

\section{K\"ahler and equivariant parameters}\label{matchingKahEqu}
Given a complex symplectic manifold $X$,  let $\lt_{X,\C}$ be the Lie algebra
of the maximal torus of $\text{Ham}\left( X\right) $ and $H_{X,\C}^{2}=H^{2}\left( X,
\mathbb{C}\right) $. $H_{X,\C}^{2}$ and $\lt_{X,\C}$ are called the spaces of K\"ahler
parameters and equivariant parameters of $X$ respectively. 3d mirror
symmetry predicts that 
\begin{eqnarray*}
&&%
\begin{array}{ccc}
H_{X,\C}^{2} & \cong  & \lt_{X^{!},\C}%
\end{array}
\\
&&%
\begin{array}{ccc}
\lt_{X,\C} & \cong  & H_{X^{!},\C}^{2},
\end{array}%
\end{eqnarray*}%
where $X$ is the 3d mirror of $X$.
Note that in 3d mirror symmetry, the complex symplectic form is usually
exact and $H^{2}\left( X\right) =H^{1,1}\left( X\right) $. For
example this is the case for conical symplectic resolution $X$.

\begin{exmp} When $X=T^*\mathbb{P}^{n-1}$, then $X^!=[T^*\C^n\sslash (\Cx)^{n-1}]$, where $(\Cx)^{n-1}$ is the subtorus of $(\Cx)^n$ with product $1$. Note that there is a Hamiltonian action of $T_{X^!,\C}\cong (\Cx)^n/(\Cx)^{n-1}$ on $X^!$. Therefore, $H^2_{X,\C}\cong \C\cong\lt_{X^!,\C}$. One can also check that $\lt_{X,\C} \cong  H_{X^{!},\C}^{2}$.
\end{exmp}

Analogous to our description of the 3d mirror symmetry between $X$ and $X^{!}
$ via SYZ perspective, we will explain the exchange of K\"ahler and
equivariant parameters in 3d mirror symmetry via the exchange of
symplectic and complex parameters in mirror symmetry. 

Recall a 3d brane of $X$ is a diagram $Y\overset{\pi }{\rightarrow }C\overset{\iota }{\rightarrow }X$, where $Y$ is a K\"ahler
manifold with the complexified K\"ahler class $\left[ \omega _{Y}\right] $, 
$C\overset{\iota }{\rightarrow }X$ is a complex Lagrangian submanifold and $%
\pi :Y\rightarrow C$ is a holomorphic map.  
\[
\begin{array}{ccc}
Y &  &  \\ 
\downarrow  &  &  \\ 
C & \overset{\iota }{\rightarrow } & X.
\end{array}
\]%

We call $[\omega_Y]$ the symplectic structure of the 3d brane, while $J_Y$ and the holomorphic $\pi \circ \iota: Y \to X$ are refered to as its complex structure.

We first notice that there is a natural action of $\text{Ham}\left( X\right) $ on 
$\mathcal{B}r\left( X\right) $ given as follow. For any $g\in \text{Ham}(X)$, $g\cdot \left( Y\overset{\pi }{%
\rightarrow }C\overset{\iota }{\rightarrow }X\right) $ is defined to be $Y%
\overset{\pi }{\rightarrow }C\overset{g\circ \iota }{\rightarrow }X$.
Namely, the action is given by moving $C$ inside $X$ by $g$. 

On the other hand, given any complexified K\"ahler form $\alpha$ on $X$, we obtain another 3d brane in $X$ by changing the complexified K\"ahler class on $Y$ to $\left[  \omega _{Y}+\pi ^{\ast }\iota ^{\ast }\alpha \right] $. This gives a
semi-group action of the complexifixed K\"ahler cone $\mathcal{K}_{X}$ on $%
\mathcal{B}r\left( X\right) $. It is clear that the $\text{Ham}\left( X\right) $%
-action and $\mathcal{K}_{X}$-action on $\mathcal{B}r\left( X\right) $
commune with each other. That is%
\[
\text{Ham}\left( X\right) \times \mathcal{K}_{X}\curvearrowright \mathcal{B}%
r\left( X\right).
\]
It is evident that $\text{Ham}(X)$ deforms the complex structures of 3d branes, while $\mathcal{K}_X$ deforms their symplectic structures.

When $X$ and $X^{!}$ are 3d mirror to each other, $\mathcal{B}r\left(
X\right) $ and $\mathcal{B}r\left( X^{!}\right) $ are related by mirror
symmetry, or SYZ type transformations. We expect the interchange of symplectic structures on $Y$ and complex structures on its mirror $Y^{\vee }$ will give us the identification of the infinitesimal action of $H_{X}$ on $\mathcal{B}r\left(
X\right) $ and and $t_{X^{!}}$ on $\mathcal{B}r\left( X^{!}\right)$. In the following, we will verify this in the hypertoric case using the SYZ type transform developed in \Cref{Aurouxpicture}.

\subsection{The case of hypertoric varieties/stacks}
In this subsection, we focus on the case where $X$ and $X^!$ are hypertoric varieties or stacks, and demonstrate that the exchange of K\"ahler and equivariant parameters in 3d mirror symmetry corresponds, at the level of 3d branes, to the exchange of symplectic and complex structures in 2d mirror symmetry.

We follow the notations introduced in \Cref{Functoriality1,Aurouxpicture}. Specifically, we have
\[
C = [\PP_\Sigma / T_\C], \quad C^! = [\C^l \times \check{T}^n_\C].
\]

Let $X=T^*C$, and $X^!=T^*C^!$, and set
\[
\begin{tabular}{c c}
\\
     $H^2_{X,\C}=H^2_T(\PP_\Sigma)$
     &$\lt_{X,\C}=\lt^n_\C/\im(d\rho)$  \\
     \\
     \\
    $ H^2_{X^!,\C}=H^2_{\check T^n}(\C^l\times \check T_\C)$
     &
     $\lt_{X^!,\C}=(\C^l\oplus \check\lt_\C)/\check \lt^n_\C$.\\
     \\
\end{tabular}
\]
Let $Y$ be a $T$-Hamiltonian K\"ahler manifold, and $\pi: Y\to \PP_\Sigma$ be a $T$-equivariant holomorphic map. In other words, $[Y\sslash T]\to C$ defined a 3d brane of $X$.
\begin{df}
    Let $\alpha=\alpha_2+\alpha_0$ be a $T^n$-invariant representative of a class in $H^2_{X,\C}=H^2_T(\PP_\Sigma)$. In otherwords, $\alpha_2$ is a $T^n$-invariant closed 2-form on $\PP_\Sigma$ and $\alpha_0:\PP_\Sigma\to \lt_\C^*$ is a $T^n$-invariant function such that $\langle d\alpha_0,\xi\rangle=\iota_{\xi^\sharp}\alpha_2$ for any $\xi\in\lt$. 
    Suppose $\omega_Y+\pi^*\alpha$ is a (complexified) K\"ahler form, then we define $[Y_\alpha\sslash T]\to C$ to the 3d brane of $X$ such that
    \begin{enumerate}
        \item $Y_\alpha=Y$ as a complex manifold.
        \item \label{YM12}$\omega_{Y_\alpha}=\omega_Y+\pi^*\alpha_2$ and $\mu_{Y_\alpha}=\mu_{Y}-\pi^*\alpha_0$.
        \item $\pi_\alpha:Y_\alpha\to \PP_\Sigma$ is equal to $\pi$.
    \end{enumerate}
\end{df}

This is the infinitesimal version of the $\mathcal{K}_X$-action introduced earlier.

\begin{df}
Let $\xi\in \lt^n_C$, then we define $[Y_\xi\sslash T]\to C$ to be the 3d brane of $X$ 
\begin{enumerate}
    \item $Y=Y_\xi$ as a $T$-Hamiltonian K\"ahler manifold.
    \item $\pi_\xi:Y_\xi\to \PP_\Sigma$ is given by
    \begin{equation}\label{pixi}
    \pi_\xi(L,E)=\exp(\xi)\pi(L,E).
    \end{equation} 
\end{enumerate}   
\end{df}
This is the restriction of the $\text{Ham}(X)$-action on $\Br(X)$ introduced earlier to $\lt_{X^!,\C}$.

We will relate the two actions via the transform of 3d branes. First let $[\alpha]\in H^2_{X^!,\C}$, and suppose the assumptions of \Cref{SYZtransform} are satisfied for the 3d branes $[Y\sslash T]\to C$ and $[Y_\alpha\sslash T]\to C$, then there are $\check T^n$-equivariant holomorphic map
\begin{align*}
    F:Y^\vee&\to \C^l\times \check T_\C\\
    F_\alpha:Y_\alpha^\vee&\to \C^l\times \check T_\C.
\end{align*}
Suppose $L$ is a Lagrangian for both $Y$ and $Y_\alpha$, and $E$ be a unitary flat line bundle on $L$.

Recall $F=(W_1,\dots,W_l,\pi^\vee)$ is defined by
\begin{align}
    W_j(L,E)&=\sum_{\beta\cap D_i=\delta_{ij}}n_\beta(L)\exp (\int_\beta-\omega_Y)\hol_{\partial \beta}(E)\label{WY}\\
    \pi^\vee(L,E)&=e^{-\mu_Y(L)}\hol_{T}(E).\label{piY}
\end{align}
Similarly, $F_\alpha=(W_{1,\alpha},\dots,W_{l,\alpha},\pi^\vee_\alpha)$ is defined by
\begin{align}
    W_{j,\alpha}(L,E)&=\sum_{\beta\cap D_i=\delta_{ij}}n_\beta(L)\exp (\int_\beta-\omega_{Y_\alpha})\hol_{\partial \beta}(E)\label{WYalpha}\\
    \pi^\vee(L,E)&=e^{-\mu_{Y_\alpha}(L)}\hol_{T}(E).\label{piYalpha}
\end{align}
    
Let $q\in \pi(L)$, and for $i=1,\dots,l$, let $\beta_{j,q}\in H_2(\PP_\Sigma;T^n\cdot q)$ be the unique disc class with $\beta_{j,q}\cap H_i=\delta_{ij}$. From (\ref{WY}) and $(\ref{WYalpha})$, we have
\begin{align*}
    W_{j,\alpha}(L,E)&\vcentcolon= \sum_{\beta\cdot H_j=1}n_\beta(L)\exp (\int_\beta-\omega-\alpha_2)\hol_{\partial \beta}(E)\\
    &=\exp(\int_{\beta_{j,q}}-\alpha_2)W_j(L,E),
\end{align*}
From (\ref{piY}) and (\ref{piYalpha}), we also have
\begin{equation*}
    \pi^\vee_\alpha(\uL)=\exp(\alpha_0(p))\pi^\vee(\uL).
\end{equation*}

To summarize, if we define
\begin{align}\label{discformula}
    \Phi:H^2_{T}(\PP_\Sigma;\C)&\to \lt_{X^!,\C}=\frac{\C^l\times \check \lt_\C}{\check \lt^n_\C}\\
    \alpha&\mapsto\left(\int_{\beta_{1,p}}\alpha_2,\dots,\int_{\beta_{l,p}}\alpha_2,-\alpha_0(p)\right).
\end{align}
Then 
\[
F_\alpha(L,E)=\exp(-\Phi(\alpha))F(L,E).
\]

\begin{rem}
    One can check that $\Phi$ is independent of $q$. In a forthcoming paper, we will prove a generalization of this assertion when $C$ is not necessarily toric. By making use of the fact the areas of holomorphic discs are positive, we will also justify that $C$ is semistable with respect to a K\"ahler class $\omega\in H^2_X$ if and only if $C^!$ is attracting with respect to $\Phi(\omega)$.
\end{rem}

We next consider deformations of $\pi$ by an equivariant parameter. Let $\xi\in \lt^n_C$. Recall that the moment maps $\mu^\vee:Y^\vee\to \check\lt^*$ and $\mu_\xi^\vee:Y_\xi^\vee\to \check\lt^*$ are given by

\begin{align*}
    \mu^\vee(L,E)&=-\int_L\log|\pi|\operatorname{Re}(\Omega)\\
    \mu_\xi^\vee(L,E)&=-\int_L\log|\pi_\xi|\operatorname{Re}(\Omega)
\end{align*}

Note that $\int_L\operatorname{Re}(\Omega)$ is a locally constant function on $Y^\vee$, and we assume it is equal to $1$ for simplicity. Then it is straightforward to see
\begin{align*}
    \check{\omega}-\mu^\vee_\xi&=\check{\omega}-\mu^\vee+\xi.
\end{align*}
In other words, we obtain a deformation of the symplectic structure on $Y^\vee$ via the pullback $F^* \Psi(\xi)$, where
\begin{align}\label{Psimap}
    \Psi: \lt_{X,\C} \to H^2_{\check{T}^n}(\C^l \times \check{T}_\C)
\end{align}
maps $\xi \in \lt_{X,\C}$ to the cohomology class represented by the constant function on $\C^l \times \check{T}_\C$ with value $\xi$. The following theorem summarizes the calculations made so far.

\begin{thm}\label{thmparameters}
Let $\pi:Y\to \PP_\Sigma$ be a $T$-equivariant holomorphic map satisfying the assumptions of \Cref{SYZtransform}. Let $\uL=(L,E)\in Y^\vee$ be such that $\pi(L)$ is contained in a $T^n$-orbit in $\PP_\Sigma$, and $\int_L\operatorname{Re}(\Omega)=1$. Then
    \begin{enumerate}
        \item for any $\alpha\in H^2_{X,\C}$,
        \[
        F_\alpha(\uL)=\exp(-\Phi([\alpha]))F(\uL).
        \]
        Here, $\Phi:H^2_{X^,\C}\to \lt_{X^!,\C}$ is defined in (\ref{discformula}); 
        \item for any $[\xi]\in \lt_{X,\C}$, 
        \[
        [\check{\omega}-\mu^\vee_\xi]=[\check{\omega}-\mu^\vee]+F^*\Psi([\xi]).
        \] 
        Here, $\Psi:\lt_{X,\C}\to H^2_{X^!,\C}$ is defined in (\ref{Psimap}).    
    \end{enumerate}
\end{thm}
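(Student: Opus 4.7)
The theorem is a direct bookkeeping of the explicit formulas for $F=(W_1,\dots,W_l,\pi^\vee)$ and $\mu^\vee$ coming from \Cref{SYZtransform}, applied to the two natural deformations of a 3d brane. The plan is to substitute the deformed data into those formulas and match the resulting shift against the definitions of $\Phi$ and $\Psi$; this is precisely the unnumbered calculation that precedes the theorem, and I will organise it cleanly as a proof.

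For part (1), I would first handle the $W_j$ components. Writing $\omega_{Y_\alpha}=\omega_Y+\pi^*\alpha_2$ in the definition gives
\[
W_{j,\alpha}(L,E)=\sum_{\beta\cdot D_i=\delta_{ij}} n_\beta(L)\exp\!\Big(-\int_\beta\omega_Y-\int_\beta\pi^*\alpha_2\Big)\hol_{\partial\beta}(E),
\]
so the statement reduces to the identity $\int_\beta\pi^*\alpha_2=\int_{\beta_{j,q}}\alpha_2$ for every disc class $\beta\in H_2(Y,L)$ with $\beta\cdot D_i=\delta_{ij}$. This is the one nontrivial step and follows by observing that $\pi_*\beta$ and $\beta_{j,q}$ represent the same class in $H_2(\PP_\Sigma,T^n\!\cdot q)$: both have intersection pairing $\delta_{ij}$ with the toric divisors $H_i$, and these pairings determine the class in $H_2(\PP_\Sigma,T^n\!\cdot q)\cong\Z^l$. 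For the $\pi^\vee$ component I would use $\mu_{Y_\alpha}=\mu_Y-\pi^*\alpha_0$ together with the hypothesis $\pi(L)\subset T^n\!\cdot q$ and the $T^n$-invariance of $\alpha_0$, so that $\pi^*\alpha_0|_L$ is the constant $\alpha_0(q)$. Collecting the two factors and comparing with the definition of $\Phi$ in \eqref{discformula} yields $F_\alpha=\exp(-\Phi([\alpha]))\,F$.

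For part (2), $Y_\xi$ and $Y$ coincide as $T$-Hamiltonian K\"ahler manifolds, so $Y^\vee$ and its K\"ahler form $\check\omega$ are unchanged; only the Hamiltonian $\check T^n$-action is modified because its moment map depends on the map to $T^n_\C$. I would apply the formula for $\mu^\vee$ to $\pi_\xi=\exp(\xi)\pi$: extending $\log|\exp(\xi)|=\xi$ $\C$-linearly to $\xi\in\lt^n_\C$, this gives
\[
\mu^\vee_\xi(L,E)=-\int_L\log|\pi|\operatorname{Re}(\Omega)-\xi\int_L\operatorname{Re}(\Omega)=\mu^\vee(L,E)-\xi
\]
by the normalisation $\int_L\operatorname{Re}(\Omega)=1$. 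Hence $[\check\omega-\mu^\vee_\xi]=[\check\omega-\mu^\vee]+[\xi]$, and the last term equals $F^*\Psi([\xi])$ since $\Psi(\xi)$ is represented by the constant function $\xi$ on $\C^l\times\check T_\C$ and its pullback to $Y^\vee$ is the same constant.

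The only nontrivial input is the relative-homology identification $[\pi_*\beta]=[\beta_{j,q}]$ used in part (1); once it is in hand, both parts reduce to unwinding the definitions of $F$, $\mu^\vee$, $\Phi$ and $\Psi$. Independence of the base point $q$ in the definition of $\Phi$ is a consequence of the same identification, so I do not anticipate any further obstacle.
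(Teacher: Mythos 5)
Your proposal is correct and follows essentially the same route as the paper, which simply presents this computation (substituting $\omega_{Y_\alpha}=\omega_Y+\pi^*\alpha_2$, $\mu_{Y_\alpha}=\mu_Y-\pi^*\alpha_0$ into the formulas for $W_j$ and $\pi^\vee$, and $\pi_\xi=\exp(\xi)\pi$ into the formula for $\mu^\vee$) in the paragraphs preceding the theorem and then states the theorem as a summary. The one point you spell out that the paper leaves implicit is the identification $\int_\beta\pi^*\alpha_2=\int_{\beta_{j,q}}\alpha_2$ via $[\pi_*\beta]=[\beta_{j,q}]$ in $H_2(\PP_\Sigma,T^n\cdot q)\cong\Z^l$, which is a correct and welcome addition.
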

To conclude, \Cref{thmparameters} says that the exchange of K\"ahler and equivariant parameters in 3d mirror symmetry corresponds, in the brane level, to the exchange of symplectic and complex structures in 2d mirror symmetry. 

\begin{exmp}
    Let $X = T^* \PP^{n-1}$ and $X^! = [T^* \C^n // (\Cx)^{n-1}]$. If we take $Y = \PP_\Sigma \to \PP_\Sigma$ to be the identity map, then
    \begin{equation*}
    \begin{tikzcd}
        Y\ar[r,equal]&\mathbb{P}^{n-1}\ar[d,"\text{id}"]\\
        C\ar[r,equal]&\mathbb{P}^{n-1}
    \end{tikzcd}
    \text{ is 3d mirror to }
    \begin{tikzcd}
        (\Cx)^{n-1}\ar[r,symbol=\curvearrowleft]\ar[d,"F"]&(\Cx)^{n-1}\ar[d,equal]\\
        \C^n\ar[r,symbol=\curvearrowleft]&(\Cx)^{n-1},
    \end{tikzcd}
\end{equation*}
Here, $F$ is equal to
\[\left(x_1,x_2,\dots,x_{n-1},\frac{q}{\prod_{i=1}^{n-1} x_i}\right),\]
    where $q\in \Cx$ depends on the K\"ahler class of $Y$. This shows that varying the K\"ahler form of $Y$ by a K\"ahler parameter of $X$ corresponds to rotating $F$ by the $T_{X^!,\C}$-action (equivariant parameter).
\end{exmp}

\section{Applications of complex Lagrangian correspondences}\label{Gluing}

\subsection{Complex Lagrangian correspondences}

Given a complex Lagrangian submanifold $\mathcal{L}$ in the product of
complex symplectic manifolds $X_{1}\times X_{2}$, if $C_{1}$ is a complex
Lagrangian submanifold in $X_{1}$, we denote 
\begin{eqnarray*}
Y_{2} &=&\pi _{1}^{-1}\left( C_{1}\right) \cap \mathcal{L\subset ~}%
X_{1}\times X_{2} \\
C_{2} &=&\pi _{2}\left( Y_{2}\right) \subset X_{2}
\end{eqnarray*}
where $\pi_{i}: X_{1} \times X_{2} \rightarrow X_{i}$ is the projection to the $i^{\text{th}}$-factor, with $i=1,2$. If the intersection $\pi_{1}^{-1}(C_{1}) \cap \mathcal{L}$ is transverse, then $\pi_{2}|_{Y_2}: Y_{2} \rightarrow X_{2}$ is a complex Lagrangian immersion. If the intersection $\pi_{1}^{-1}(C_{1}) \cap \mathcal{L}$ is clean, then $Y_{2} \rightarrow C_{2}$ is a holomorphic fibration, which defines a 3d brane in $X_{2}$. This construction can be directly generalized by (i) replacing $C_{1}$ with a 3d brane $Y_{1} \rightarrow C_{1}$ in $X_{1}$, and (ii) replacing $\mathcal{L}$ with a 3d brane in $X_{1} \times X_{2}$.

Namely we have 
\begin{eqnarray*}
\Phi  &:&\mathfrak{Br}\left( X_{1}\times X_{2}\right) \times _{clean}%
\mathfrak{Br}\left( X_{1}\right) \rightarrow \mathfrak{Br}\left(
X_{2}\right)  \\
\left( \mathcal{L},Y_{1}\rightarrow C_{1}\right)  &\mapsto &\Phi _{\mathcal{L%
}}\left( Y_{1}\rightarrow C_{1}\right) =\left( Y_{2}\rightarrow C_{2}\right). 
\end{eqnarray*}%
Here $\times _{clean}$ refers to the clean asssumption mentioned above. $%
\Phi _{\mathcal{L}}$ is called the complex Lagrangian correspondence from $%
X_{1}$ to $X_{2}$, induced from $\mathcal{L}\subset X_{1}\times X_{2}$.
Without the clean assumption, we could still define $\Phi $ in the realm of
symplectic geometry \cite{ShiftSym,LagCor}. 

It is expected that 3d branes will generate a 2-category in both the A-model and the B-model, and that such a $\Phi_{\mathcal{L}}$ should induce a 2-functor between these 2-categories for $X_{1}$ and $X_{2}$.

\subsection{Product structures}\label{Monoidal} 

Complex Lagrangian correspondences can similarly be defined for stacks. For example, the diagonal homomorphism $T_{\C} \rightarrow T_{\C} \times T_{\C}$ induces a map of stacks
\begin{equation*}
\Delta : [\text{pt}/T_\C] \rightarrow [\text{pt}/T_\C] \times [\text{pt}/T_\C].
\end{equation*}
The conormal to the graph of $\Delta$ is a complex Lagrangian $\mathcal{L}_{\Delta}$ in $T^{\ast}[\text{pt}/T_{\C}] \times T^{\ast}[\text{pt}/T_{\C}] \times T^{\ast}[\text{pt}/T_{\C}]$. This induces a product structure on $\mathfrak{Br}_{A}(T^*[\pt/T_\C])$ as follows.

Let $Y_{1}$ and $Y_{2}$ be two $T$-Hamiltonian manifolds with moment maps $\mu_{Y_{1}}: Y_{1} \rightarrow \mathfrak{t}^{\vee}$ and $\mu_{Y_{2}}: Y_{2} \rightarrow \mathfrak{t}^{\vee}$.
The diagonal action on $Y_{1} \times Y_{2}$ is also Hamiltonian, with moment map $\mu_{Y_{1} \times Y_{2}}(y_{1}, y_{2}) = \mu_{Y_{1}}(y_{1}) + \mu_{Y_{2}}(y_{2})$. The product structure
\begin{equation}
\Phi_{\mathcal{L}_{\Delta}}: \mathfrak{Br}_{A}\left( T^{\ast}[\text{pt}/T_{\C}] \right) \times \mathfrak{Br}_{A}\left( T^{\ast}[\text{pt}/T_{\C}] \right) \rightarrow \mathfrak{Br}_{A}\left( T^{\ast}[\text{pt}/T_{\C}] \right)
\label{Ldelta}
\end{equation}
on $\mathfrak{Br}_{A}(T^{\ast}[\text{pt}/T_{\C}])$ is defined by 
\begin{equation}\label{Ldelta2}
    ([Y_1\sslash T],[Y_2\sslash T])\mapsto [Y_1\times Y_2\sslash T].
\end{equation}
On the 2-categorical level, we also expect $\mathcal{L}_{\Delta}$ would induce a monoidal structure on the 2-category for the A-model on $T^{\ast}[\text{pt}/T_{\C}]$, which is compatible with $\Phi_{\mathcal{L}_{\Delta}}$ on the object level.

\begin{rem}
Suppose the $T$-action on $Y_1$ and $Y_2$ extends to $T_\C$-actions. Then $\Phi_{\mathcal{L}_{\Delta}}([Y_1/T_\C],[Y_2/T_\C])$ can be computed by the pullback diagram
\[
\begin{tikzcd}
    \Phi_{\mathcal{L}_{\Delta}}(\left[Y_1/T_\C\right],\left[Y_2,T_\C\right])\ar[r]\ar[d]& \left[Y_1/T_\C\right]\times \left[Y_2/T_\C\right]\ar[d]\\
    \left[\pt/T_\C\right]\ar[r,"\Delta"]&\left[\pt/T_\C\right]\times \left[\pt/T_\C\right].
\end{tikzcd}
\]
Therefore $\Phi_{\mathcal{L}_{\Delta}}([Y_1/T_\C])=[Y_1\times Y_2/T_\C]$. In general, we define $\Phi_{\mathcal{L}_{\Delta}}$ by (\ref{Ldelta2}).
\end{rem}

\bigskip 

The 3d mirror of $T^{\ast}[\text{pt}/T_{\C}]$ is $T^{\ast} \check{T}_{\C} = \check{T}_{\C} \times \mathfrak{t}_{\C}$. The conormal to the graph of the group product
\begin{equation*}
m: \check{T}_{\C} \times \check{T}_{\C} \rightarrow \check{T}_{\C}
\end{equation*}
is a complex Lagrangian submanifold in $T^{\ast} \check{T}_{\C} \times T^{\ast} \check{T}_{\C} \times T^{\ast} \check{T}_{\C}$. Explicitly,
\begin{equation*}
\mathcal{L}_{m} = \{ (g_{1}, h, g_{2}, h, g_{1}g_{2}, h) \mid g_{1}, g_{2} \in \check{T}_{\C}, h \in \mathfrak{t}_{\C} \} \subset (\check{T}_{\C} \times \mathfrak{t}_{\C})^{3},
\end{equation*}
which induces a product structure
\begin{equation*}
\Phi_{\mathcal{L}_{m}}: \mathfrak{Br}_{B}(T^{\ast} \check{T}_{\C}) \times \mathfrak{Br}_{B}(T^{\ast} \check{T}_{\C}) \rightarrow \mathfrak{Br}_{B}(T^{\ast} \check{T}_{\C}).
\end{equation*}
This corresponds to a monoidal structure for $T^{\ast} \check{T}_{\C}$, which is the 3d mirror to the one above for $T^{\ast}[\text{pt}/T_{\C}]$. The complex Lagrangian $\mathcal{L}_{m}$ is part of a symplectic groupoid structure on $T^{\ast} \check{T}_{\C}$ (\cite{symgpoid, Izu}), so the complex Lagrangian correspondence $\Phi_{\mathcal{L}_{m}}$ is associative\footnote{A similar symplectic groupoid structure can be defined on $T^{\ast}G$ for a non-Abelian group $G$, but this is not the focus of this paper.}, which can also be verified directly.

Let $C_{1}, C_{2} \subset T^{\ast} \check{T}_{\C} = \check{T}_{\C} \times \mathfrak{t}_{\C}^{\ast}$ be two complex Lagrangian submanifolds. Then $C_{1} \times C_{2}$ is a complex Lagrangian submanifold in $T^{\ast} \check{T}_{\C} \times T^{\ast} \check{T}_{\C}$. Applying the complex Lagrangian correspondence $\mathcal{L}_{m}$ to $C_{1} \times C_{2}$ yields the complex Lagrangian
\begin{equation*}
C_{1} \star_{m} C_{2} = \Phi_{\mathcal{L}_{m}} \left( C_{1} \times C_{2} \right) = \{ (g_{1} g_{2}, h) : (g_{1}, h) \in C_{1}, (g_{2}, h) \in C_{2} \} \subset T^{\ast} \check{T}_{\C}.
\end{equation*}
In other words, $C_{1} \star_{m} C_{2}$ is the product of $C_{1}$ and $C_{2}$ using the group scheme structure of $T^{\ast} \check{T}_{\C} \rightarrow \mathfrak{t}_{\C}$.

We now verify that $\Phi_{\mathcal{L}_{\Delta}}$ and $\Phi_{\mathcal{L}_{m}}$ are 3d mirrors of each other. In other words, the following diagram commutes:
\begin{equation*}
\begin{array}{ccc}
\mathfrak{Br}_{A}(T^{\ast}[\text{pt}/T_{\C}]) \times \mathfrak{Br}_{A}(T^{\ast}[\text{pt}/T_{\C}]) & \xleftrightarrow{\text{3d mirror symmetry}} & \mathfrak{Br}_{B}(T^*T_\C) \times \mathfrak{Br}_{B}(T^*T_\C) \\
\downarrow \Phi_{\mathcal{L}_{\Delta}} & & \downarrow \Phi_{\mathcal{L}_{m}} \\
\mathfrak{Br}_{A}(T^{\ast}[\text{pt}/T_{\C}]) & \xleftrightarrow{\text{3d mirror symmetry}} & \mathfrak{Br}_{B}(T^*T_\C)
\end{array}
\end{equation*}

Let $Y$ be a $T$-Hamiltonian manifold with mirror $(Y^\vee, W)$ and Teleman map $\pi^\vee: Y^\vee \to \check{T}_{\C}$. In other words, the 3d brane $(Y^\vee, W)/\check T_\C$ of $T^{\ast} \check{T}_{\C}$ is the mirror of the 3d brane $[Y \sslash  T] / [\text{pt} / T_{\C}]$ of $T^{\ast} [\text{pt} / T_{\C}]$. Recall from \Cref{nonabelian} that we use $W$ to ``push" the 3d brane $(Y^\vee, W) / \check{T}_{\C}$ out of the zero section, resulting in
\[
C_Y = \{ (g, h) \in \check{T}_{\C} \times \mathfrak{t}_{\C} \mid g = \pi_i(p), \, df_i(p) = \pi_i^{\ast} dh \text{ for some } p \in Y \}.
\]

Now, for $i = 1, 2$, let $Y_i$ be a $T$-Hamiltonian manifold with mirrors $(Y_i^\vee, W_i)$ and Teleman maps $\pi_i^\vee: Y_i^\vee \to \check{T}_{\C}$. Let $Y = Y_1 \times Y_2$ be equipped with the diagonal $T$-action. Suppose $(Y^\vee, W) = (Y_1^\vee \times Y_2^\vee, W_1 + W_2)$ is the mirror of $Y$, and $\pi^\vee(y_1, y_2) = \pi_1^\vee(y_1) \cdot \pi_2^\vee(y_2)$ is the corresponding Teleman map. The following lemma shows that $C_Y = C_{Y_1} \star_m C_{Y_2}$.

\begin{lem}
\label{lemmaProduct}
For $i = 1, 2$, let $M_i$ be complex manifolds, $f_i$ be holomorphic functions on $M_i$, and $\pi_i: M_i \to \check{T}_{\C}$ be holomorphic maps. Define $M = M_1 \times M_2$, with the functions $f: M \to \mathbb{C}$ given by $f(m_1, m_2) = f_1(m_1) + f_2(m_2)$, and $\pi: M \to \check{T}_{\C}$ given by $\pi(m_1, m_2) = \pi_1(m_1) \cdot \pi_2(m_2)$. If we set
\begin{equation*}
C_i = \{(g, h) \in \check{T}_{\C} \times \mathfrak{t}_{\C} : g = \pi_i(p), \, df_i(p) = \pi_i^*dh \text{ for some } p \in M_i\}
\end{equation*}
for $i = 1, 2$, and 
\begin{equation}\label{dfC}
C = \{(g, h) \in \check{T}_{\C} \times \mathfrak{t}_{\C} : g = \pi(p), \, df(p) = \pi^*dh \text{ for some } p \in M\},
\end{equation}
then $C = C_1 \star_m C_2$.
\end{lem}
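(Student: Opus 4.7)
The plan is a direct two-way verification, and the only real content is the abelian group identity
\[
m^*(dh) = \pr_1^*(dh) + \pr_2^*(dh) \quad \text{on } \check T_\C \times \check T_\C,
\]
which holds because $dh$ (for $h \in \lt_\C$) denotes the bi-invariant $\C$-valued $1$-form on the abelian group $\check T_\C$ determined by $h$. First I would fix this interpretation of $dh$ (compare \Cref{exam1}, where the condition $dW = h\, d\log z$ yields the equation $h = z - 1/z$), and record the decomposition $T^*_{(m_1,m_2)}M \cong T^*_{m_1}M_1 \oplus T^*_{m_2}M_2$ under which $df = df_1 \oplus df_2$ and $\pi^* dh = \pi_1^* dh + \pi_2^* dh$ at any point $p=(m_1,m_2)$.

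For the inclusion $C \subseteq C_1 \star_m C_2$, I take $(g,h) \in C$ with witness $p = (m_1,m_2) \in M$, and set $g_i = \pi_i(m_i)$, so that $g = \pi(p) = g_1 g_2$. The condition $df(p) = \pi^* dh$ then reads
\[
df_1(m_1) + df_2(m_2) = \pi_1^* dh + \pi_2^* dh
\]
in $T^*_{m_1}M_1 \oplus T^*_{m_2}M_2$. Since the two summands live in complementary subspaces, the equation splits, giving $df_i(m_i) = \pi_i^* dh$ at $m_i$ for each $i$. Thus $(g_i, h) \in C_i$, and therefore $(g,h) = (g_1 g_2, h) \in C_1 \star_m C_2$.

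For the reverse inclusion, I take $(g,h) \in C_1 \star_m C_2$, so $g = g_1 g_2$ with $(g_i, h) \in C_i$ witnessed by $m_i \in M_i$. Assembling $p = (m_1, m_2)$, I get $\pi(p) = \pi_1(m_1)\pi_2(m_2) = g$ and, by the same additive decomposition as above, $df(p) = \pi^* dh$, so $(g,h) \in C$.

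There is no real obstacle: the entire statement is a bookkeeping consequence of the product structure on $M$ together with additivity of $m^*$ on left-invariant forms, which is exactly the abelianness of $\check T_\C$. The only point worth flagging is to make the convention for ``$dh$'' explicit up front so that the identity $\pi^*dh = \pi_1^*dh + \pi_2^*dh$ is unambiguous.
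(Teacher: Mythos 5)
Your proposal is correct and is essentially the paper's own argument: the paper's proof consists of the single observation that $df(m)=\pi^*(dh)$ if and only if $df_i(m_i)=\pi_i^*(dh)$ for $i=1,2$, which is exactly the componentwise splitting you spell out. Your additional care in making the convention for $dh$ and the identity $\pi^*dh=\pi_1^*dh+\pi_2^*dh$ explicit is a reasonable elaboration of the same step, not a different route.
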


\begin{proof}
Let $m = (m_1, m_2) \in M$ and $h \in \mathfrak{t}_{\C}$. Then $df(m) = \pi^*(dh)$ if and only if $df_i(m_i) = \pi_i^*(dh)$ for $i = 1, 2$.
\end{proof}

\begin{exmp}
Let $Y = \text{pt}$ with the trivial action and $\mu(\text{pt}) = 0 \in \mathfrak{t}^\vee$. Here, $Y$ serves as the unit for the product structure on $\mathfrak{Br}_A(T^*[\text{pt}/T_\C])$. On the other hand, $Y^\vee = \text{pt}$, and $\pi^\vee:Y^\vee \to \check{T}_\C$ is the constant map with value $1$. We have
\begin{equation*}
C = \{1\} \times \mathfrak{t}_\C \subset \check{T}_\C \times \mathfrak{t}_\C,
\end{equation*}
which is also the unit for the symplectic groupoid structure on $T^*\check{T}_\C$.
\end{exmp}

\begin{exmp}\label{exmpC}
\begin{enumerate}
    \item []
    \item Let $\rho$ be a character of $T$, and $Y = \mathbb{C}_\rho$ the corresponding one-dimensional representation. We can take $(Y^\vee, W) = (\mathbb{C}^\times, z)$ as the Hori-Vafa mirror, with $\pi^\vee = \check \rho : \mathbb{C}^\times \to \check{T}_\C$. Then
    \begin{equation*}
    C = \{(g, h) \in \check{T}_\C \times \mathfrak{t}_\C : g = \check \rho (d\rho(h))\}.
    \end{equation*}

    \item More generally, let $\rho_1, \dots, \rho_n$ be characters of $T_\C$, and $Y = \prod_{j=1}^n \mathbb{C}_{\rho_j}$. Then
    \begin{equation*}
    C = \{(g, h) \in \check{T}_\C \times \mathfrak{t}_\C : g = \prod_{j=1}^n \rho_j^\vee(d\rho_j(h))\}.
    \end{equation*}
\end{enumerate}
\end{exmp}

\begin{rem}\label{antisym}
    If we replace the mirror $(Y^\vee, W)$ of $T \curvearrowright Y$ with $(Y^\vee, -W)$, then $C_Y$ will be replaced by 
    \[
    -C_Y = \{(g, h) \in \check{T}_\C \times \mathfrak{t}_\C \mid (g, -h) \in C\}.
    \]
    In other words, $-C_Y$ is the image of $C_Y$ under the anti-symplectomorphism of $T^*\check{T}_\C$ given by $(g, h) \mapsto (g, -h)$.
\end{rem}

\begin{rem}\label{BFM}
    For a non-Abelian compact group $G$, the 3d mirror of $T^*[\pt/G_\C]$ is the BFM space $\text{BFM}(\check G)$ discussed in \Cref{nonabelian}. By a result in \cite{BFM}, $\text{BFM}(\check G)$ can be identified with the universal centralizer of $\check{G}_\C$, and thus possesses a symplectic groupoid structure. The discussion in this subsection extends to the non-Abelian setting using the methods of \Cref{nonabelian}. See also \Cref{nonrem}.
\end{rem}

\subsection{Coulomb branch of 3d gauge theory with matters}\label{glueing section}

3d $\mathcal N=4$ gauge theory with matters refers to the 3d theory for $X=[T^{\ast
}V\sslash G_\C]$ with $V$ being a complex representation of $G$. Braverman,
Finkelberg and Nakajima \cite{BFN} described a method to construct the
Coulomb branch for such a theory, which should be the 3d mirror $X^{!}$ to $X
$. We will assume $G=T$ is Abelian, suppose the $T$-action on $V$ is
faithful, then $X$ is a hypertoric variety (if one chooses a K\"ahler parameter, otherwise, it is a stack) and $X^{!}$ is its Gale dual
hypertoric variety. On the other extreme situation when $V=0$, then $%
X=T^{\ast }[$pt$/T_\C]$ and $X^{!}=T^{\ast }\check T_{\C}$. See \Cref{nonrem} for remarks on the non-Abelian case.

In this subsection, we will utilize the complex Lagrangian correspondences $\Phi_{\mathcal{L}_{\Delta}}$ for $T^{\ast}[\text{pt}/T_\C]$ and $\Phi_{\mathcal{L}_{m}}$ for $T^{\ast}\check{T}_\C$ to construct gluing maps, thereby obtaining the 3d mirror $X^{!}$ to $X = [T^{\ast}V\sslash T_\C]$. This process involves gluing various local charts $T^{\ast}\check{T}_\C$, with each chart corresponding to an identification $T^{\ast}V \simeq T^{\ast}V_{I}$ for some $T$-representation $V_{I}$. The affine variety version of the gluing will give the Coulomb branch described in \cite{BFN}.

Explicitly, if we fix an irreducible decomposition 
\begin{equation*}
V = \bigoplus_{i=1}^{n} U_{i}
\end{equation*}
where each $U_{i}$ is a one-dimensional representation of $T$ corresponding to some character $\rho_i$, then for any decomposition $\left\{ 1,2,\dots, n \right\} = I \bigsqcup I^{\prime}$, we can write
\begin{equation*}
V_{I} = \left( \oplus_{i \in I} U_{i} \right) \oplus \left( \oplus_{i \in I^{\prime}} U_{i} \right)^{\ast},
\end{equation*}
where $V_{I} \subset T^{\ast}V$ is a Lagrangian subspace, and $T^{\ast}V \simeq T^{\ast}V_{I}$ with a symplectic form defined by suitably chosen signs for each term.

Each $\left[ V_{I}/T \right]$ is a complex Lagrangian subspace in $\text{pt} \times X$ and $X \times \text{pt}$ for $X=T^*[V/T_\C]$, thus inducing complex Lagrangian correspondences: 
\begin{align*}
    F_I&:\Br_A(T^*[\pt/T_\C])\to\Br_A(X), \text{ and}\\
    G_I&:\Br_A(X)\to\Br_A(T^*[\pt/T_\C])
\end{align*}
such that 
\[
G_{I^{\prime}} \circ F_{I} = \text{id}.
\]
Explicitly, given $T \curvearrowright \left( Y, \omega \right) \overset{\mu}{\rightarrow} \mathfrak{t}^{\ast}$, namely a 3d A-brane $(\left[ \text{pt}/T_\C \right],\left[ Y\sslash T \right] )$, its image under $F_{I}$ is the following 3d A-brane in $X$:
\begin{equation*}
F_{I}\left( \left[ \text{pt}/T_\C \right], \left[ Y\sslash T \right] \right) = (\left[ V_{I}/T_\C \right],\left[ Y \times V_{I} \sslash  T \right]),
\end{equation*}
namely, taking the product with $V_{I}$. For $G_{I}$, given $\left( Z, \omega \right) \rightarrow C \subset T^{\ast}V$ which is $T$-equivariant.
\begin{equation*}
G_{I}\left(  \left[ C / T_\C \right],\left[ Z \sslash  T \right] \right) =( \left[ \text{pt}/T_\C \right],\left[ Z \times_{T^{\ast}V} V_{I'} \sslash  T \right]  )
\end{equation*}
as a 3d A-brane in $T^{\ast} \left[ \text{pt}/T_\C \right]$.

For each $I \subset \left\{ 1, 2, \dots, n \right\}$, the 3d mirror of $F_{I}$ (resp. $G_{I^{\prime}}$) should be the pushforward (resp. pullback) of an open embedding $\iota_{I}: T^{\ast}\check{T}_\C \rightarrow X^{!}$, as we will explain below.

\begin{rem}
The above assertion can be verified in the following way. We assume $I=\{1,2,\dots,n\}$ and $\rho:T_\C\rightarrow (\mathbb{C}^{\times})^{n}\subset V$ is injective for simplicity. Let $Y$ be a $T$-Hamiltonian manifold, with mirror $\pi^\vee:Y^\vee\rightarrow \check{T}_\C$. Then, $F_{I}(Y)$ would be $Y\times V^{\ast}\rightarrow V^{\ast}$, where $T$ acts on $Y\times V^{\ast}$ via the diagonal action. The mirror of $Y\times V^{\ast}\rightarrow V^{\ast}$ is 
\begin{equation*}
\begin{tikzcd}
    Y^\vee\times (\mathbb{C}^\times)^n \ar{d}{(\text{pr}_2, \pi^\vee\cdot \check \rho^{-1})}/(\mathbb{C}^\times)^n\\
    \mathbb{C}^n\times \check{T}_\C/ (\mathbb{C}^\times)^n.
\end{tikzcd}
\end{equation*}
As $\rho$ is injective, we can use the $(\mathbb{C}^{\times})^{n}$-action to make the second factor become $1$, so the above diagram is isomorphic to 
\begin{equation*}
\begin{tikzcd}
    \{\pi^\vee\cdot\check{\rho}^{-1}=1\}/\ker\check{\rho} \ar{d}{\text{pr}_2}\\
    \mathbb{C}^n/ \ker\check{\rho},
\end{tikzcd}
\end{equation*}
which can be further simplified to 
\begin{equation*}
\begin{tikzcd}
    Y^\vee \ar{d}{\pi^\vee}\\
    \check{T}_\C \cong (\mathbb{C}^\times)^n/\ker\check{\rho} \subset \mathbb{C}^n/ \ker\check{\rho}.
\end{tikzcd}
\end{equation*}
Therefore, this mirror of $Y\times V^{\ast}\rightarrow V^{\ast}$ is the pushforward of the mirror of $Y$ under the inclusion $\check{T}_\C \subset \mathbb{C}^{n}/\ker \check{\rho}$ (or $T^{\ast}\check{T}_\C \subset T^{\ast}\mathbb{C}^{n}\sslash \ker \check{\rho}$).
\end{rem}

In particular, for $I, J \subset \{1,2,\dots,n\}$, correspondence $G_{J'} \circ F_{I}$ is 3d mirror to the change of coordinates $``\iota_J^{-1} \circ \iota_I"$:
\begin{equation*}
\begin{tikzcd}
    T^*[\text{pt}/T_\C] \ar[r, leftrightarrow, "\text{3d MS}"] \ar[d, dashrightarrow, "G_{J'} \circ F_I"] & T^*\check{T}_\C \ar[d, dashrightarrow, "\iota_J^{-1} \circ \iota_I"] \\
    T^*[\text{pt}/T_\C] \ar[r, leftrightarrow, "\text{3d MS}"] &
    T^*\check{T}_\C
\end{tikzcd}
\end{equation*}
We will explain below how to obtain the gluing map $\phi_{J,I} = \iota_J^{-1} \circ \iota_I$ by computing $G_{J'} \circ F_{I}$. One can check that 
\begin{equation*}
G_{J'} \circ F_{I}\left([\pt/T_\C],[Y\sslash T]\right) = ([\pt/T_\C],[Y \times V_{J,I}\sslash T]),
\end{equation*}
where
\[
V_{J,I} = \bigoplus_{i \in I \setminus J} U_i \oplus \bigoplus_{i \in J \setminus I} U_i^*.
\]
That is 
\begin{equation*}
G_{J'} \circ F_{I} = \Phi_{\mathcal{L}_{\Delta}}\left(([\pt/T_\C],[V_{J,I}/T_\C]), -\right) : \mathfrak{Br}_{A}\left(T^*[ \text{pt}/T_\C]\right) \rightarrow \mathfrak{Br}_{A}\left(T^*[ \text{pt}/T_\C]\right).
\end{equation*}
According to our studies in \Cref{Monoidal}, this is 3d mirror to
\begin{equation*}
\Phi_{\mathcal{L}_{m}}\left(C_{J,I}, -\right) : \mathfrak{Br}_{A}\left(T^* \check T_\C\right) \rightarrow \mathfrak{Br}_{A}\left(T^* \check T_\C\right),
\end{equation*}
where $C_{J,I}$ is the mirror complex Lagrangian of $T \curvearrowright V_{J,I}$. 

To calculate $C_{J,I}$, let $V_{J,I}^{\vee}$ be $(\mathbb{C}^{\times})^{J \setminus I} \times (\mathbb{C}^{\times})^{I \setminus J}$ with coordinates $z_{i}$ for $i \in J \setminus I$ and $z_{i}'$ for $i \in I \setminus J$. We consider the mirror of $V_{I \setminus J}$ as $V_{J,I}^{\vee}$ with the superpotential 
\[
\sum_{i \in J \setminus I} z_i - \sum_{i \in I \setminus J} z_i'.
\]

\begin{rem}\label{rem on twisted potential}
    The following is the reason why we should take the mirror of $V^*$ to be $((\Cx)^n, -\sum z'_i)$. Note that the isomorphism $T^*V \cong T^*V^*$ is anti-symplectomorphic, and we compensate for this by adding a minus sign to the potential function, which corresponds to an anti-symplectomorphism of $T^*\check{T}_\C$ (see \Cref{antisym}). Similar remark applies to the mirror of $V_{J,I}$.
\end{rem}
The following lemma generalizes \Cref{exmpC}, and allows us to calculate $C_{J,I}$.

\begin{lem}
\label{YM(W)} 
Let $\rho_1, \dots, \rho_n$ be characters of $T$, and $\epsilon_i \in \{1, -1\}$ for $i = 1, 2, \dots, n$. Suppose $\pi: M=(\mathbb{C}^\times)^n \to \check{T}_\C$ is the map 
\begin{equation*}
\pi(z_1, \dots, z_n) \mapsto \prod_{i=1}^{n} \rho_i^{\vee}(z_i),
\end{equation*}
and $f: M \to \mathbb{C}$ is the holomorphic function 
\begin{equation*}
f(z_1, \dots, z_n) \mapsto \sum_{i=1}^{n} \epsilon_i z_i.
\end{equation*}
If 
\begin{equation*}
C = \{(g, h) \in \check{T}_{\C} \times \mathfrak{t}_{\C} : g = \pi(p), \, df(p) = \pi^*dh \text{ for some } p \in M\},
\end{equation*}
is the complex Lagrangian of $T^*\check{T}_\C$ as in (\ref{dfC}), then

\begin{enumerate}
\item if $\rho_i$ is trivial for some $i$, then $C$ is empty; and

\item if $\rho_i$ is nontrivial for all $i$, then the Lagrangian correspondence $\Phi_{\mathcal{L}_{m}}\left(C, -\right)$ is given by a birational map $\phi_{M}:T^*\check T_\C\dashrightarrow T^*\check T_\C$. More precisely, let $f=f_M$ be the composition  
\begin{equation*}
f:\mathfrak{t}_\C \xrightarrow{d\rho} \mathbb{C}^n \xrightarrow{\epsilon} \mathbb{C}^n \dashrightarrow (\mathbb{C}^\times)^n \xrightarrow{\check \rho } \check{T}_\C,
\end{equation*}
where $\rho=(\rho_1,\dots,\rho_n)$, the second map is the multiplication of $\epsilon_i$ in the $i^{th}$ coordinate, and the third map is the identity rational map. Then $\Phi_{\mathcal{L}_{m}}\left(C, -\right)$ is given by the birational map  
\begin{align*}
\phi_{M}: \check{T}_\C \times \mathfrak{t}_\C & \to \check{T}_\C \times \mathfrak{t}_\C \\
(g, h) & \mapsto (f(h)g, h).
\end{align*}
\end{enumerate}
\end{lem}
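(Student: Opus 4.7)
The plan is to reduce the description of $C$ to a system of pointwise equations in local coordinates on $M=(\Cx)^n$, then recognize $C$ as (the closure of) the graph of a rational map, and finally feed this graph into the product formula $\star_m$ of \Cref{Monoidal} to extract the birational map $\phi_M$.

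First I would compute $\pi^*dh$ and $df$ at a point $p=(z_1,\dots,z_n)\in M$. Using left-trivialization, for any holomorphic map $g:M\to \check T_\C$ one has $g^*dh=\langle h,g^{-1}dg\rangle$. Since $\pi(z)=\prod_i\rho_i^\vee(z_i)$, the Maurer--Cartan form of $\pi$ is $\pi^{-1}d\pi=\sum_i\rho_i\cdot \frac{dz_i}{z_i}$, where $\rho_i\in\check\lt_\C$ is viewed as a Lie algebra element corresponding to the character. Pairing with $h\in\lt_\C=(\check\lt_\C)^*$ yields
\[
\pi^*dh=\sum_{i=1}^n d\rho_i(h)\,\frac{dz_i}{z_i},\qquad df=\sum_{i=1}^n\epsilon_i\,dz_i.
\]
Matching coefficients of $dz_i$, the condition $df(p)=\pi^*dh$ becomes the system $\epsilon_i z_i=d\rho_i(h)$ for all $i$, i.e.\ $z_i=\epsilon_i d\rho_i(h)$ (using $\epsilon_i^2=1$).

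Part (1) is immediate: if some $\rho_i$ is trivial then $d\rho_i\equiv 0$, forcing $z_i=0$, which is impossible in $\Cx$, so $C=\emptyset$. For part (2), assume every $\rho_i$ is nontrivial, so each $d\rho_i$ is a nonzero linear functional on $\lt_\C$. On the Zariski-open set $U\subset \lt_\C$ cut out by $\prod_i d\rho_i\neq 0$, the system has a unique solution $p(h)=(\epsilon_1 d\rho_1(h),\dots,\epsilon_n d\rho_n(h))\in(\Cx)^n$, and substituting into $g=\pi(p)$ gives
\[
g=\prod_{i=1}^n \rho_i^\vee\bigl(\epsilon_i d\rho_i(h)\bigr)=\check\rho\bigl(\epsilon\cdot d\rho(h)\bigr)=f_M(h).
\]
Thus $C$ is the closure of the graph of $f_M|_U$, a rational section of $T^*\check T_\C\to\lt_\C$.

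Finally, I apply the definition $C\star_m C_2=\{(g_1g_2,h):(g_1,h)\in C,\ (g_2,h)\in C_2\}$ from \Cref{Monoidal} with our description of $C$. For any 3d brane corresponding to a complex Lagrangian $C_2\subset T^*\check T_\C$, the condition $(g_1,h)\in C$ forces $g_1=f_M(h)$ on the open locus, so
\[
\Phi_{\mathcal{L}_m}(C,C_2)=\{(f_M(h)\cdot g_2,h):(g_2,h)\in C_2\}=\phi_M(C_2),
\]
where $\phi_M(g,h)=(f_M(h)g,h)$. Since $f_M$ is rational on $\lt_\C$, $\phi_M$ is a birational self-map of $T^*\check T_\C$, giving the claim. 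The only real subtlety is bookkeeping: one must be consistent in identifying $\lt_\C\cong(\check\lt_\C)^*$ so that $h\in\lt_\C$ really does pair with the Maurer--Cartan coefficient $\rho_i\in\check\lt_\C$ to produce $d\rho_i(h)$; otherwise the argument is a direct calculation.
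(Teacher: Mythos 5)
Your proof is correct, and the calculation you carry out is the right one; it differs from the paper's proof only in organization. The paper disposes of the lemma in one line by citing \Cref{lemmaProduct} (which reduces a product $M=M_1\times M_2$ with $f=f_1+f_2$, $\pi=\pi_1\cdot\pi_2$ to the factors via $C_M=C_{M_1}\star_m C_{M_2}$), \Cref{exmpC} (the one-dimensional case $Y=\C_\rho$, giving $g=\check\rho(d\rho(h))$), and \Cref{antisym} (the sign $\epsilon_i=-1$ corresponds to replacing $W$ by $-W$, hence $h\mapsto -h$ on that factor). You instead solve the full system $\epsilon_i z_i=d\rho_i(h)$ in one pass by matching coefficients of $dz_i$ in $df=\pi^*dh$, which is exactly the computation underlying \Cref{exmpC} done for all coordinates simultaneously; your treatment of part (1) ($d\rho_i\equiv 0$ forces $z_i=0\notin\Cx$) and of the $\star_m$-composition (the graph $\{(f_M(h),h)\}$ composes with any $C_2$ by left-translation, yielding $\phi_M$) are both as in the paper. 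The one thing your version supplies that the paper leaves implicit is the explicit verification that $\pi^*dh=\sum_i d\rho_i(h)\,dz_i/z_i$ under the identification $\check\lt_\C\cong\lt_\C^*$ — the bookkeeping point you flag at the end — which is indeed the only place where a sign or duality convention could go wrong, and you have it right.
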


\begin{proof}
 It follows from \Cref{lemmaProduct}, \Cref{exmpC}, and \Cref{antisym}.
\end{proof}

\begin{rem}\label{remformula}
In the function ring level, $\phi_M^*:\C[T^*T_\C]\to\C[T^*T_\C]$ is given by
\begin{equation*}
\phi_M^*:h \mapsto h; \quad z^{\lambda^{\vee}} \mapsto z^{\lambda^{\vee}} \prod_{i \in I} (\epsilon_ih_{\rho_{i}})^{\langle \rho_{i}, \lambda^{\vee} \rangle},
\end{equation*}
where we identify $T^{\ast }\check{T}_\C$ with $\check{T}_\C \times \mathfrak{t}_\C$, $h \in \mathfrak{t}_\C$, and $\lambda^\vee$ is a character of $T^\vee_\C$.
\end{rem}

In our situation, $M = V^\vee_{J,I}$, and $\epsilon: V^\vee_{J,I} \to V^\vee_{J,I}$ is the identity for the coordinates $z_i$ ($i \in I \setminus J$), and negative of the identity for the coordinates $z_i'$ ($i \in J \setminus I$). We set $\phi_{J,I} = \phi_{V_{J,I}}$. 

We will now check that the set $\{\phi_{J,I}\}_{I,J}$ satisfies the cocycle condition. Let $I, J, K$ be subsets of $\{1,2, \dots, n\}$. By definition, the graph of $\phi_{K,J} \circ \phi_{J,I}$ is $C_{V_{K,J,I}}$, where 
\begin{equation*}
V_{K,J,I} =V_{K,J}\oplus V_{J,I}= \left(\bigoplus_{i \in K \setminus J} U_i \right) \oplus \left(\bigoplus_{i \in J \setminus K} \overline{U_i^*} \right) \oplus \left(\bigoplus_{i \in J \setminus I} U_i \right) \oplus \left(\bigoplus_{i \in I \setminus J} \overline{U_i^*} \right).
\end{equation*}
and the graph of $\phi_{K,I}$ is $C_{V_{K,I}}$, where 
\begin{equation*}
V_{K,I} = \left(\bigoplus_{i \in K \setminus I} U_i \right) \oplus \left(\bigoplus_{i \in I \setminus K} \overline{U_i^*} \right).
\end{equation*}
Here and below, $\epsilon$ is equal to $-1$ on the factors with an overline, and is equal to $1$ on other factors.

We see that 
\[
V_{K,J,I} = V_{K,I} \oplus N \oplus \overline{N^*},
\]
where 
\begin{equation*}
N = \bigoplus_{i \in ((K \cap I)\setminus J) \cup (J \setminus (I \cup K))} U_i.
\end{equation*}

\begin{lem}
\label{YM(V)}
Let $V, V'$ be $T$-representations without trivial factors. Then:
\begin{enumerate}
\item $C_{V \oplus V'} = C_V \star_m C_{V'}$.
\item $C_{\overline{V}^*} = \{ (g,h) \in \check{T}_\C \times \mathfrak{t}_\C \mid (g^{-1},h) \in C_V \}$.
\end{enumerate}
\end{lem}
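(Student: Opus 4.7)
The plan is to reduce both parts to direct applications of \Cref{lemmaProduct} and \Cref{YM(W)}, using the explicit Hori--Vafa mirrors described in \Cref{exmpC}. Fix weights $\rho_{1},\ldots,\rho_{n}$ for $V$ and $\rho'_{1},\ldots,\rho'_{m}$ for $V'$ (all nontrivial by hypothesis), so that the mirrors $M_{V}=(\Cx)^{n}$ and $M_{V'}=(\Cx)^{m}$ carry the potentials $f_{V}=\sum z_{i}$, $f_{V'}=\sum z'_{j}$ and the maps $\pi_{V}(z)=\prod\rho_{i}^{\vee}(z_{i})$, $\pi_{V'}(z')=\prod(\rho'_{j})^{\vee}(z'_{j})$.

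For part (1), I would first observe that the mirror of $V\oplus V'$ is the product $M_{V}\times M_{V'}=(\Cx)^{n+m}$, with potential $f_{V\oplus V'}(z,z')=f_{V}(z)+f_{V'}(z')$ and map $\pi_{V\oplus V'}(z,z')=\pi_{V}(z)\cdot\pi_{V'}(z')$. This is precisely the setup of \Cref{lemmaProduct} with $(M_{1},f_{1},\pi_{1})=(M_{V},f_{V},\pi_{V})$ and $(M_{2},f_{2},\pi_{2})=(M_{V'},f_{V'},\pi_{V'})$, so applying \Cref{lemmaProduct} yields $C_{V\oplus V'}=C_{V}\star_{m}C_{V'}$ immediately.

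For part (2), I use the convention of \Cref{rem on twisted potential}: the mirror of $\overline{V}^{*}$ is $(\Cx)^{n}$ with characters $-\rho_{i}$ (from dualization) and potential $-\sum z_{i}$ (from the bar, reversing the K\"ahler form). Thus $\overline{V}^{*}$ fits into \Cref{YM(W)} with $\epsilon_{i}=-1$ for every $i$ and weights $-\rho_{i}$. Tracing through the composition defining $f_{\overline{V}^{*}}$: first $h\mapsto(-\langle\rho_{i},h\rangle)_{i}$; multiplication by $\epsilon_{i}=-1$ returns $(\langle\rho_{i},h\rangle)_{i}$; and finally $\check{\rho}$ for the weights $-\rho_{i}$ produces $\prod(-\rho_{i})^{\vee}(\langle\rho_{i},h\rangle)=\prod\rho_{i}^{\vee}(\langle\rho_{i},h\rangle)^{-1}=f_{V}(h)^{-1}$. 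Hence $C_{\overline{V}^{*}}=\{(f_{V}(h)^{-1},h):h\in\lt_{\C}\}=\{(g,h):(g^{-1},h)\in C_{V}\}$, as desired.

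The only real subtlety is bookkeeping the sign conventions: one must insist that $\overline{V}^{*}$ carries \emph{both} the dualized weights and the reversed potential, for otherwise just one of the two sign flips is present and the output is $f_{V}(h)$ rather than $f_{V}(h)^{-1}$. Once this convention is locked in, both statements are immediate consequences of the already-established \Cref{lemmaProduct} and \Cref{YM(W)}, and there is no remaining analytic or geometric difficulty to overcome.
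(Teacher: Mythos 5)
Your proposal is correct and follows essentially the same route as the paper: part (1) is a direct application of \Cref{lemmaProduct} to the product mirror, and part (2) amounts to tracking the two sign flips (dualized weights and negated potential) through the formula of \Cref{YM(W)}. The only cosmetic difference is that the paper reduces part (2) to the one-dimensional case $V=\C$ and computes in coordinates, whereas you read off the inverse $f_V(h)^{-1}$ directly from the composition defining $f$ in \Cref{YM(W)}; the computation is the same.
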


\begin{proof}
\begin{enumerate}
\item []
\item The proof is the same as that of \Cref{lemmaProduct}.
\item We may assume $V = \mathbb{C}$. We write $T_\C = (\mathbb{C}^\times)^r$, and $V$ corresponds to the character $\rho: (z_1, \dots, z_r) \mapsto \prod z_i^{\rho_i}$ of $T_\C$. By a direct calculation,
\begin{equation*}
C_V = \{ (z,h) \in (\mathbb{C}^\times)^r \times \mathbb{C}^r \mid z_i = (\rho_i h_i)^{\rho_i} \}.
\end{equation*}
On the other hand, the $T$-representation $V^*$ is induced by the character $\rho^{-1}: (z_1, \dots, z_r) \mapsto \prod z_i^{-\rho_i}$ of $T$, hence
\begin{equation*}
C_{\overline{V}^*} = \{ (z,h) \in (\mathbb{C}^\times)^r \times \mathbb{C}^r \mid z_i = \prod \left(-(-\rho_i) h_i\right)^{-\rho_i} \}.
\end{equation*}
(ii) is now clear.
\end{enumerate}
\end{proof}
The above lemma immediately implies the following.
\begin{cor}
For $I, J, K \subset \{1, 2, \dots, n\}$ such that $V_{K,J,I}$ contains no trivial factors, then
\[
C_{V_{K,J,I}} = C_{K,J}\star_m C_{J,I}.
\]
In particular, we have
\[
\phi_{K,J}\circ \phi_{J,I} = \phi_{K,I}.
\]
\end{cor}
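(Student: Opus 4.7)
The plan is to combine two decompositions of $V_{K,J,I}$ via \Cref{YM(V)}. The first identity $C_{V_{K,J,I}} = C_{K,J}\star_m C_{J,I}$ is immediate from the defining decomposition $V_{K,J,I} = V_{K,J}\oplus V_{J,I}$ together with \Cref{YM(V)}(i); both summands inherit the ``no trivial factors'' hypothesis from $V_{K,J,I}$, so the lemma applies directly.

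For the ``in particular'' statement $\phi_{K,J}\circ\phi_{J,I}=\phi_{K,I}$, the plan is to exploit the alternative decomposition $V_{K,J,I} = V_{K,I}\oplus N\oplus \overline{N^*}$ observed in the paragraph just before the corollary. Iterating \Cref{YM(V)}(i), I obtain
\[
C_{V_{K,J,I}} = C_{V_{K,I}} \star_m C_N \star_m C_{\overline{N^*}},
\]
so it suffices to prove $C_N\star_m C_{\overline{N^*}}$ equals the identity bisection $\{(1,h) : h\in\mathfrak{t}_\C\}$ of the symplectic groupoid $T^*\check T_\C\rightrightarrows \mathfrak{t}_\C$. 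This is the only substantive step. By \Cref{YM(W)}, $C_N$ is the graph of a birational map $f_N\colon \mathfrak{t}_\C\dashrightarrow \check T_\C$ (i.e. $C_N = \{(f_N(h),h)\}$), and by \Cref{YM(V)}(ii), $C_{\overline{N^*}}$ is the graph of the pointwise inverse $f_N^{-1}$. Unwinding the definition of $\star_m$ then gives $\{(f_N(h)\cdot f_N(h)^{-1},h)\} = \{(1,h)\}$, as required.

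Since the identity bisection is a unit for $\star_m$, this will yield $C_{V_{K,J,I}} = C_{K,I}$, and combining with the first identity gives $C_{K,J}\star_m C_{J,I} = C_{K,I}$. Because \Cref{YM(W)} identifies $C_M$ with the graph (bisection) of $\phi_M$ and the birational maps $\phi_M\colon (g,h)\mapsto (f_M(h)g,h)$ compose via pointwise product of their $f_M$'s, which is precisely $\star_m$ of graphs, the last equality reads $\phi_{K,J}\circ\phi_{J,I}=\phi_{K,I}$. The main (indeed only) obstacle is verifying that $C_N\star_m C_{\overline{N^*}}$ collapses to the identity bisection; everything else is bookkeeping from the two preceding lemmas and the associativity of $\star_m$.
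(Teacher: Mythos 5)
Your proposal is correct and follows essentially the same route as the paper, which sets up the two decompositions $V_{K,J,I}=V_{K,J}\oplus V_{J,I}$ and $V_{K,J,I}=V_{K,I}\oplus N\oplus\overline{N^*}$ in the preceding paragraph and then declares the corollary an immediate consequence of \Cref{YM(V)}. You have simply made explicit the step the paper leaves implicit, namely that $C_N\star_m C_{\overline{N^*}}$ is the unit bisection via \Cref{YM(V)}(ii).
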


The results of this subsection are summarized in the following theorem:

\begin{thm}\label{gluingthm}
    Let 
    \[
    V = \bigoplus_{i=1}^{n} U_{i}
    \]
    be a decomposition of a complex $T$-representation $V$ into the sum of one-dimensional representations $U_i$'s with character $\rho_i: T_\C \to \Cx$. For $I,J \subset \{1, 2, \dots, n\}$, we define
    \[
    V_{J,I} = \bigoplus_{i \in I \setminus J} U_i \oplus \bigoplus_{j \in J \setminus I} U_j^*\subset T^*V.
    \]
    Consider 
    \[
    (V_{J,I}^\vee, W_{J,I}) = \left( (\Cx_z)^{I \setminus J} \times (\Cx_{z'})^{J \setminus I}, \quad \sum_{i \in I \setminus J} z_i - \sum_{i \in J \setminus I} z_i' \right)
    \]
    as a 2d mirror of $V_{J,I}$; and the Teleman map
    \[
    \rho_{J,I}^\vee=\prod_{i\in I\setminus J}\check\rho_{i}\prod_{j\in J\setminus I}(\check\rho_j)^{-1}: V_{J,I}^\vee \to \check T_\C.
    \]
    Let $C_{J,I}$ be the complex Lagrangian in $T^*T_\C$ obtained from the graph $\Gamma_{dW_{J,I}}$ of $dW_{J,I}$, as described in \Cref{nonabelian}. In other words, $C_{J,I}$ is obtained by applying the complex Lagrangian correspondence induced by the conormal to the graph of $\check \rho_{J,I}$ to $\Gamma_{dW_{J,I}}$. Then
    \begin{enumerate}
        \item if $\rho_i$ is trivial for some $i \in I \triangle J$, then $\Phi_{\mathcal{L}_m}(C_{J,I},-)$ is an empty Lagrangian correspondence. In fact, we have $C_{J,I} = \varnothing$;
        \item if $\rho_i$ is nontrivial for all $i \in I \triangle J$, then $\Phi_{\mathcal{L}_m}(C_{J,I},-)$ is the graph of a birational map
        \[
        \phi_{J,I}: T^*\check T_\C \to T^*\check T_\C.
        \]
        More precisely, let $f$ be the composition  
\begin{equation*}
\begin{tikzcd}
    \mathfrak{t}_\C\ar[r,"d\rho_{J,I}"]& \mathbb{C}^n\ar[r,"\epsilon"] & \mathbb{C}^n \ar[r,dotted,"\operatorname{id}"]&(\mathbb{C}^\times)^n \ar[r,"\check \rho_{J,I}"]& \check{T}_\C
\end{tikzcd}
\end{equation*}
as in \Cref{YM(W)}, then $\phi_{J,I}$ is defined by
\begin{align*}
\phi_{J,I}: \check{T}_\C \times \mathfrak{t}_\C & \to \check{T}_\C \times \mathfrak{t}_\C \\
(g, h) & \mapsto (f(h)g, h).
\end{align*}
        \end{enumerate}
        Moreover, the $\phi_{J,I}$'s satisfy the cocycle condition:
        \[
        \phi_{K,I} = \phi_{K,J} \circ \phi_{J,I}
        \]
        for $I, J, K \subset \{1, 2, \dots, n\}$ where $\phi_{J,I}$ and $\phi_{K,J}$ are all defined.
\end{thm}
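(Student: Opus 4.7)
The theorem essentially assembles the content of the preceding lemmas and corollary, so my strategy is to organize these pieces and make the logical flow explicit. For part (i), I would observe that $C_{J,I}$ is precisely the complex Lagrangian produced by applying the construction of \Cref{nonabelian} to the data $(M, f, \pi) = (V_{J,I}^\vee, W_{J,I}, \rho_{J,I}^\vee)$. If some $\rho_i$ with $i \in I \triangle J$ is trivial, then the corresponding coordinate direction $z_i$ (or $z_i'$) on $V_{J,I}^\vee$ lies in the kernel of $d\rho_{J,I}^\vee$, while $dW_{J,I}$ has a nonzero component $\pm dz_i$ along that direction; hence the defining condition $dW_{J,I}(p) = (\rho_{J,I}^\vee)^* dh$ admits no solution, and $C_{J,I} = \varnothing$. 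Part (ii) is then a direct application of \Cref{YM(W)} with the characters $\{\rho_i\}_{i \in I \setminus J} \cup \{\rho_j^{-1}\}_{j \in J \setminus I}$ and signs $\epsilon_i = +1$ for $i \in I \setminus J$ and $\epsilon_j = -1$ for $j \in J \setminus I$ (the sign convention coming from \Cref{rem on twisted potential}), which reads off exactly the claimed formula for $\phi_{J,I}$.

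For the cocycle condition, the plan is to derive it entirely from \Cref{YM(V)} and two distinct decompositions of $V_{K,J,I}$. First I would check that whenever $\phi_{J,I}$ and $\phi_{K,J}$ are both defined, every summand of $V_{K,J,I}$ is indexed by an element of $(I \triangle J) \cup (J \triangle K)$, so $V_{K,J,I}$ contains no trivial factors. Combining \Cref{YM(V)}(i) with the decomposition $V_{K,J,I} = V_{K,J} \oplus V_{J,I}$ yields $C_{V_{K,J,I}} = C_{K,J} \star_m C_{J,I}$, while the alternative decomposition $V_{K,J,I} = V_{K,I} \oplus N \oplus \overline{N^*}$ from the excerpt, together with both parts of \Cref{YM(V)}, gives $C_{V_{K,J,I}} = C_{V_{K,I}} \star_m C_N \star_m C_{\overline{N^*}}$. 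Matching the two descriptions of $C_{V_{K,J,I}}$ and translating back into birational maps will yield $\phi_{K,J} \circ \phi_{J,I} = \phi_{K,I}$ provided I can verify the annihilation
\[
C_N \star_m C_{\overline{N^*}} = \{(1, h) : h \in \mathfrak{t}_\C\},
\]
which is the unit element of $\star_m$.

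The only genuinely computational step is this last annihilation, which I expect to be the main but essentially routine hurdle. I would reduce to each one-dimensional summand $U_i \subset N$ with nontrivial character $\rho_i$: \Cref{YM(W)} applied to $M = \Cx$ gives $C_{U_i} = \{(\check\rho_i(d\rho_i(h)), h) : h \in \mathfrak{t}_\C\}$, and \Cref{YM(V)}(ii) identifies $C_{\overline{U_i^*}}$ as its image under $(g,h) \mapsto (g^{-1}, h)$; their $\star_m$-product is thus $\{(1,h)\}$. Multiplicativity of $C_{(-)}$ across direct sums, again from \Cref{YM(V)}(i), propagates this pointwise cancellation across all summands of $N \oplus \overline{N^*}$ and completes the argument.
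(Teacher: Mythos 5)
Your proposal is correct and follows essentially the same route as the paper: part (i) and (ii) are read off from \Cref{YM(W)} (via \Cref{lemmaProduct}, \Cref{exmpC} and \Cref{antisym}), and the cocycle condition comes from comparing the two decompositions $V_{K,J,I}=V_{K,J}\oplus V_{J,I}=V_{K,I}\oplus N\oplus \overline{N^*}$ using \Cref{YM(V)}. The only difference is that you spell out the annihilation $C_N\star_m C_{\overline{N^*}}=\{1\}\times\mathfrak{t}_\C$, which the paper leaves implicit in its corollary.
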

\Cref{gluingthm} says that we may obtain a smooth open subvariety of $X^!$ by gluing copies of $T^*\check{T}_\C$.

\begin{df}\label{dfX!}
\begin{equation*}
{X^{\!}} = \left(\mathcal{U}_I\right)/\sim,
\end{equation*}
where each $\mathcal{U}_I\cong T^*\check T_\C$, and $p_I\in \mathcal{U}_I$ is identified with $p_J\in \mathcal{U}_J$ if $p_J = \phi_{J,I}(p_I)$.
\end{df}

\begin{exmp}
Suppose $T_\C$ acts trivially on $V$, then $X^{\!}$ is the disjoint union of $2^n$ copies of $T^*\check{T}_\C$.
\end{exmp}

In \Cref{Compare Gale}, we will show that $X^{\!}$ is an open dense subset of the Gale dual $X^!$ in the hypertoric case, namely when the action $T\curvearrowright V$ is faithful.

\bigskip 

Next, we define an affine version of the construction in \Cref{dfX!}, and show that the affine variety obtained agrees with the BFN Coulomb branch defined in \cite{BFN}. We write 
\begin{equation*}
V = \bigoplus_{i=1}^{n} U_{i}
\end{equation*}
as before. For each $I \subset \{1,2,\dots,n\}$, we define a field automorphism $\phi_{I}^{\ast}$ of $ \mathbb{C}(T^{\ast }\check{T}_\C)$ given by 
\begin{equation}\label{affinegluing}
\phi_I^*:h \mapsto h; \quad z^{\lambda^{\vee}} \mapsto z^{\lambda^{\vee}} \prod_{i \in I} (-h_{\rho_{i}})^{\langle \rho_{i}, \lambda^{\vee} \rangle},
\end{equation}
where we identify $T^{\ast }\check{T}_\C$ with $\check{T}_\C \times \mathfrak{t}_\C$, $h \in \mathfrak{t}_\C$, and $\lambda^\vee$ is a character of $T^\vee_\C$. As remarked already in \Cref{remformula}, $\phi_I^*$ is the pullback of $\phi_I$ in the function ring level.

\begin{df}
We define $A_{T,V}$ as the subalgebra of $\mathbb{C}(T^*\check T_\C)$ consisting of functions $f$ such that $(\phi_I^*)^{-1}(f) \in \mathbb{C}[T^*\check T_\C]$ for all subsets $I \subset \{1,2,\dots,n\}$.
\end{df}

In other words, $A_{T,V}$ is the intersection of the rings $\phi_I^*\left(\mathbb{C}[T^*\check T_\C]\right)$ for all $I$.

\begin{rem}
\item The formula in (\ref{affinegluing}) remains valid even when some of the $\rho_i$'s are zero.
\end{rem}

For each $\lambda \in \Lambda$, we define 
\begin{equation*}
Z^\lambda = z^\lambda\prod_{\langle \rho_i, \lambda \rangle > 0} h_{\rho_i}^{\langle \rho_i, \lambda \rangle} ,
\end{equation*}
where $\Lambda$ is the weight lattice of $\mathfrak{t}_\C$. The algebra $A_{T,V}$ is generated by these elements $Z^\lambda$'s, which are elements in $\text{Sym}^\bullet(\mathfrak{t}_\C^*)$. The multiplication rule for $Z^\lambda$ is given by 
\begin{equation*}
Z^\lambda Z^\mu = \prod_{i=1}^n h_{\rho_i}^{d(\langle \rho_i, \lambda \rangle, \langle \rho_i, \mu \rangle)} Z^{\lambda + \mu},
\end{equation*}
where 
\begin{equation*}
d(m,n) = 
\begin{cases}
0 & \text{if } m \text{ and } n \text{ have the same sign}, \\ 
\min\{|m|, |n|\} & \text{if } m \text{ and } n \text{ have different signs}.
\end{cases}
\end{equation*}
By a comparison with Theorem 4.1 of \cite{BFN}, one see that $A_{T,V}$ is exactly the coulomb branch algebra defined by \citeauthor{BFN} in the Abelian case.

The following lemma explains when we can obtain $A_{T,V}$ by considering only the charts corresponding to $I=\varnothing$ and $I=\{1,2,\dots,n\}$ (namely corresponding to $V$ and $V^*$).

\begin{lem}\label{lemTel}
\[
A_{T,V}=\C[T^*\check T_\C]\cap \phi_{\{1,\dots,n\}}^*(\C[T^*\check T_\C])
\]
provided that $\rho_i$ is not a negative power of $\rho_j$ for any $i,j\in\{1,\dots,n\}$, unless $\rho_i=\rho_j=1$.
\end{lem}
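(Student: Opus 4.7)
The plan is to work $\check T_\C$-weight by $\check T_\C$-weight. Any $F\in\C[T^*\check T_\C]$ admits a unique Laurent expansion $F=\sum_{\lambda^\vee}f_{\lambda^\vee}(h)z^{\lambda^\vee}$ with $f_{\lambda^\vee}\in\C[\lt_\C]$, and applying $(\phi_I^*)^{-1}$ via the formula \eqref{affinegluing} shows that $F\in\phi_I^*(\C[T^*\check T_\C])$ is equivalent, term by term, to the divisibility
\[
P^+_{I,\lambda^\vee}\bigm|f_{\lambda^\vee}\,Q^-_{I,\lambda^\vee}\quad\text{in }\C[\lt_\C],
\]
where
\[
P^+_{I,\lambda^\vee}=\prod_{\substack{i\in I\\\langle\rho_i,\lambda^\vee\rangle>0}}(-h_{\rho_i})^{\langle\rho_i,\lambda^\vee\rangle},\qquad Q^-_{I,\lambda^\vee}=\prod_{\substack{i\in I\\\langle\rho_i,\lambda^\vee\rangle<0}}(-h_{\rho_i})^{-\langle\rho_i,\lambda^\vee\rangle}.
\]
Taking $I=\emptyset$ recovers $f_{\lambda^\vee}\in\C[\lt_\C]$, and taking $I=\{1,\dots,n\}$ is exactly the second condition on the right-hand side of the lemma. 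So the inclusion $A_{T,V}\subseteq \C[T^*\check T_\C]\cap\phi^*_{\{1,\dots,n\}}(\C[T^*\check T_\C])$ is tautological, and it suffices to prove the reverse containment.

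The crux is the following coprimality claim: under the hypothesis, $P^+_{I_{\max},\lambda^\vee}$ and $Q^-_{I_{\max},\lambda^\vee}$ are coprime in $\C[\lt_\C]$ for every $\lambda^\vee$, where $I_{\max}=\{1,\dots,n\}$. Each is a product of linear forms $h_{\rho_k}$, so coprimality reduces to the statement that whenever $\langle\rho_i,\lambda^\vee\rangle>0$ and $\langle\rho_j,\lambda^\vee\rangle<0$, the linear forms $h_{\rho_i}$ and $h_{\rho_j}$ are not scalar multiples. Proportionality of $h_{\rho_i}$ and $h_{\rho_j}$ is equivalent to rational proportionality of $\rho_i$ and $\rho_j$, and the opposite signs of the pairings with $\lambda^\vee$ force the ratio to be negative; clearing denominators then says that, up to a positive rescaling, one of $\rho_i,\rho_j$ is a negative (integer) power of the other, contradicting the hypothesis. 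This is the only place where the assumption is actually used, and is the step I would check most carefully.

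Granting the coprimality, the rest is automatic. If $F=\sum f_{\lambda^\vee}z^{\lambda^\vee}$ lies in $\C[T^*\check T_\C]\cap\phi^*_{I_{\max}}(\C[T^*\check T_\C])$, then $f_{\lambda^\vee}\in\C[\lt_\C]$ and $P^+_{I_{\max},\lambda^\vee}\mid f_{\lambda^\vee}\,Q^-_{I_{\max},\lambda^\vee}$, so coprimality upgrades this to $P^+_{I_{\max},\lambda^\vee}\mid f_{\lambda^\vee}$. For any $I\subseteq I_{\max}$ one has $P^+_{I,\lambda^\vee}\mid P^+_{I_{\max},\lambda^\vee}\mid f_{\lambda^\vee}$, hence $P^+_{I,\lambda^\vee}\mid f_{\lambda^\vee}\,Q^-_{I,\lambda^\vee}$ holds trivially for every $I$, so $F\in\phi_I^*(\C[T^*\check T_\C])$ for every $I$, i.e., $F\in A_{T,V}$.
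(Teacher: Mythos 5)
Your proof is correct and follows essentially the same route as the paper's: expand in $\check T_\C$-weights, observe that the hypothesis forbids cancellation between the positive- and negative-exponent factors of $\prod_{i}(-h_{\rho_i})^{-\langle\rho_i,\lambda^\vee\rangle}$ (your coprimality claim is exactly the paper's ``no cancellation'' step), and deduce $\prod_{\langle\rho_i,\lambda^\vee\rangle>0}h_{\rho_i}^{\langle\rho_i,\lambda^\vee\rangle}\mid f_{\lambda^\vee}$, hence membership in every chart. The one caveat you rightly flag is shared with the paper itself: the argument really needs that no two differentials $d\rho_i$, $d\rho_j$ are negatively proportional, which is marginally stronger than the literal ``negative power'' wording of the hypothesis.
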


\begin{proof}
    Consider a function $f \in \mathbb{C}[T^*\check{T}_\C]$, which can be expressed as
    \[
    f = \sum_{\lambda} z^{\lambda^\vee} f_\lambda(h),
    \]
    where $f_\lambda(h) \in \text{Sym}^\bullet(\mathfrak{t}_\C^*)$.

    Note that
    \[
    (\phi^*_{\{1,\dots,n\}})^{-1} f = \sum_{\lambda} z^{\lambda^\vee} f_\lambda(h) \prod_{i=1}^n(- h_{\rho_i})^{-\langle \rho_i, \lambda^\vee \rangle}.
    \]
    The assumption ensures that there is no cancellation among the terms in the product. Consequently, we must have
    \[
    f_\lambda(h) \prod_{i=1}^n (-h_{\rho_i})^{-\langle \rho_i, \lambda^\vee \rangle} \in \mathbb{C}[T^*\check{T}_\C],
    \]
    which implies
    \[
    \prod_{\langle \rho_i, \lambda \rangle > 0} h_{\rho_i}^{\langle \rho_i, \lambda \rangle} \mid f_\lambda(h).
    \]
    In otherwords, $f$ is a linear combination of $Z^\lambda$, and hence in $A_{T,V}$. This completes the proof of the lemma.
\end{proof}

In \Cite{tel2021}, Teleman gave a formula similar to \Cref{lemTel} without the assumption on the characters. We can obtain his formula by applying the following trick. Let $T' = T \times S^1$, and extend $V$ to a $T'$-representation where $S^1$ acts on each $U_i$ with weight 1.

It is evident that $V$ satisfies the conditions of \Cref{lemTel} when considered as a $T'$ representation. Additionally, Proposition 3.18 in \cite{BFN} establishes that $A_{T,V}$ is the Hamiltonian reduction of $A_{T',V}$ by $\mathbb{C}^\times$ (where $\check T'_\C = \check T_\C \times \mathbb{C}^\times$). Combining this with \Cref{lemTel}, we obtain the following:

\begin{thm}\label{thm4II}
    Let $t$ be an auxiliary variable, and extend $\phi_I^*$ to an automorphism of $\mathbb{C}(T^*\check{T}_\C)[t]$ that fixes $t$. Define
    \[
    A_{T,V}^t = \mathbb{C}[T^*\check{T}_\C][t] \cap \phi_I^*\left(\mathbb{C}[T^*\check{T}_\C][t]\right).
    \]
    Let $A_t$ be the subring of $\mathbb{C}[T^*\check{T}_\C][t]$ consisting of functions $f=f(z,h,t)\in \mathbb{C}[T^*\check{T}_\C][t]$ such that
    \[
    f\left(z \prod_{i=1}^n (t + h_{\rho_i})^{-\langle \rho_i, \lambda^\vee \rangle}, h, t \right) \in \mathbb{C}[T^*\check{T}_\C][t].
    \]
    Then, $A_{T,V}$ is isomorphic to $A^t_{T,V}/t A^t_{T,V}$.
\end{thm}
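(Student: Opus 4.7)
The plan is to deduce the theorem by applying \Cref{lemTel} to the enlarged torus $T' = T \times \Cx$ (with the auxiliary $\Cx$ acting on each $U_i$ with weight $+1$) and then to recover $A_{T,V}$ from $A_{T',V}$ via Hamiltonian reduction by this auxiliary $\Cx$, using Proposition~3.18 of \cite{BFN}. The role of $A^t_{T,V}$ will emerge as an intermediate object sitting between $A_{T,V}$ and $A_{T',V}$, and the identification of $A^t_{T,V}$ with $A_t$ will follow along the way.

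First I would verify the hypothesis of \Cref{lemTel} for $(T',V)$: the weights of $T'$ on $V$ are $\rho'_i = (\rho_i,1) \in \Lambda_T \oplus \Z$, and since every $\rho'_i$ has $\Cx$-component equal to $+1$, no $\rho'_i$ can be a negative power of any $\rho'_j$. Hence \Cref{lemTel} yields
\[
A_{T',V} \;=\; \C[T^*\check T'_\C]\;\cap\; \phi^*_{\{1,\dots,n\}}\bigl(\C[T^*\check T'_\C]\bigr),
\]
with $\phi^*_{\{1,\dots,n\}}$ computed from \eqref{affinegluing} using the $T'$-characters. Writing $T^*\check T'_\C = T^*\check T_\C \times T^*\Cx$ in coordinates $(z,h,s,t)$, where $t$ is dual to the auxiliary Lie algebra direction, one has $h_{\rho'_i} = h_{\rho_i}+t$. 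Restricting the transformation rule in \eqref{affinegluing} to monomials $z^{\lambda^\vee}$ with $\lambda^\vee$ a character of $\check T_\C$ (trivial on the auxiliary factor), it becomes
\[
z^{\lambda^\vee}\;\longmapsto\; z^{\lambda^\vee}\prod_{i=1}^n\bigl(-(h_{\rho_i}+t)\bigr)^{\langle\rho_i,\lambda^\vee\rangle},
\]
which matches the condition defining $A^t_{T,V}$ (equivalently $A_t$, up to the sign conventions in \eqref{affinegluing}). This identifies $A^t_{T,V}$ with the subalgebra of $A_{T',V}$ consisting of elements invariant under the auxiliary $\Cx$-action on $\check T'_\C$.

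Finally, Proposition~3.18 of \cite{BFN} realizes $A_{T,V}$ as the Hamiltonian reduction of $A_{T',V}$ by the auxiliary $\Cx$. Algebraically this reduction passes to $\Cx$-invariants, yielding $A^t_{T,V}$ by the previous step, and then quotients by the ideal generated by the $\Cx$-moment map, which in these coordinates is $t$; together these two steps give the claimed isomorphism $A_{T,V} \cong A^t_{T,V}/t A^t_{T,V}$. The principal obstacle I expect is bookkeeping: one must carefully match the BFN Hamiltonian reduction of the $T'$-Coulomb branch with the \emph{set $t=0$} operation on the gluing presentation, verifying that the signs in \eqref{affinegluing}, the moment-map level, and the $\Cx$-grading all line up so that the invariants-then-quotient prescription faithfully reproduces the geometric reduction.
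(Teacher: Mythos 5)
Your proposal follows exactly the paper's own route: introduce $T'=T\times\Cx$ acting with weight $1$ on each $U_i$, check the hypothesis of \Cref{lemTel} for $(T',V)$, and then invoke Proposition~3.18 of \cite{BFN} to realize $A_{T,V}$ as the Hamiltonian reduction of $A_{T',V}$ by the auxiliary $\Cx$, which algebraically is invariants followed by setting $t=0$. The only difference is that you spell out more of the bookkeeping (e.g.\ $h_{\rho'_i}=h_{\rho_i}+t$ and the identification of $A^t_{T,V}$ with the $\Cx$-invariants) than the paper, which states these steps without elaboration.
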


This is essentially the Theorem 1 in \cite{tel2021} for the Abelian gauge theory.

\begin{rem}\label{nonrem}
    Suppose $G$ is a compact group with a maximal torus $T \subset G$, and let $V$ be a representation of $G$. Suppose we fix a $T$-decomposition $V = \oplus_i U_i$ as before. As explained in \Cref{BFM}, if $V_I$ is indeed a $G$-representation, then $\phi_I^*$ induces a map $\mathbb{C}(\text{BFM}(\check G)) \to \mathbb{C}(\text{BFM}(\check G))$, which provides a gluing map in the non-Abelian setting.

    In \cite{tel2021}, Teleman shows that the nonabelian version of \Cref{thm4II} remains valid.
\end{rem}

\appendix

\section{Comparison with Gale duality}\label{Compare Gale}
We assume the action of $T$ on $V\cong (\Cx)^n$ is faithful, i.e., we have a short exact sequence
\[1\to T\xrightarrow{\rho} T^n\xrightarrow{\eta} T'\to 1.\]
Dualizing the above sequence, we get
\[1\to \check T'\xrightarrow{\check \eta} \check T^n\xrightarrow{\check \rho} \check T\to 1.\]
We can regard $V^*\cong \C^n$ as a representation of $\check T'$. The hypertoric stack $X=[T^*V\sslash T_\C]$ is 3d mirror to $X^!=[T^*V^*\sslash \check T'_\C]$. In this subsection, we will show that $X^{\!}$ is a substack of $X^!$ of the same dimension.

Let $a_1,a_2,\dots,a_n$ be coordinates of $V$, and $b_1,b_2,\dots,b_n$ be coordiniates of $V^*$ so that the natural pairing $V\times V^*\to \C$ is given by $(a,b)\mapsto \sum a_ib_i$. We write $V=\C^n_a$ and $V^*=\C^n_b$.

We consider the complex symplectic form
\[\omega_\C=\sum da_idb_i\]
on $T^*V^*=\C^n_b\times \C^n_a$.

For each subset $I$ of $\{1,2\dots,n\}$, we will define an open embedding $\kappa_I: (\Cx_z)^n\times \C^n_\zeta\cong T^*(\Cx_z)^n\to T^*V^*$ by the relation
\[\zeta_i=a_ib_i\]
and
\[z_i=
\begin{cases}
        a_i\text{ if }i \in I\\
	b_i^{-1},\text{ if }i\not\in I.
\end{cases}\]
We have
\[\kappa_I^*(\omega_\C)=\sum_i\frac{dz_i}{z_i}d\zeta_i, \]
which means $\kappa_I$ gives a symplectomorphism from $T^*(\Cx_z)^n$ onto an open subset $ Z_I\vcentcolon=\im \kappa_I$ of $T^*V^*$. 
\begin{prop}
    $X^{\!}$ is isomorphic to the symplectic reduction by $\check T'_\C$ of the union of $Z_I$ for all $I\subset \{1,2\dots,n\}$.
\end{prop}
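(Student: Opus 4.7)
The plan is to verify, chart by chart, that the gluing construction of $X^{\!}$ from \Cref{dfX!} is realized inside $T^*V^*$ by the open cover $\{Z_I\}$. Concretely, I will show (i) each $Z_I$ is $\check T'_\C$-invariant and its symplectic reduction is canonically $\mathcal{U}_I \cong T^*\check T_\C$, (ii) on $Z_I\cap Z_J$ the change of chart $\kappa_J^{-1}\circ\kappa_I$ descends to $\phi_{J,I}$, and (iii) $\cup_I Z_I$ is open dense in $T^*V^*$. Together these identify $(\cup_I Z_I)\sslash\check T'_\C$ with $X^{\!}$.

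The first two steps are largely a book-keeping exercise. The $\check T'_\C$-action on $T^*V^*$ factors through the tautological $\check T^n_\C$-action whose complex moment map is $(a,b)\mapsto(a_ib_i)_i\in(\check\lt^n_\C)^*$. Dualizing the two short exact sequences at the start of the appendix identifies the kernel of $(\check\lt^n_\C)^* \to (\check\lt'_\C)^*$ with $\im(d\rho)\subset \lt^n_\C$, so pulling back via $\kappa_I$ the condition $\mu_\C=0$ becomes $\zeta=d\rho(h)$ for some $h\in\lt_\C$. A direct inspection shows that, independently of $I$, the $\check T'_\C$-action on $(\Cx_z)^n$ is the restriction of the inverse of the standard $\check T^n_\C$-action: for $i\in I$, $z_i=a_i$ has weight $-1$, while for $i\notin I$, $z_i=b_i^{-1}$ also has weight $-1$. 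Using $1\to \check T'\to\check T^n\to\check T\to 1$, the quotient is $\check T_\C$, and together with the $\lt_\C$ factor in the cotangent direction we obtain $Z_I\sslash\check T'_\C\cong\check T_\C\times\lt_\C=\mathcal{U}_I$.

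The third step is the main obstacle and requires careful sign tracking. From the defining formulas of $\kappa_I$ and $\kappa_J$, one computes that on $Z_I\cap Z_J$ the cotangent coordinate $\zeta$ is unchanged, while $z^J_i/z^I_i$ equals $\zeta_i^{-1}$ for $i\in I\setminus J$, $\zeta_i$ for $i\in J\setminus I$, and $1$ otherwise. Substituting $\zeta_i=d\rho_i(h)$ and pushing the correction through $\check\rho: \check T^n_\C\to\check T_\C$, the induced transition on $\mathcal{U}_I\to\mathcal{U}_J$ takes the form $(g,h)\mapsto (f(h)g,h)$ with $f(h)$ a product of $\check\rho_i(\rho_i(h))^{\pm 1}$ over $i\in I\triangle J$. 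The delicate point is to match these exponents with those appearing in $\phi_{J,I}$ of \Cref{gluingthm}: one must verify that the sign choice $\epsilon_i=+1$ for $i\in I\setminus J$ and $\epsilon_i=-1$ for $i\in J\setminus I$ in the superpotential $W_{J,I}$, combined with the anti-symplectic nature of the identification $T^*V\cong T^*V_{I}$ noted in \Cref{rem on twisted potential}, reproduces exactly the correction factors coming from $\kappa_J^{-1}\circ\kappa_I$. This is a direct but careful calculation.

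Finally, $\cup_I Z_I$ has complement $\{(a,b):a_i=b_i=0\text{ for some }i\}\subset T^*V^*$, which is a union of codimension-two subvarieties, hence open and dense. Combining the canonical chart identifications of step two with the matching of transitions in step three realizes $(\cup_I Z_I)\sslash\check T'_\C$ as an open dense substack of $X^!$ canonically isomorphic to the glued space $X^{\!}=\sqcup_I \mathcal{U}_I/{\sim}$ of \Cref{dfX!}.
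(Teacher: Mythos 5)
Your proof follows essentially the same route as the paper's: reduce each chart $Z_I$ to $\mathcal{U}_I\cong T^*\check T_\C$, compute the transitions $\kappa_J^{-1}\circ\kappa_I$ (your formulas for $\zeta$ and for $z^J_i/z^I_i$ agree with the paper's $\zeta_{J,I}$), and match them with the gluing maps $\phi_{J,I}$. The extra material in your steps (i) and (iii) — the moment-map/residual-torus bookkeeping identifying $Z_I\sslash\check T'_\C$ with $\mathcal{U}_I$, and the density of $\cup_I Z_I$ — is correct and is taken for granted in the paper. The one caveat is that the step you label ``a direct but careful calculation'' is exactly the content of the paper's proof: it verifies that $\zeta\mapsto\zeta_{J,I}$ coincides with the composition $\lt_\C\xrightarrow{d(\gamma_{J,I}\circ\rho)}\C^{I\triangle J}\xrightarrow{\epsilon}\C^{I\triangle J}\dashrightarrow(\Cx)^{I\triangle J}\xrightarrow{\check\rho\circ\check\gamma_{J,I}}\check T_\C$ defining $\phi_{J,I}$ in \Cref{YM(W)}, using that the inversion $\gamma_{J,I}$ on the $I\setminus J$ factors combines with $\epsilon$ to reproduce the $\zeta_i^{-1}$ (resp.\ $\zeta_i$) corrections. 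Since you have correctly set up both sides and this final identification is a finite sign check, I would not call it a gap, but it is the decisive computation and should be written out rather than asserted.
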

\begin{proof}
Let $I$ and $J$ be two subsets of $\{1,2,\dots,n\}$, it is easy to check that
\[\kappa_J^{-1}\kappa_I(\zeta_i)=\zeta_i,\]
and
\[\kappa_J^{-1}\kappa_I(z_i)=
\begin{cases}
	\zeta_i^{-1}z_i,\text{ if }i\in I\setminus J,\\
	\zeta_iz_i\text{ if }i \in J\setminus I,\\
	z_i\text{ otherwise}.
\end{cases}\]

We denote $\iota_I$ the composition
\[T^*\check T_\C\cong [Z_I\sslash \check T'_\C]\xrightarrow{[\kappa_I\sslash \check T'_\C]} [T^*V^*\sslash \check T"_\C].\]
We have
\[\iota_J^{-1}\iota_I(z,\zeta)=(\check\rho(\zeta_{J,I})z,\zeta),\]
where we regard $\zeta\in\lt_\C\subset \C^n$, and $\zeta_{J,I}\in \C^n$ is such that
\[(\zeta_{J,I})_i=\begin{cases}
	\zeta_i^{-1}&\text{ if }i\in I\setminus J\\
	\zeta_i&\text{ if }i\in J\setminus I\\
	1&\text{ otherwise}.
\end{cases}\]
Denote by $I\triangle J=I\setminus J\cup J\setminus I$ the symmetric difference of $I$ and $J$. Let $\gamma_{J,I}:(\Cx)^n\to (\Cx)^{I\triangle J}$ be the homomorphism given by
\[\gamma_{J,I}(z)_i=\begin{cases}
	z_i^{-1}&\text{ if }i\in I\setminus J\\
	z_i&\text{ if }i\in J\setminus I.
\end{cases}\]
The map $\zeta\mapsto \zeta_{J,I}$ is equal to the composition 
\[\lt_\C\xrightarrow{ d(\gamma_{J,I}\circ\rho)}\C^{I\triangle J}\xrightarrow{\epsilon}  \C^{I\triangle J}\dashrightarrow (\Cx)^{I\triangle J}\xrightarrow{\check \rho\circ \check {\gamma}_{J,I} }\check T_\C,\]
where $\epsilon$ is identity on the factors corresponding to $I\setminus J$, and negative of the identity on the factors corresponding to $J\setminus I$. The is precisely the definition of $\phi_{J,I}$ in \Cref{YM(W)}, this proves the proposition.
\end{proof}

\begin{exmp}
	Consider the case when 
	\[T=\{(a,a,\dots,a)\in T^n:a\in S^1\}.\]
	A coordinate for $\check T_\C$ (which is a quotient of $(\Cx_b)^n$) would be $z=\prod_{i=1}^n b_i$. On the other hand, we have
	\[\mu_{\check T'_\C}^{-1}(0)=\{(b,a):b_ia_i=b_ja_j\  \forall\  i,j\}\]
	Hence we can use $\zeta=\zeta_1=\zeta_2=\cdots=\zeta_n$ as the coordinate for the cotangent fibre. The change of coordinate is now given as
	\[(\zeta^{|J|-|I|}\cdot z,\zeta)_I\sim (z,\zeta)_J.\]
\end{exmp}

\begin{exmp}
If $T$ is the trivial group, then $X^!=[T^*\C^n\sslash (\Cx)^n]$. In this case 
\begin{align*}
	\mathcal{U_I}&=[Z_I\sslash (\Cx)^n]\\
	&=\{(b,a)\in \C^n_b\times \C^n_a: b_i=0\iff i \in I\text{ and }a_i=0\iff i \not \in I\}/(\Cx)^n \\
	&\cong \pt.
\end{align*}
So $X^{\!}$ contains a disjoint union of $2^n$ points.
\end{exmp}

\printbibliography

\end{document}